\documentclass[a4paper,UKenglish]{lipics-v2016}

\usepackage{microtype}\usepackage{booktabs}
\usepackage{xspace}
\usepackage{tikz}
\usetikzlibrary{arrows,shapes,decorations,automata,backgrounds,fit,matrix, decorations.pathreplacing,positioning, quotes, math}
\usepackage{thm-restate}
\usepackage{stmaryrd} 
\usepackage[capitalise]{cleveref} \usepackage{paralist}

\bibliographystyle{plainurl}
\title{Model-checking Counting Temporal Logics on Flat Structures\footnote{\scriptsize Partially supported by EGIDE/DAAD-Procope (FREQS) and European Commission H2020 Project COEMS.
This is an extended version of \cite{DBLP:conf/concur/DeckerHLST17} providing additional \cref{apx:completeness,apx:correctness}.}}

\author[1]{Normann Decker}
\author[2]{Peter Habermehl}
\author[1]{Martin Leucker}
\author[2]{Arnaud Sangnier}
\author[1]{Daniel Thoma}
\affil[1]{ISP, University of Lübeck, Lübeck, Germany\\
  \texttt{\{decker, leucker, thoma\}@isp.uni-luebeck.de}}
\affil[2]{IRIF, Univ.\ Paris Diderot, Paris, France\\
  \texttt{\{habermehl, sangnier\}@irif.fr}\\}

\authorrunning{N. Decker, P. Habermehl, M. Leucker, A. Sangnier, D. Thoma}

\Copyright{Normann Decker, Peter Habermehl, Martin Leucker, Arnaud Sangnier and Daniel Thoma}

\subjclass{D.2.4 Software/Program Verification}
\keywords{Counting Temporal Logic, Model checking, Flat Kripke Structure}

\EventEditors{}
\EventNoEds{0}
\EventLongTitle{}
\EventShortTitle{}
\EventAcronym{}
\EventYear{}
\EventDate{}
\EventLocation{}
\EventLogo{}
\SeriesVolume{}
\ArticleNo{~}

\hypersetup{
  pdftitle = {
    Model-checking Counting Temporal Logics on Flat Structures
  },
  pdfauthor = {
    Normann Decker,
    Peter Habermehl,
    Martin Leucker,
    Arnaud Sangnier,
    Daniel Thoma
  }
}

\newcommand{\lipicsbf}[1]{\textcolor{darkgray}{\textsf{\textbf{#1}}}}

\newcommand{\nat}{\mathbb{N}}

\newcommand{\interval}[2]{[#1,#2]}

\newcommand{\set}[1]{\{ #1 \}}
\newcommand{\card}[1]{|#1|}

\newcommand{\vect}[1]{\mathbf{#1}}
\renewcommand{\vec}[1]{\vect{#1}}

\newcommand{\jhat}{\hat{\jmath}}

\newcommand{\length}[1]{|#1|}

\newcommand{\States}{S}
\newcommand{\state}{s}
\newcommand{\Edges}{E}

\newcommand{\KS}{\texttt{KS}\xspace}
\newcommand{\FKS}{\texttt{FKS}\xspace}

\newcommand{\CS}{\texttt{CS}\xspace}

\newcommand{\run}{\rho}
\newcommand{\Runs}{\mathtt{Runs}}

\newcommand{\CTLstar}{\texttt{CTL\textsuperscript{\!*}}\xspace}
\newcommand{\CTL}{\texttt{CTL}\xspace}
\newcommand{\CCTL}{\texttt{CCTL}\xspace}
\newcommand{\CC}{\texttt{$_\mathtt{c}$CTL}\xspace}
\newcommand{\CCpm}{\texttt{$_\mathtt{c}$CTL$_\pm$}\xspace}
\newcommand{\CCstarpm}{\ensuremath{_{\texttt{c}}\texttt{CTL}^{\texttt{*}}_{\pm}}\xspace} 
\newcommand{\CL}{\texttt{$_\mathtt{c}$LTL}\xspace}
\newcommand{\CLpm}{\texttt{$_\mathtt{c}$LTL$_\pm$}\xspace}
\newcommand{\LTL}{\texttt{LTL}\xspace}
\newcommand{\FLTL}{\texttt{fLTL}\xspace}
\newcommand{\fLTL}{\FLTL}
\newcommand{\fCTL}{\texttt{fCTL}\xspace}
\newcommand{\fCTLstar}{\texttt{fCTL\textsuperscript{\!*}}\xspace}
\newcommand{\CCTLstar}{\texttt{CCTL\textsuperscript{\!*}}\xspace}
\newcommand{\CLTL}{\texttt{CLTL}\xspace}

\newcommand{\RAE}{\texttt{RAE}\xspace}

\newcommand{\U}{\ensuremath{\operatorname{\mathtt{U}}}\xspace}

\newcommand{\X}{\ensuremath{\operatorname{\mathtt{X}}}\xspace}
\newcommand{\G}{\ensuremath{\operatorname{\mathtt{G}}}\xspace}
\newcommand{\F}{\ensuremath{\operatorname{\mathtt{F}}}\xspace}
\newcommand{\E}{\ensuremath{\operatorname{\mathtt{E}}}\xspace}
\newcommand{\A}{\ensuremath{\operatorname{\mathtt{A}}}\xspace}

\newcommand{\true}{\top}
\newcommand{\false}{\bot}

\newcommand{\defeq}{\overset{\textsf{def}}{\Leftrightarrow}}
\newcommand{\define}[1]{\overset{\textsf{def}}{#1}}

\newcommand{\MC}[1]{\texttt{MC}(#1)}

\newcommand{\PresHartig}{\mathtt{PH}}
\newcommand{\PH}{\texttt{PH}\xspace}

\newcommand{\Guards}[1]{\mathfrak{G}(#1)}
\newcommand{\sub}{\mathtt{sub}}

\renewcommand{\path}{P}

\renewcommand{\st}{\mathtt{st}}
\newcommand{\labels}{\mathtt{lab}}
\newcommand{\guards}{\mathtt{g}}
\newcommand{\update}{\mathtt{u}}
\newcommand{\type}{\mathtt{t}}

\newcommand{\bal}{\mathtt{bal}}
\newcommand{\loc}{\mathtt{loc}}
\newcommand{\comp}{\mathtt{comp}}
\newcommand{\succpos}{\mathtt{succ}}
\newcommand{\row}{\mathtt{row}}
\newcommand{\chk}{\mathtt{chk}}

\tikzset{
  >=stealth',
  initial text =,
  node distance = 5em,
  tiny nodes/.style = {
     every state/.style = {
       inner sep = 0pt,
       minimum size = 1ex,
       initial distance = 2ex,
       node distance = 5.5ex
     },
    font=\small,
    every edge/.style = {draw, shorten >=1pt, font=\tiny},
    every label/.style = {font=\tiny}
  },
  small nodes/.style = {
     every state/.style = {
       inner sep = 0pt,
       minimum size = 3.5ex,
       initial distance = 3ex,
       node distance = 3.8em
     },
    font=\small,
    every edge/.style = {
      draw,
      shorten >=1pt,
      font=\tiny,
    },
  },
  loop above left/.style = {
    out=75, in=105, loop
  }
}

\declaretheorem[sibling=theorem, style=plain]{Lemma}
\declaretheorem[sibling=theorem, style=plain]{Theorem}
\declaretheorem[sibling=theorem, style=plain]{Definition}
\declaretheorem[sibling=theorem, style=plain]{Corollary}
\declaretheorem[sibling=theorem, style=plain]{Example}

\makeatletter
\renewcommand\subparagraph{\@startsection{subparagraph}{5}{\z@}                                       {1.25ex \@plus1ex \@minus .2ex}                                       {-1em}                                      {\sffamily\normalsize\bfseries}}
\makeatother

\begin{document}

\maketitle

\begin{abstract}
We study several extensions of linear-time and computation-tree
temporal logics with quantifiers that allow for counting how often certain
properties hold.
For most of these extensions, the model-checking problem is undecidable, but we show that decidability can be recovered by considering flat Kripke structures where each state belongs to at most one simple loop.
Most decision procedures are based on results on (flat) counter systems where counters are used to implement the evaluation of counting operators.
\end{abstract}

\section{Introduction}

Model checking \cite{DBLP:journals/cacm/ClarkeES09} is a method to verify automatically the correct
behaviour of systems.
It takes as input a model of the system to be verified and a logical formula encoding the specification and checks whether the behaviour of the model satisfies the
formula. One key aspect of this method is to find the appropriate balance
between expressiveness of models and logical formalisms and
efficiency of the model-checking algorithms.  If the model is too
expressive, e.g.\ Turing machines, then the model-checking problem,
even with very simple logical formalisms, becomes undecidable.  On the
other hand, some expressive logics have been proposed in order to
reason on the temporal executions of simple models such as Kripke
structures.
This is the case for the linear temporal logic \LTL \cite{DBLP:conf/focs/Pnueli77}
and the branching-time temporal logics \CTL~\cite{DBLP:conf/lop/ClarkeE81} and \CTLstar~\cite{DBLP:conf/popl/EmersonH83}, for which the model-checking problem
has been shown to be \textsc{PSpace}-complete, contained in \textsc{P} and \textsc{PSpace}-complete, respectively (see, e.g., \cite{DBLP:books/daglib/0020348}).

 Even though  these  logical formalisms
allow for stating classical properties like safety or liveness over
executions of Kripke structures, their expressiveness is limited. In particular they cannot describe quantitative
aspects, as for instance the fact that a property has been
true twice as often as another along an execution.
One approach to solve this issue is to
extend the logic with some ability to \emph{count} positions of an execution satisfying some property
and to check constraints over such numbers at some positions.
Such a counting extension is proposed in
 \cite{DBLP:journals/corr/abs-1211-4651} for \CTL leading to a logic
   denoted here as \CC.
 This formalism can state properties such as an event $p$ will eventually occur and before that, the number of events $q$ is larger than two.
 The authors propose further an extension called (here) $\CCpm$ that admits diagonal comparisons (i.e., negative and positive coefficients) to state, for instance that the number of events $b$ is greater than the number of
 events $c$.
 It is shown that the model-checking problem for $\CC$ is decidable in polynomial time and that the satisfiability problem for $\CCpm$
 is undecidable.
 A similar extension for \LTL is considered in \cite{DBLP:conf/time/LaroussinieMP10} where it is proven that model checking of \CL is \textsc{ExpSpace}-complete while that of $\CLpm$
 is undecidable.

Following the same motivation, \emph{regular availability expressions (\RAE)} were introduced in \cite{DBLP:conf/concur/HoenickeMO10} extending regular expressions by a mechanism to express that on a (sub-)word matching an expression specific letters occur with a given relative frequency.
Unfortunately, emptiness of the intersection of two such expressions was shown undecidable.
Even for single expressions only a non-elementary procedure is known for verification (inclusion in regular languages) and deciding emptiness \cite{DBLP:conf/fsttcs/AbdullaAMS15}.
The case is similar for the logic \fLTL \cite{DBLP:conf/tase/BolligDL12}, a variant of \LTL that features an until operator extended by a frequency constraint.
The operator is intended to relax the classical semantics where $\varphi \U \psi$ requires $\varphi$ to hold at all positions before $\psi$.
For example, the \fLTL formula $p\U^\frac{1}{3} q$ states that $q$ holds eventually and before that the proportion of positions satisfying $p$ should be at least one third.
The concept of relative frequencies embeds naturally into the context of counting logics as it can be understood as a restricted form of counting.
In fact, \fLTL can be considered as a fragment of $\CLpm$ and still has an undecidable satisfiability problem~\cite{DBLP:conf/tase/BolligDL12} implying the same for model-checking Kripke structures.
Moreover, most techniques employed for obtaining results on \RAE as well as \fLTL involve variants of counter systems.

Looking at the model-checking problem from the model point of view, recent work has shown that restrictions can be imposed on Kripke structures to obtain better
complexity bounds.
As a matter of fact if the structure is \emph{flat} (or weak), which means every state belongs to at most one simple cycle in the graph underlying the structure, then the model-checking problem for \LTL becomes \textsc{NP}-complete \cite{DBLP:conf/concur/KuhtzF11}.
Such a restriction has as well been successfully applied to more complex classes of models.
It is well known that the reachability problem for two-counter systems is undecidable \cite{minsky-computation-67} whereas for flat systems the problem is decidable for any number of counters~\cite{DBLP:conf/fsttcs/FinkelL02}, even more, model checking of \LTL is \textsc{NP}-complete~\cite{DBLP:journals/iandc/DemriDS15}.
Flat structures are not only interesting because of their algorithmic properties, but also because they can be used as a way to under-approximate the behaviour of non-flat systems.
For instance for counter systems one gets a semi-decision procedure for the reachability problem which consists in enumerating flat sub-systems and testing for reachability.
In simple words, flat structures can be understood as an extension of paths typically used in bounded model checking and we expect that bounded model checking using flat structures rather than paths improves practical model checking approaches.

\subparagraph{Contributions.}

We consider the model-checking problem for a counting logic that we call $\CCTLstar$ where we use variables to mark positions on a run from where we begin to count the number of times a subformula is satisfied.
Such a way of counting was also introduced in~\cite{DBLP:journals/corr/abs-1211-4651}, see \cref{ssc:counting-logics} for a comparison.
We study as well its fragments \fCTL, \fLTL and \fCTLstar where the explicit counting mechanism is replaced by a generalized version of the until operator capable of expressing frequency constraints.

First we prove that \fCTL model checking is at most exponential in the formula size and polynomial in the structure size by using an algorithm similar to the one for \CTL model checking.
To deal with frequency constraints a counter is employed for tracking the number of times a subformula is satisfied in a run of a Kripke structure.
We then show that for flat Kripke structures the model-checking problems of $\fLTL$ and $\CCTLstar$ are decidable.
For the former, our method is a guess and check procedure based on the existence of a flat counter system as witness of a run of the Kripke structure satisfying the \fLTL formula.
For the latter, we use a technique which consists in encoding the run of a flat Kripke
structure into a Presburger arithmetic formula and then we show that model checking of $\CCTLstar$ can be translated into the satisfiability problem of a decidable extension of Presburger arithmetic, called \PH, featuring a counting quantifier known as Härtig quantifier.
We hence provide new decidability results for $\CCTLstar$ which in practice could be used as an under-approximation approach to the general model-checking problem.
We furthermore relate an extension of Presburger arithmetic, for which the complexity of the
satisfiability problem is open, to a concrete model-checking problem.
In summary, for model checking different fragments of $\CCTLstar$ on Kripke structures (\KS) or flat Kripke structures (\FKS) we obtain the picture shown in \cref{tab:results} where bold entries are our novel results.
\begin{table}[tb]
\begin{center}
\begin{tabular}{@{} c @{\hspace{2ex}} c @{\hspace{2ex}} c @{\hspace{2ex}} c @{\hspace{2ex}} c @{\hspace{2ex}} c @{\hspace{2ex}} c @{\hspace{2ex}} c @{\hspace{2ex}} c @{\hspace{2ex}} c @{}}
\toprule &
\CTL & \LTL &\CTLstar & $\fLTL$ & $\fCTL$ & $\fCTLstar$     & $\CLTL$ & $\CCTL$ & $\CCTLstar$ \\

\midrule $\KS$ &
\textsc{P} & \textsc{PSpace}-c. & \textsc{PSpace}-c. & undec. \cite{DBLP:conf/tase/BolligDL12} & \textbf{\textsc{Exp}} & undec. & undec. & undec.  \cite{DBLP:journals/corr/abs-1211-4651} & undec.  \\

$\FKS$ &
\textsc{P} & \textsc{NP}-c. \cite{DBLP:conf/concur/KuhtzF11} & \textsc{PSpace} &  \textbf{\textsc{NExp}} & \textbf{\textsc{Exp}} & \textbf{\textsc{ExpSpace}} & \textbf{\textsc{PH}}  & \textbf{\textsc{PH}} & \textbf{\textsc{PH}} \\

\bottomrule
\end{tabular}
\caption{
  Complexity characterisation of the model-checking problems of fragments of $\CCTLstar$. \textsc{PH} indicates polynomial reducibility to the (decidable) satisfiability problem of \PH.
      \label{tab:results}
}
\end{center}\vspace{-5ex}
\end{table}

\section{Definitions}

\subsection{Preliminaries}

We write $\mathbb{N}$ and $\mathbb{Z}$ to denote the sets of natural numbers (including zero) and integers, respectively, and $\interval{i}{j}$ for $\set{k \in \mathbb{Z} \mid i \le k \le j}$.
We consider integers encoded with a binary representation.
For a finite alphabet $\Sigma$, $\Sigma^*$ represents the set of finite words over $\Sigma$,
$\Sigma^+$ the set of finite non-empty words over $\Sigma$ and $\Sigma^\omega$ the set of
infinite words over $\Sigma$.
  For a finite set $E$ of elements, $\card{E}$ represents its cardinality.
For (finite or infinite) words and general sequences $u=a_0a_1…a_k…$ of length at least $k+1>0$ we denote by $u(k)=a_k$ the $(k+1)$-th element and refer to its indices $0,1,…$ as positions on $u$.
If $u$ is finite then $\length{u}$ denotes its length.
For arbitrary functions $f: A \to B$ and elements $a\in A,b\in B$ we denote by $f[a\mapsto b]$ the function $f'$ that is equal to $f$ except that $f'(a)=b$.
We write $\mathbf{0}$ and $\mathbf{1}$ for the functions $f_0:A\to\lbrace0\rbrace$ and $f_1:A\to\lbrace1\rbrace$, respectively, if the domain $A$ is understood. By $B^A$ for sets $A$ and $B$ we denote the set of all functions from $A$ to $B$.

\subparagraph{Kripke structures.}
Let $AP$ be a finite set of \emph{atomic propositions}.
A \emph{Kripke structure} is a tuple $\mathcal{K}=(S,s_I, E, \lambda)$ where $S$ is a finite set of control states, $s_I\in S$ the initial control state, $E \subseteq S \times S$ the set of edges and $\lambda: S \mapsto 2^{AP}$ the labelling function.
A finite \emph{path} in $\mathcal{K}$ is a sequence $u=\state_0 \state_1 \ldots \state_k \in S^+$ with $(\state_i,\state_{i+1}) \in \Edges$ for all $i \in \interval{0}{k-1}$.
Infinite paths are defined analogously.
A \emph{run} $\run$ of $\mathcal{K}$ is an infinite path with $\run(0)=s_I$.
We denote by $\Runs(\mathcal{K})$ the set of runs of $\mathcal{K}$.
Due to the single initial state, we assume without loss of generality that the graph of $\mathcal{K}$ is connected, i.e.\ all states are reachable.
A \emph{simple loop} in $\mathcal{K}$ is a finite path $u=\state_0 \state_1 \ldots \state_k$ such that $i\neq j$ implies $\state_i \neq \state_j$ for all $i,j \in \interval{0}{k}$ and $(\state_k,\state_0)\in E$.
A Kripke structure $\mathcal{K}$ is called \emph{flat} if for each state $\state \in \States$ there is at most one simple loop $u$ in $\mathcal{K}$ with $u(0)=\state$.
See \cref{fig:fks} for an example.
The classes of all Kripke structures and all flat Kripke structures are denoted \KS and \FKS, respectively.

\subparagraph{Counter systems.}

Our proofs use systems with integer counters and simple guards.
A \emph{counter system} is a tuple $\mathcal{S}=(S,s_I,C,\Delta)$ where $S$ is a finite set of control states, $s_I\in S$ is the initial state, $C$ is a finite set of counter names and $\Delta \subseteq S \times \mathbb{Z}^C \times 2^{\Guards{C}} \times S$ is the transition relation where $\Guards{C}=\lbrace(c<0),(c\ge0)\mid c\in C\rbrace$.
An infinite sequence $s_0s_1…\in S^\omega$ of states starting in $s_0=s_I$ is called a \emph{run} of $\mathcal{S}$ if there is a sequence $\theta_0\theta_1…\in(\mathbb{Z}^C)^\omega$ of valuation functions $\theta_i:C\to\mathbb{Z}$ with $\theta_0=\mathbf{0}$ and a transition $(s_i, \vec{u}_i, G_i, s_{i+1})\in\Delta$ for every $i\in\mathbb{N}$ such that $\theta_{i+1} = \theta_i+\vec{u}_i$
(defined point-wise as usual), $\theta_{i+1}(c)<0$ if $(c<0)\in G_i$ and $\theta_{i+1}(c)\ge0$ if $(c\ge0)\in G_i$ for all $c\in C$.
Again, we denote by $\Runs(\mathcal{S})$ the set of all such runs and assume the graph of control states underlying $\mathcal{S}$ is connected.

\subsection{Temporal Logics with Counting}
\label{ssc:counting-logics}

We now introduce the different formalisms we use in this work as specification language. The most general one is the branching-time logic $ \CCTLstar $
which extends the branching-time logic $\CTLstar $ (see e.g.~\cite{DBLP:books/daglib/0020348})
with the following features: it has operators that allow for counting
along a run the number of times a formula is satisfied
and which stores the result into a variable.
The counting starts when the associated variable is “placed” on the run.
These variables may be shadowed by nested quantification, similar to the semantics of the freeze quantifier in linear temporal logic \cite{DBLP:journals/tocl/DemriL09}.

Let $V$ be a set of \emph{variables} and $AP$ a set of atomic propositions.
The syntax of \CCTLstar formulae $\varphi$ over $V$ and $AP$ is given by the grammar rules
\[
  \varphi ::= p ~\mid~ \varphi \land \varphi ~\mid~ \neg\varphi  ~\mid~ \X \varphi ~\mid~ \varphi \U \varphi ~\mid~ \E \varphi  ~\mid~ x.\varphi ~\mid~ \tau\le\tau
  \qquad
  \tau ::= a\mid a\cdot\#_x(\varphi)\mid\tau+\tau
  \]
for $p \in AP$, $x\in V$ and $a\in \mathbb{Z}$.
Common abbreviations such as $\true \equiv p\lor\neg p$, $\false \equiv \neg\true$, $\F\varphi \equiv \true \U \varphi$, $\G\varphi \equiv \neg\F\neg\varphi$ and $\A \varphi \equiv \neg\E\neg\varphi$ may also be used.
The set of all \emph{subformulae} of a formula $\varphi$ (including itself) is denoted $\sub(\varphi)$ and $|\varphi|$ denotes the length of $\varphi$, with binary encoding of numbers.

\subparagraph{Semantics.}

Intuitively, a variable $x$ is used to mark some position on the concerned run.
Within the scope of $x$ a term $\#_{x}(\varphi)$ refers to the number of times the formula $\varphi$ holds between the current position and that marked by $x$.
The semantics of \CCTLstar is hence defined with respect to a Kripke structure $\mathcal{K}=(S,s_I,E,\lambda)$, a run $\rho\in\Runs(\mathcal{K})$, a position $i\in\mathbb{N}$ on $\rho$ and a valuation function $\theta: V \to \mathbb{N}$ assigning a position (index) on $\rho$ to each variable.
The satisfaction relation $\models$ is defined inductively for $p\in AP$, formulae $\varphi,\psi$ and terms $\tau_1,\tau_2$ by
\[\begin{array}{lcl}
 (\rho,i,\theta) \models p     & \defeq & p\in\lambda(\rho(i)),     \\
 (\rho,i,\theta) \models \X \varphi  & \defeq & (\rho,i+1,\theta) \models \varphi,  \\
 (\rho,i,\theta) \models \varphi \U \psi  & \defeq & \exists k\ge i:(\rho,k,\theta) \models \psi \text{ and } \forall j\in[i,k-1]: (\rho,j,\theta) \models\varphi, \\
(\rho,i,\theta) \models \E \varphi & \defeq & \exists\rho'\in\Runs(\mathcal{K}):\forall j\in[0,i]:\rho'(j)=\rho(j) \text{ and } (\rho',i,\theta)\models\varphi, \\
(\rho,i,\theta) \models x.\varphi & \defeq & (\rho,i,\theta[x\mapsto i]) \models \varphi, \\
(\rho,i,\theta) \models \tau_1\le\tau_2 & \defeq & \llbracket\tau_1\rrbracket(\rho,i,\theta) \le \llbracket\tau_2\rrbracket(\rho,i,\theta),
\end{array}\]
where the Boolean cases are omitted and the semantics of terms is given, for $a\in\mathbb{Z}$, by
 \[\begin{array}{rcl}
   \llbracket a\rrbracket(\rho,i,\theta) &\define{=}& a,\\
   \llbracket\tau_1+\tau_2\rrbracket(\rho,i,\theta) & \define{=}& \llbracket\tau_1\rrbracket(\rho,i,\theta)+\llbracket\tau_2\rrbracket(\rho,i,\theta),\\
   \llbracket a \cdot \#_x(\varphi)\rrbracket(\rho,i,\theta) & \define{=}& a\cdot|\lbrace j\in\mathbb{N}\mid\theta(x)\le j \le i, (\rho,j,\theta)\models\varphi\rbrace|.
\end{array}\]
We abbreviate $(\rho,i,\mathbf{0}) \models\varphi$ by $(\rho,i) \models \varphi$ and $(\rho,0)\models\varphi$ by $\rho\models\varphi$ and say that $\rho$ satisfies $\varphi$ (at position $i$) in these cases.
Moreover, we say a state $s\in S$ satisfies $\varphi$, denoted $s\models\varphi$ if there are $\rho_s\in\Runs(\mathcal{K})$ and $i\in\mathbb{N}$ such that $\rho_s(i)=s$ and $(\rho_s,i)\models\varphi$.
The Kripke structure $\mathcal{K}$ satisfies $\varphi$, denoted by $\mathcal{K} \models \varphi $, if $s_I \models \varphi$.
Note that we choose to define the model-checking relation existentially but since the formalism is closed under negation, this does not have major
consequences on our results.

\subparagraph{Fragments.}
We define the following fragments of $\CCTLstar$ in analogy to the classical logics $\LTL$ and $\CTL$.
The \emph{linear} time fragment $\CLTL$ consists of those \CCTLstar formulae that do not use the path quantifiers $\E$ and $\A$.
The \emph{branching} time logic \CCTL restricts the use of temporal operators $\X$ and $\U$  such that each occurrence must be preceded immediately by either $\E$ or $\A$.
Similar branching-time logics have been considered in \cite{DBLP:journals/corr/abs-1211-4651}.

\subparagraph{Frequency logics.}
A major subject of our investigation are frequency constraints.
This concept embeds naturally into the context of counting logics as it can be understood as a restricted form of counting.
We therefore define in the following the frequency temporal logics \fCTLstar, \fLTL and \fCTL as fragments of \CCTLstar.
Consider the following grammar defining the syntax of formulae $\varphi$ for natural numbers $n,m\in\mathbb{N}$ with $n\le m>0$ and $p\in AP$.
\begin{align*}
  \varphi ::= p \mid \varphi \land \varphi \mid \neg\varphi \mid \alpha &&
  \beta ::= \X\varphi  \mid \varphi\U^{\frac{n}{m}} \varphi \end{align*}
With the additional rule $\alpha::= \E \varphi\mid\beta$ it defines precisely the set of \fCTLstar formulae while it defines \fCTL for $\alpha ::= \E \beta\mid\A \beta$ and \fLTL for $\alpha ::= \beta$.
The semantics is defined by interpreting \fCTLstar formulae as \CCTLstar with the additional equivalence
\begin{equation}\label{eqn:funtil-semantics}
  \varphi\U^\frac{n}{m}\psi \define{\equiv} \psi \lor x\,.\,\F\ ((\X \psi) \, \land \, m\cdot\#_x(\varphi) \ge n\cdot\#_x(\top))
\end{equation}
for \fCTLstar formulae $\varphi$ and $\psi$ and a variable $x\in V$ not being used in either $\varphi$ or $\psi$.

\begin{figure}[t]
\begin{center}
  
\begin{tikzpicture}[small nodes]
    \node[state, initial, label=below:$p$] (q0) {$s_0$};
    \node[state, right of = q0] (q1) {$s_1$};
    \node[state, right of = q1] (q2) {$s_2$};
    \node[state, above of = q2, node distance = 3em, label=below:$r$] (q3) {$s_3$};
    \node[state, right of = q2, label=below:$r$] (q4) {$s_4$};
    \node[state, right of = q4, label=below:$q$] (q5) {$s_5$};

    \path[->] (q0) edge (q1)
                  (q1) edge (q2)
                  (q2) edge (q4)
                  (q2) edge[bend right=40] (q3)
                  (q3) edge[bend right=40] (q2)
                  (q0) edge[loop above left] (q0)
                  (q0) edge[bend right=50] (q2)
                  (q4) edge (q5)
                  (q4) edge[loop above left] (q4)
                  (q5) edge[loop above left] (q5);

\end{tikzpicture}
 \end{center}
    \caption{A flat Kripke over $AP=\lbrace p,q,r\rbrace$.\label{fig:fks}}
\end{figure}

\begin{Example}
  Consider  the Kripke structure given by \cref{fig:fks} and the \CCTL formula $\varphi_1=z.\A\G \left(q \to ( \#_z(p) \le \#_z(\E\X r))\right)$.
It basically states that on every path reaching $s_5$ there must be a position where the states $s_2$ and $s_4$ (satisfying $\E\X r$) together have been visited at least as often as the state $s_0$.
A different, yet similar statement can be formulated using only frequency constraints:
$\varphi_1' = \A ((\E\X r )\U^{\frac{1}{2}} q)$ states that $s_5$ must always be reached while visiting $s_2$ and $s_4$ together at least as often as $s_0$, $s_1$ and $s_3$.
Both $\varphi_1$ and $\varphi_1'$ are violated, e.g.\ by the path $s_0^3s_1s_2s_4s_5^\omega$.
The Kripke structure however satisfies $\varphi_2=z.\A\G \left(\neg q \to \E\F \#_z(p) < \#_z(r)\right)$ because from every state except $\mathrm{s_5}$ the number of positions that satisfy $r$ can be increased arbitrary without increasing the number of those satisfying $p$.
Notice that this would not be the case, e.g., if $\mathrm{s_4}$ was labelled by $p$.
\end{Example}

While the positional variables in \CCTLstar are a very flexible way of defining the scope of a constraint, frequency constraints in \fCTLstar are always bound to the scope of an until operator.
The same applies to the counting constraints of \CL as defined in \cite{DBLP:journals/corr/abs-1211-4651}.
For example, the \CL formula $\varphi\U_{[a_1\#(\varphi_1) + \cdots +a_n\#(\varphi_n) \ge k]}\psi$ is equivalent to the \CLTL formula $z.\varphi\U(\psi\land a_1\#_z(\varphi_1) + \cdots +a_n\#_z(\varphi_n) \ge k)$.
Admitting only natural coefficients, \CL can be encoded even in \LTL making it thus strictly less expressive than \fLTL.
On the other hand, \CLpm admits arbitrary integer coefficients, which is more general than the frequency until operator of \fLTL.
For example, $p\U^{\frac{a}{b}}q$ can be expressed as $\top\U_{[b\#(p) -a\#(\top) \ge 0]}q$ in {\CLpm}. The relation between \CCpm and \fCTL, as well as \CCstarpm and \fCTLstar is analogous.

\subparagraph{Model-checking problem.}

We now present the problem on which we focus our attention.
The \emph{model-checking problem} for a class $\mathfrak{K}\subseteq\KS$ of Kripke structures and a specification language $\mathcal{L}$ (in our case all the specification languages are fragments of $\CCTLstar$) is denoted by $\MC{\mathfrak{K},\mathcal{L}}$ and defined as the following decision problem.

\vspace{-0.5ex}

\begin{itemize}
  \item[\lipicsbf{Input:}] A Kripke structure $\mathcal{K} \in \mathfrak{K}$ and a formula $\varphi \in \mathcal{L}$. \qquad
    \lipicsbf{Decide:} Does $\mathcal{K} \models \varphi $ hold?
\end{itemize}

\vspace{-0.5ex}

For temporal logics without counting variables, the model-checking problem over Kripke structure has been studied intensively and is known to be \textsc{PSpace}-complete for \LTL and \CTLstar and in \textsc{P} for \CTL (see e.g.~\cite{DBLP:books/daglib/0020348}).
It has recently been shown that when restricting to flat (or weak) structures the complexity of the model-checking problem for \LTL is lower than in the general case \cite{DBLP:conf/concur/KuhtzF11}:
it drops from \textsc{PSpace} to \textsc{NP}.
As we show later, in the case of $\CCTLstar$, flatness of the structures allows us to regain decidability of the model-checking problem which is in general undecidable.
In this paper, we propose various ways to solve the model-checking problem of fragments of \CCTLstar over flat structures.
For some of them we provide a direct algorithm, for others we reduce our problem to the satisfiability problem of a decidable extension of Presburger arithmetic.

\section{Model-checking Frequency \CTL}
\label{sec:fctl}

Satisfiability of \fLTL is undecidable \cite{DBLP:conf/tase/BolligDL12} implying the same for model-checking \fLTL, \CLTL and \CCTLstar over Kripke structures.
This applies moreover to \CCTL \cite{DBLP:journals/corr/abs-1211-4651}.
In contrast, we show in the following that \MC{\KS, \fCTL} is decidable using an extension of the well-known labelling algorithm for \CTL (see e.g.\  \cite{DBLP:books/daglib/0020348}).

Let $\mathcal{K}=(S,s_I,E,\lambda)$ be a Kripke structure and $\Phi$ an \fCTL formula.
We compute recursively subsets $S_\varphi\subseteq S$ of the states of $\mathcal{K}$ for every subformula $\varphi\in\sub(\Phi)$ of $\Phi$ such that for all $s\in S$ we have $s\in S_\varphi$ iff $s\models\varphi$.
Checking whether the initial state $s_I$ is contained in $S_\Phi$ then solves the problem.
Propositions ($p\in AP$), negation ($\neg\varphi$), conjunction ($\varphi\land\psi$) and \emph{temporal next} ($\E \X \varphi$, $\A\X\varphi$) are handled as usual, e.g.\ $S_p =\lbrace q\in S\mid p\in\lambda(q)\rbrace$ and $S_{\E\X \varphi} =\lbrace q\in S\mid\exists q'\in S_\varphi: (q,q')\in\delta\rbrace$.

To compute if a state $s\in S$ satisfies a formula of the form $\E\varphi\U^r\psi$ or $\A\varphi\U^r\psi$, assume that $S_\varphi$ and $S_\psi$ are given inductively.
If $s\in S_\psi$ we immediately have $s\in S_{\E\varphi\U^r\psi}$ and $s\in S_{\A\varphi\U^r\psi}$.
For the remaining cases, the problem of deciding whether $s\in S_{\E\varphi\U^r\psi}$ or $s\in S_{\A\varphi\U^r\psi}$, respectively, can be reduced in linear time to the \emph{repeated control-state reachability} problem in systems with one integer counter.
The idea is to count the ratio along paths $\rho\in S^\omega$ in $\mathcal{K}$ as follows, in direct analogy to the semantics defined in \cref{eqn:funtil-semantics}.
Assume $r=\frac{n}{m}$ for $n,m\in\mathbb{N}$ and $n\le m$.
For passing any position on $\rho$ we pay a fee of $n$ and for those positions that satisfy $\varphi$ we gain a reward of $m$.
Thus, we obtain a non-negative balance of rewards and gains at some position on $\rho$ if, in average, among every $m$ positions there are at least $n$ positions that satisfy $\varphi$, meaning the ratio constraint is satisfied.
In $\mathcal{K}$, this balance along a path can be tracked using an \emph{integer counter} that is increased by $m-n$ when leaving a state $s'\in S_\varphi$ and decreased by adding $-n$ whenever leaving a state $s'\not\in S_\varphi$.
Thus, let $\hat{\mathcal{K}}_s=(S,s,\lbrace c\rbrace,\Delta)$ be the counter system with
\[
  \Delta = \lbrace(t,\mathbf{u},\emptyset,t')\mid(t,t')\in E,\ t\not\in S_\varphi \Rightarrow \mathbf{u}(c)=-n,\ t\in S_\varphi \Rightarrow \mathbf{u}(c)=m-n\rbrace.
\]

The state $s$ satisfies the formula $\A\varphi\U^r\psi$ if there is no path starting in state $s$ violating the formula  $\varphi\U^r\psi$.
The latter is the case if at every position where $\psi$ holds, the balance computed up to this position is negative.
Therefore, consider an extension $\mathcal{R}_s$ of $\hat{\mathcal{K}}_s$ where every edge leading into a state $s'\in S_\psi$ is guarded by the constraint $c<0$.
Every (infinite) run of $\mathcal{R}_s$ is now a counter example for the property holding at $s$. To decide whether $s\in S_{\A\varphi\U^r\psi}$ it suffices to check that in $\mathcal{R}_s$ no state is repeatedly reachable from $s$.

A formula $\E\varphi\U^r\psi$ is satisfied by $s$ if there is some state $s'\in S_\psi$ reachable from $s$ with a non-negative balance.
Hence, consider the counter system $\mathcal{U}_s=(S\uplus\lbrace\mathtt{t}\rbrace, s,\lbrace c\rbrace, \Delta')$ obtained from $\hat{\mathcal{K}}_s$ featuring a new sink state $\mathtt{t}\not\in S$.
The transition relation
\[
    \Delta'=\Delta \cup\lbrace(s',\mathbf{0},\lbrace c\ge0\rbrace,\mathtt{t})\mid s'\in S_\psi\rbrace\cup\lbrace(\mathtt{t},\mathbf{0},\emptyset,\mathtt{t})\rbrace
\]
extends $\Delta$ such that precisely the paths starting in $s$ and reaching a state $s'\in S_\psi$ with non-negative counter value (i.e.\ sufficient ratio) can be extended to reach $\mathtt{t}$.
Checking if $s$ is supposed to be contained in $S_{\E\varphi\U^r\psi}$ then amounts to decide whether $\mathtt{t}$ is (repeatedly) reachable from $s$ in $\mathcal{U}_s$.

Finally, repeated reachability is easily translated to the \emph{accepting run} problem of \emph{Büchi pushdown systems (BPDS)} and the latter is in \textsc{P} \cite{DBLP:conf/concur/BouajjaniEM97}.
A counter value $n\ge0$ can be encoded into a stack of the form $\oplus^n$ while $\ominus^n$ encodes $-n\le0$ and for evaluating the guards $c\ge0$ and $c<0$ only the top symbol is relevant.
Simulating an update of the counter by a number $a\in\mathbb{Z}$ requires to perform $|a|$ push or pop actions.
The size of the system is therefore linear in the largest absolute update value and hence exponential in its binary representation.
Since the updates of the constructed counter systems originate from the ratios in $\Phi$, the corresponding BPDS are of up to exponential size in $|\Phi|$.
During the labelling procedure this step must be performed at most a polynomial number of times giving an exponential-time algorithm.

\begin{restatable}{Theorem}{mcfctl}
\label{thm:mc-fctl}
    \MC{\KS, \fCTL} is in \textsc{Exp}.
\end{restatable}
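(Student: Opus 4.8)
The plan is to extend the classical bottom-up labelling algorithm for \CTL. Given a Kripke structure $\mathcal{K}=(S,s_I,E,\lambda)$ and an \fCTL formula $\Phi$, I would compute, by induction on the structure of $\Phi$, a set $S_\varphi\subseteq S$ for each subformula $\varphi\in\sub(\Phi)$ with the invariant $s\in S_\varphi$ iff $s\models\varphi$; since \fCTL requires every temporal operator to be immediately preceded by $\E$ or $\A$, this state-wise recursion is well-defined exactly as for \CTL. Atomic propositions, Boolean connectives and the next-step modalities $\E\X$ and $\A\X$ are treated by the standard labelling rules. The only genuinely new case is the frequency until, $\E\varphi\U^{\frac{n}{m}}\psi$ and $\A\varphi\U^{\frac{n}{m}}\psi$, where $S_\varphi$ and $S_\psi$ are already available: if $s\in S_\psi$ the answer is immediate, and otherwise I would reduce membership to a \emph{repeated control-state reachability} question in a one-counter system.

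The key device is an integer counter encoding the frequency constraint, following \cref{eqn:funtil-semantics}. Along a path $\rho\in S^\omega$, charge a fee of $n$ at every position and a reward of $m$ at each position whose state lies in $S_\varphi$; then the running balance is tracked by a single counter $c$ that is incremented by $m-n$ on each edge leaving an $S_\varphi$-state and by $-n$ on each edge leaving a non-$S_\varphi$-state. The point to verify is that after the transition out of position $k$ the counter equals $m\cdot|\{j\in[0,k]\mid\rho(j)\in S_\varphi\}|-n(k+1)$, so $c\ge0$ holds there exactly when the ratio constraint of \cref{eqn:funtil-semantics} is met; guarding an edge \emph{into} an $S_\psi$-state then realises the $\X\psi$ conjunct. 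Concretely, for the universal case I would take $\hat{\mathcal{K}}_s$ with each edge into an $S_\psi$-state guarded by $c<0$, so that its infinite runs are precisely the counterexample paths; hence $s\in S_{\A\varphi\U^{\frac{n}{m}}\psi}$ iff no control state is repeatedly reachable from $s$ in this system. For the existential case I would add a sink $\mathtt{t}$ reachable from each $S_\psi$-state under the guard $c\ge0$ and carrying a self-loop, so that $s\in S_{\E\varphi\U^{\frac{n}{m}}\psi}$ iff $\mathtt{t}$ is repeatedly reachable from $s$.

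It remains to solve repeated control-state reachability in these one-counter systems within the announced budget. Because the guards only inspect the sign of $c$, its value $v$ can be stored in unary on a pushdown stack ($\oplus^{v}$ if $v\ge0$, $\ominus^{-v}$ otherwise), where only the top symbol matters, and an update by $a\in\mathbb{Z}$ is simulated by $|a|$ push/pop steps through auxiliary control states. Each instance thereby becomes the accepting-run problem of a Büchi pushdown system whose size is linear in $|S|$ and in the largest absolute update value; since the fees and rewards are the binary-encoded numbers $n,m$, this is polynomial in $|\mathcal{K}|$ but exponential in $|\Phi|$. Using that the accepting-run problem for Büchi pushdown systems is in \textsc{P}~\cite{DBLP:conf/concur/BouajjaniEM97}, each instance is decided in exponential time, and the labelling algorithm triggers only polynomially many instances, yielding an exponential-time procedure overall and hence $\MC{\KS,\fCTL}\in\textsc{Exp}$.

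The step I expect to be the main obstacle is the correctness proof for the counter encoding: one must line up the off-by-one between ``leaving a state'' and the $\X\psi$ of \cref{eqn:funtil-semantics}, argue that having negative balance at every position where $\psi$ holds is equivalent to an infinite path violating $\varphi\U^{\frac{n}{m}}\psi$, and confirm that the existence of such a violating run corresponds to repeated control-state reachability in the finite-state guarded system. A secondary, more routine point is to keep the pushdown simulation linear in the update magnitudes so that the blow-up stays merely exponential in $|\Phi|$.
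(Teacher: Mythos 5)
Your proposal is correct and follows essentially the same route as the paper's proof: the \CTL-style labelling algorithm, the reduction of the frequency-until cases to repeated control-state reachability in a one-counter system via the fee/reward balance, and the unary stack encoding into Büchi pushdown systems with the same size analysis. The extra care you flag about the off-by-one between the counter update on leaving a state and the $\X\psi$ conjunct is a reasonable point to verify but matches the paper's construction of guarding edges \emph{into} $S_\psi$-states.
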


It is worth noting that for a fixed formula (program complexity) or a unary encoding of numbers in frequency constraints, the size of the constructed Büchi pushdown systems and thus the runtime of the algorithm remains polynomial.

\begin{Corollary}
    \MC{\KS, \fCTL} with unary number encoding is in \textsc{P}.
\end{Corollary}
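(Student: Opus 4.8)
The plan is to re-examine the labelling algorithm established for \cref{thm:mc-fctl} and to observe that the only super-polynomial contribution to its running time stems from translating counter updates into stack operations of a Büchi pushdown system, whose magnitude is governed by the numbers $n,m$ appearing in the frequency constraints $\U^{n/m}$ of the input formula $\Phi$. Under a unary encoding of these numbers, $n$ and $m$ are themselves bounded by $|\Phi|$, so I would argue that every intermediate object constructed by the algorithm has size polynomial in $|\Phi|$ and in the size of the Kripke structure $\mathcal{K}$, whence the whole procedure runs in polynomial time.

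Concretely, I would proceed in three steps. First, I recall that for each subformula $\varphi\U^r\psi\in\sub(\Phi)$ with $r=\frac{n}{m}$ and each state $s$, the counter systems $\hat{\mathcal{K}}_s$, $\mathcal{R}_s$ and $\mathcal{U}_s$ have $|S|+O(1)$ control states and a single counter whose updates lie in $\lbrace -n,\,m-n\rbrace$; their descriptions are therefore of size polynomial in $|\mathcal{K}|$ and in the value $\max(n,m)$. Second, I recall the encoding of a counter value by a stack $\oplus^n$ (for $n\ge 0$) or $\ominus^n$ (for $-n\le 0$), under which an update by $a\in\rel$ is simulated by $|a|$ push/pop steps; hence the resulting Büchi pushdown system has size linear in $|\mathcal{K}|$ and in $\max(n,m)$, and its (repeated) reachability, respectively accepting-run, problem is solvable in polynomial time by \cite{DBLP:conf/concur/BouajjaniEM97}. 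Third, I note that the labelling procedure visits each of the at most $|\Phi|$ subformulae once and, for the frequency-until cases, performs at most $|S|$ such polynomial-time checks; all other cases ($p$, $\neg$, $\land$, $\E\X$, $\A\X$) are handled in polynomial time as usual. Multiplying these polynomial bounds yields a polynomial-time algorithm, provided $\max(n,m)$ is polynomial in $|\Phi|$ — which is precisely what the hypothesis of unary encoding of the frequency constraints guarantees.

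The point that needs to be made carefully, rather than a genuine obstacle, is that the exponential-size blow-up identified in the proof of \cref{thm:mc-fctl} is \emph{only} caused by the binary-to-unary gap in the constraint numbers: the number of reachability queries, the number of control states, the number of subformulae, and the stack-alphabet size are all polynomial in $|\Phi|$ and $|\mathcal{K}|$ independently of the encoding, so once $n$ and $m$ are polynomially bounded no additional exponential factor can appear. I would state this explicitly to close the argument. The same reasoning covers program complexity as a special case: for a fixed formula $\Phi$ the quantity $\max(n,m)$ is a constant, so the algorithm is again polynomial.
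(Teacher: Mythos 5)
Your proposal is correct and matches the paper's own justification: the remark preceding the corollary notes that the only exponential factor in the algorithm of \cref{thm:mc-fctl} is the size of the Büchi pushdown systems, which is linear in the magnitudes of the frequency-constraint numbers and hence polynomial under unary encoding. Your three-step elaboration simply spells out the same observation in more detail.
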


\section{Model-checking Frequency \LTL over Flat Kripke Structures}

We show in this section that model-checking \fLTL is decidable over flat Kripke structures.
As decision procedure we employ a \emph{guess and check} approach:
given a flat Kripke structure $\mathcal{K}$ and an \fLTL formula $\Phi$, we choose non-deterministically a set of satisfying runs to witness $\mathcal{K}\models\Phi$.
As representation for such sets we introduce \emph{augmented path schemas} that extend the concept of path schemas \cite{DBLP:conf/atva/LerouxS05,DBLP:journals/iandc/DemriDS15} and provide for each of its runs a labelling by formulae.
We show that if an augmented path schema features a syntactic property that we call \emph{consistency} then the associated runs actually satisfy the formulae they are labelled with.
Moreover, we show that every run of $\mathcal{K}$ is in fact represented by some consistent schema of size at most exponential in $|\mathcal{K}|+|\Phi|$.
This gives rise to the following non-deterministic procedure.
\begin{enumerate}
\item \lipicsbf{Read as input} an FKS $\mathcal{K}$ and an \fLTL formula $\Phi$.
\item \lipicsbf{Guess} an augmented path schema $\mathcal{P}$ in $\mathcal{K}$ of at most exponential size.
\item \lipicsbf{Terminate} successfully if $\mathcal{P}$ is consistent and accepts a run that is initially labelled by $\Phi$.
\end{enumerate}

We fix for this section a flat Kripke structure $\mathcal{K}=(S,s_I,E, \lambda)$ and an \fLTL formula $\Phi$.
For convenience we assume that $AP\subseteq\sub(\Phi)$.
Omitted technical details can be found in \cref{apx:correctness,apx:completeness}.
 
\subsection{Augmented Path Schemas}
\label{ssc:aps}

The set of runs of $\mathcal{K}$ can be represented as a finite number of so-called path schemas that consist of a sequence of paths and simple loops consecutive in $\mathcal{K}$~\cite{DBLP:conf/atva/LerouxS05,DBLP:journals/iandc/DemriDS15}.
A path schema represents all runs that follow the given shape while repeating each loop arbitrarily often.
For our purposes we extend this idea with additional labellings and introduce integer counters, updates and guards that can restrict the admitted runs.

\begin{definition}[Augmented Path Schema]
An \emph{augmented state} of $\mathcal{K}$ is a tuple $a = (s, L, G, \vec{u}, t)\in S\times2^{\sub(\Phi)}\times2^{\Guards{C}}\times\mathbb{Z}^C\times\lbrace\mathtt{L},\mathtt{R}\rbrace$ comprised of a state $s$ of $\mathcal{K}$, a set of formula labels $L$, guards $G$ and an update $\mathbf{u}$ over a set of counter names $C$, and a \emph{type} indicating whether the state is part of a loop ($\mathtt{L}$) or a not ($\mathtt{R}$).
We denote by $\st(a)=s$, $\labels(a)=L$, $\guards(a)=G$, $\update(a)=\vec{u}$ and $\type(a)=t$ the respective components of $a$.
An \emph{augmented path} in $\mathcal{K}$ is a sequence $u=a_0…a_n$ of augmented states $a_i$
 such that $(\st(a_i),\st(a_{i+1}))\in E$ for $i\in[0,n-1]$.
If $\type(a_i)=\mathtt{R}$ for all $i\in[0,n-1]$ then $u$ is called a \emph{row}.
It is called an \emph{augmented simple loop} (or simply \emph{loop})  if it is non-empty and $(\st(a_n),\st(a_1))\in E$ and $\st(a_i)\neq \st(a_j)$ for $i\neq j$ and  $\type(a_i)=\mathtt{L}$ for all $i\in[0,n-1]$.

An \emph{augmented path schema (APS)} in $\mathcal{K}$ is a tuple $\mathcal{P}=(P_0,…,P_n)$ where each component $P_k$ is a row or a loop, $P_n$ is a loop and their concatenation $P_1P_2…P_n$ is an augmented path.
\end{definition}

Thanks to counters we can, for example, restrict to those runs satisfying a specific frequency constraint at some positions tracking it as discussed in \cref{sec:fctl}.
\Cref{fig:example-ps-1} shows an example of an APS with edges indicating the possible state progressions.
It features a single counter that tracks the frequency constraint of a formula $r\U^{\frac{2}{3}}q$ from state $1$.

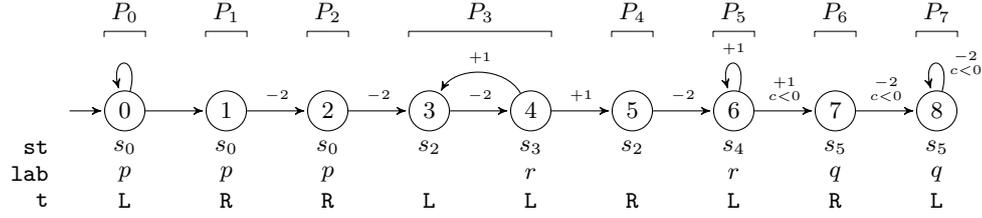
\begin{figure}[tb]
\begin{center}

\begin{tikzpicture}[small nodes]
    \node[state, initial] (p1) {$0$};
            \node[state, right of = p1] (p2) {$1$};
    \node[state, right of = p2] (p3) {$2$};
    \node[state, right of = p3] (p4) {$3$};
    \node[state, right of = p4] (p5) {$4$};
    \node[state, right of = p5] (p6) {$5$};
    \node[state, right of = p6] (p7) {$6$};
    \node[state, right of = p7] (p8) {$7$};
    \node[state, right of = p8] (p9) {$8$};

    \path
      ([xshift=-2.5em] p1.south-|p1) node[anchor= north east] (st) {$\st$};
    \path[anchor=center]
      (st-|p1) node {$s_0$}
      (st-|p2) node {$s_0$}
      (st-|p3) node {$s_0$}

      (st-|p4) node {$s_2$}
      (st-|p5) node {$s_3$}
      (st-|p6) node {$s_2$}
      (st-|p7) node {$s_4$}
      (st-|p8) node {$s_5$}
      (st-|p9) node {$s_5$};

    \path ([yshift=-1em] st.east) node[anchor = east] (lab) {$\labels$};
    \path[anchor=center]
      (lab-|p1) node {$p$}
      (lab-|p2) node {$p$}
      (lab-|p3) node {$p$}
      (lab-|p5) node {$r$}
      (lab-|p7) node {$r$}
      (lab-|p8) node {$q$}
      (lab-|p9) node {$q$};

\path ([yshift=-1em] lab.east) node[anchor = east] (ty) {$\type$};
    \path[anchor=center]
      (ty-|p1) node {$\mathtt{L}$}
      (ty-|p2) node {$\mathtt{R}$}
      (ty-|p3) node {$\mathtt{R}$}

      (ty-|p4) node {$\mathtt{L}$}
      (ty-|p5) node {$\mathtt{L}$}
      (ty-|p6) node {$\mathtt{R}$}
      (ty-|p7) node {$\mathtt{L}$}
      (ty-|p8) node {$\mathtt{R}$}
      (ty-|p9) node {$\mathtt{L}$};

    \path[->] (p1) edge (p2)
              (p2) edge["{$-2$}"] (p3)
              (p3) edge["{$-2$}"] (p4)
              (p4) edge["{$-2$}"] (p5)
              (p5) edge["{$+1$}"] (p6)
              (p6) edge["{$-2$}"] (p7)
              (p7) edge["{$\genfrac{}{}{0pt}{}{+1}{c<0}$}"] (p8)
              (p8) edge["{$\genfrac{}{}{0pt}{}{-2}{c<0}$}"] (p9)
              (p1) edge [loop above left] (p1)
              (p5) edge [bend right=60, "{$+1$}"'] (p4)
              (p7) edge [loop above left, "$+1$"'] (p7)
              (p9) edge[loop above left] node[font=\tiny, anchor = west]
              {$\genfrac{}{}{0pt}{}{-2}{c<0}$} (p9);

    \node[above of = p1, node distance=3em] (comp) {};

    \draw[very thin] (p1.west|-comp) -- node[above] {$P_0$} (p1.east|-comp)
      -- ++(0,-0.5ex) (p1.west|-comp) -- ++(0,-0.5ex);
    \draw[very thin] (p2.west|-comp) -- node[above] {$P_1$} (p2.east|-comp)
      -- ++(0,-0.5ex) (p2.west|-comp) -- ++(0,-0.5ex);
    \draw[very thin] (p3.west|-comp) -- node[above] {$P_2$} (p3.east|-comp)
      -- ++(0,-0.5ex) (p3.west|-comp) -- ++(0,-0.5ex);

    \draw[very thin] (p4.west|-comp) -- node[above] {$P_3$} (p5.east|-comp)
      -- ++(0,-0.5ex) (p4.west|-comp) -- ++(0,-0.5ex);
    \draw[very thin] (p6.west|-comp) -- node[above] {$P_4$} (p6.east|-comp)
      -- ++(0,-0.5ex) (p6.west|-comp) -- ++(0,-0.5ex);
     \draw[very thin] (p7.west|-comp) -- node[above] {$P_5$} (p7.east|-comp)
       -- ++(0,-0.5ex) (p7.west|-comp) -- ++(0,-0.5ex);
    \draw[very thin] (p8.west|-comp) -- node[above] {$P_6$} (p8.east|-comp)
      -- ++(0,-0.5ex) (p8.west|-comp) -- ++(0,-0.5ex);
    \draw[very thin] (p9.west|-comp) -- node[above] {$P_7$} (p9.east|-comp)
      -- ++(0,-0.5ex) (p9.west|-comp) -- ++(0,-0.5ex);
\end{tikzpicture}
 \end{center}
  \caption{An APS $\mathcal{P}=(P_0,…,P_7)$ of the flat Kripke structure in \cref{fig:fks}}
\label{fig:example-ps-1}
\end{figure}

We denote by $|\mathcal{P}|=|P_0…P_n|$ the size of $\mathcal{P}$ and use global indices $\ell\in[0,|\mathcal{P}|-1]$ to address the $(\ell+1)$-th augmented state in $P_0…P_n$, denoted $\mathcal{P}[\ell]$.
To distinguish these global indices from positions in arbitrary sequences, we refer to them as \emph{locations} of $\mathcal{P}$.
Moreover, $\loc_\mathcal{P}(k)=\lbrace \ell\mid|P_0P_1…P_{k-1}|\le\ell<|P_0P_1…P_k|\rbrace$
denotes for $0\le k\le n$ the set of locations belonging to component $P_k$ and for all locations $\ell\in\loc_\mathcal{P}(k)$ we denote the corresponding component index in $\mathcal{P}$ by $\comp_\mathcal{P}(\ell)=k$.
For example, in \cref{fig:example-ps-1} we have $\loc_\mathcal{P}(3)=\{3,4\}$ and $\comp_\mathcal{P}(6)=5$ because the seventh state of $\mathcal{P}$ belongs to $P_5$.
We extend the component projections for augmented states
to (sequences of) locations of $\mathcal{P}$ and write, e.g.,
$\st_\mathcal{P}(\ell_1\ell_2)$ for $\st(\mathcal{P}[\ell_1])\st(\mathcal{P}[\ell_2])$ and $\update_\mathcal{P}(\ell)$ for $\update(\mathcal{P}[\ell])$.

An APS $\mathcal{P}$ gives rise to a counter system $\CS(\mathcal{P}) = (Q, 0, C, \Delta)$ where $Q=\lbrace0,…,|P|-1\rbrace$, $C$ are the counters used in the augmented states of $\mathcal{P}$ and $\Delta$ consists of those transitions $(\ell,\update_\mathcal{P}(\ell),\guards_\mathcal{P}(\ell'),\ell')$ such that $0\le \ell'=\ell+1<|\mathcal{P}|$ or $\ell'<\ell$ and $\{\ell',\ell'+1,\ldots,\ell\}=\loc_\mathcal{P}(k)$ for some loop $P_k$.
Notice that the APS in \cref{fig:example-ps-1} is presented as its corresponding counter system.
Let $\succpos_\mathcal{P}(\ell)$ denote the set $\lbrace\ell'\in Q\mid\exists\mathbf{u},G:(\ell,\mathbf{u}, G, \ell')\in\Delta\rbrace$ of successors of $\ell$ in $\CS(\mathcal{P})$.
A \emph{run} of $\mathcal{P}$ is a run of $\CS(\mathcal{P})$ that visits each location $\ell\in S$ at least once.
The set of all runs of $\mathcal{P}$ is denoted $\Runs(\mathcal{P})$. As a consequence, a run visits the last loop infinitely often. We say that an APS $\mathcal{P}$ is non-empty iff $\Runs(\mathcal{P}) \neq \emptyset$. Since every run $\sigma \in\Runs(\mathcal{P})$ corresponds, by construction of $\mathcal{P}$, to a path $\st_\mathcal{P}(\rho)\in Q^\omega$ in $\mathcal{K}$ we define the satisfaction of an \fLTL formula $\varphi$ at position $i$ by $(\sigma,i) \models_\mathcal{P} \varphi$ iff $(\st_\mathcal{P}(\sigma),i) \models \varphi$.

Finally, notice that $\CS(\mathcal{P})$ is in fact a \emph{flat} counter system.
It is shown in \cite{DBLP:journals/iandc/DemriDS15} that \LTL properties can be verified over flat counter systems in non-deterministic polynomial time.
Since \LTL can express that each location of $\CS(\mathcal{P})$ is visited we obtain the following result.

\begin{Lemma}[\cite{DBLP:journals/iandc/DemriDS15}]
\label{lem:flat-mc-np}
  Deciding non-emptiness of APS is in \textsc{NP}.
\end{Lemma}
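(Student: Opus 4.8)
The plan is to reduce non-emptiness of an APS directly to the \emph{existential} \LTL model-checking problem over flat counter systems, for which \cite{DBLP:journals/iandc/DemriDS15} provides an \textsc{NP} upper bound, and then simply invoke that result; essentially all the work has already been done in the cited paper, so what remains is a short bookkeeping argument.

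First I would spell out the two observations already alluded to in the text. By construction $\CS(\mathcal{P})$ is a \emph{flat} counter system: each location has at most one outgoing forward edge, to $\ell+1$, and at most one outgoing backward edge, which closes the unique loop component $P_{\comp_\mathcal{P}(\ell)}$ it may belong to, so every location lies on at most one simple cycle. Moreover, $\Runs(\mathcal{P})$ is by definition the set of runs of $\CS(\mathcal{P})$ that visit each of its locations at least once, and this reachability requirement is expressible by the \LTL formula
\[
  \psi_\mathcal{P} \define{=} \bigwedge_{\ell} \F\, \mathtt{at}_\ell
\]
over the labelling of $\CS(\mathcal{P})$ in which the atomic proposition $\mathtt{at}_\ell$ holds exactly at location $\ell$; note that $\psi_\mathcal{P}$ has size polynomial in $|\mathcal{P}|$. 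Hence $\mathcal{P}$ is non-empty if and only if $\CS(\mathcal{P})$ admits a run satisfying $\psi_\mathcal{P}$, and the map $\mathcal{P}\mapsto(\CS(\mathcal{P}),\psi_\mathcal{P})$ is computable in polynomial time.

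The final step is to apply \cite{DBLP:journals/iandc/DemriDS15}, where deciding whether a flat counter system admits a run satisfying a given \LTL formula is shown to be in \textsc{NP} (in fact \textsc{NP}-complete). The only point requiring care — and the one I expect to be the main obstacle — is checking that the counter systems used here fit the model of that paper: our transition guards range over $\Guards{C}=\lbrace(c<0),(c\ge0)\mid c\in C\rbrace$, i.e.\ sign tests on single counters, which are a special (Presburger-definable) case of the guards handled there; our runs start from the zero valuation, as required; and update constants are encoded in binary, which is likewise covered. One should also note that it is the \emph{existential} flavour of model checking that matches the non-emptiness question, whereas the classical universal version is its complement; these coincide up to complementation because \LTL is closed under negation, and it is the existential version whose \textsc{NP} bound we use here. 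Granting this matching of models, the lemma follows, and all the substance — in particular that, thanks to flatness, a witnessing run is determined by polynomially many loop-iteration counts whose consistency with the guards can be verified in polynomial time — is precisely the content of the cited theorem.
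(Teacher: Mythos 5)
Your proposal is correct and follows essentially the same route as the paper: observe that $\CS(\mathcal{P})$ is a flat counter system, express the ``visit every location'' requirement as a polynomial-size \LTL formula, and invoke the \textsc{NP} bound for existential \LTL model checking of flat counter systems from \cite{DBLP:journals/iandc/DemriDS15}. The paper gives only this sketch in the text preceding the lemma, so your additional bookkeeping (flatness of $\CS(\mathcal{P})$, the explicit formula $\bigwedge_{\ell}\F\,\mathtt{at}_\ell$, and the check that sign guards and binary updates fit the cited model) is a faithful elaboration rather than a different argument.
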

 
\subsection{Labellings of Consistent APS are Correct}
\label{ssc:soundness}

An APS $\mathcal{P}$ assigns to every position $i$ on each of its runs $\sigma$ the labelling $L_i=\labels_\mathcal{P}(\sigma(i))$.
We are interested in this labelling being \emph{correct} with respect to some \fLTL formula $\Phi$ in the sense that $\Phi\in L_i$ if and only if $(\sigma,i) \models \Phi$.
The notion of consistency introduced in the following provides a sufficient criterion for correctness of the labelling of all runs of an APS.

An augmented path $u=a_0…a_n$  is said to be \emph{good}, \emph{neutral} or \emph{bad} for an \fLTL formula $\Psi=\varphi\U^\frac{x}{y}\psi$ if the number $d=|\lbrace0\le i<|u|\mid\varphi\in\labels(u(i))\rbrace|$ of positions labelled with $\varphi$ is larger than ($d>\frac{x}{y}\cdot|u|$), equal to ($d = \frac{x}{y}\cdot |u| $) or smaller than ($d<\frac{x}{y}\cdot|u|$), respectively, the fraction $\frac{x}{y}$ of all positions of $u$.
A tuple $(P_0,…,P_n)$ of rows and loops (not necessarily an APS) is called \emph{$L$-periodic} for a set $L\subseteq\sub(\Phi)$ of labels if all augmented paths $P_k$ share the same labelling with respect to $L$, that is for all $0\le k<n-1$ we have $|P_k|=|P_{k+1}|$ and $\labels(P_k(i))\cap L = \labels(P_{k+1}(i))\cap L$ for all $0\le i<|P_k|$.

\begin{Definition}[Consistency]
\label{def:consistency}
Let $\mathcal{P}=(P_0,…,P_n)$ be an APS in $\mathcal{K}$, $k\in[0,n]$ and $\ell\in\loc_\mathcal{P}(k)$ a location on component $P_k$.
The location $\ell$ is \emph{consistent} with respect to an \fLTL formula $\Psi$ if all locations of $\mathcal{P}$ are consistent with respect to all strict subformulae of $\Psi$ and one of the following conditions applies.

\begin{enumerate}
  \item \label{itm:consistency-bool}
    $\Psi\in AP$ and $\Psi\in\labels_\mathcal{P}(\ell) \Leftrightarrow \Psi\in\lambda(\st_\mathcal{P}(\ell))$, or
    $\Psi=\varphi\land\psi$ and $\Psi\in\labels_\mathcal{P}(\ell) \Leftrightarrow \varphi,\psi\in\labels_\mathcal{P}(\ell)$, or
  $\Psi=\neg\varphi$ and $\Psi\in\labels_\mathcal{P}(\ell) \Leftrightarrow \varphi\not\in\labels_\mathcal{P}(\ell)$.
  \item \label{itm:consistency-next}
    $\Psi=\X\varphi$ and $\forall \ell'\in\succpos_\mathcal{P}(\ell):  \Psi\in\labels_\mathcal{P}(\ell) \Leftrightarrow \varphi\in\labels_\mathcal{P}(\ell')$.
  \item \label{itm:consistency-until}
    $\Psi=\varphi\U^\frac{x}{y}\psi$ and one of the following holds:
    \begin{enumerate}
      \item \label{itm:consistency-psi} $\Psi,\psi\in\labels_\mathcal{P}(\ell)$
      \item \label{itm:consistency-final} $\Psi\in\labels_\mathcal{P}(\ell)$ and $P_n$ is good for $\Psi$ and $\exists\ell'\in\loc_\mathcal{P}(n): \psi\in\labels_\mathcal{P}(\ell')$
      \item \label{itm:consistency-counter} $\type_\mathcal{P}(\ell)=\mathtt{R}$ and there is a counter $c\in C$ such that
                                $\forall\ell'<\ell: \update_\mathcal{P}(\ell')(c) = 0$ and
        $\forall\ell'\ge \ell: \varphi\in\labels_\mathcal{P}(\ell') \Rightarrow \update_\mathcal{P}(\ell')(c) = y-x$ and
        $\forall\ell'\ge \ell: \varphi\not\in\labels_\mathcal{P}(\ell') \Rightarrow \update_\mathcal{P}(\ell')(c) = -x$  and
        \begin{itemize}
          \item if $\Psi\not\in\labels_\mathcal{P}(\ell)$ then $\psi\not\in\labels_\mathcal{P}(\ell)$ and $\forall\ell'>\ell: \psi\in\labels_\mathcal{P}(\ell')\Rightarrow (c<0)\in\guards_\mathcal{P}(\ell')$ and
          \item if $\Psi\in\labels_\mathcal{P}(\ell)$ then $\exists\ell'>\ell: \psi\in\labels_\mathcal{P}(\ell') \land (c\ge0)\in\guards_\mathcal{P}(\ell')$.
        \end{itemize}
      \item \label{itm:consistency-indirect} There is $k'\in[0,n]$ such that all locations $\ell'\in\loc_\mathcal{P}(k')$ are consistent wrt.\ $\Psi$ and
\begin{itemize}
  \item if $k=n$ then $k'<k$ and $(P_{k'},P_{k'+1},…,P_{k})$ is $\lbrace\varphi,\psi,\Psi\rbrace$-periodic,
  \item if $k<n$ and $P_k$ is good or neutral for $\Psi$ and $\Psi\not\in\labels_\mathcal{P}(\ell)$, or $P_k$ is bad for $\Psi$ and $\Psi\in\labels_\mathcal{P}(\ell)$ then $k'<k<n$ and $(P_{k'},P_{k'+1},…,P_{k+1})$ is $\lbrace\varphi,\psi,\Psi\rbrace$-periodic, and
  \item if $k<n$ and $P_k$ is good or neutral for $\Psi$ and $\Psi\in\labels_\mathcal{P}(\ell)$, or $P_k$ is bad for $\Psi$ and $\Psi\not\in\labels_\mathcal{P}(\ell)$ then $k<k'<n$ and $(P_k,P_{k+1},…,P_{k'+1})$ is $\lbrace\varphi,\psi,\Psi\rbrace$-periodic.
\end{itemize}
    \end{enumerate}
\end{enumerate}
The APS $\mathcal{P}$ is consistent with respect to $\Psi$ if it is the case for all its locations.
\end{Definition}

The cases~\lipicsbf{\ref{itm:consistency-bool}} and~\lipicsbf{\ref{itm:consistency-next}} reflect the semantics syntactically.
For instance, location $0$ in \cref{fig:example-ps-1} can be labelled consistently with $\X p$ since all its sucessor ($0$ and $1$) are labelled with $p$.
Case~\lipicsbf{\ref{itm:consistency-until}}, concerning the (frequency) until operator, is more involved.

Assume that $\Phi=\varphi\U^\frac{x}{y}\psi$ is an until formula and that the labelling of $\mathcal{K}$ by $\varphi$ and $\psi$ is consistent.
In some cases, it is obvious that $\Phi$ holds, namely at positions labelled by $\psi$ (case~\lipicsbf{\ref{itm:consistency-psi}}) or if the final loop already guarantees that $\Phi$ always holds (case~\lipicsbf{\ref{itm:consistency-final}}).
If neither is the case we can apply the idea discussed in \cref{sec:fctl} and use a counter to check explicitly if at some point the formula $\Phi$ holds (case~\lipicsbf{\ref{itm:consistency-counter}}).
Recall that to validate (or invalidate) the labelling of a location by the formula $\Phi$ a specific counter tracks the frequency constraint in terms of the balance between fees and rewards along a run.
For the starting point to be unique this case only applies to locations that are not part of a loop.
For those labelled with $\Phi$ there should exist a location in the future where $\psi$ holds and the balance counter is non-negative.
For those not labelled with $\Phi$ all locations in the future where $\psi$ holds must be entered with negative balance.
Finally, case~\lipicsbf{\ref{itm:consistency-indirect}} can apply (not only) to loops and is based on the following reasoning:
if a loop is good (bad) and $\Phi$ is supposed to hold at some of its locations then it suffices to verify that this is the case during any of its future (past) iterations, e.g.\ the last (first) and vice versa if $\Phi$ is supposed not to hold.
This is the reason why this case allows for delegating consistency along a periodic pattern.

For instance, consider the formula $\Psi=r\U^{\frac{2}{3}}q$ and the APS shown in \cref{fig:example-ps-1}.
It is consistent to \emph{not} label location $1$ by $\Psi$ because the counter $c$ tracks the balance and locations $7$ and $8$ are guarded as required.
If a run takes, e.g., the loop $P_5$ seven times, it has to take $P_3$ at least twice to satisfy all guards.
This ensures that the ratio for the proposition $r$ is strictly less than $\frac{2}{3}$ upon reaching the first (and thus any) occurrence of $q$.
Note that to also make location $2$ consistent, an additional counter needs to be added.
Consistency with respect to $\Psi$ is then inherited by location $0$ from location $1$ according to case~\lipicsbf{\ref{itm:consistency-indirect}} of the definition.
Intuitively, additional iterations of the bad loop $P_0$ can only diminish the ratio.

The definition of consistency guarantees that if an APS is consistent with respect to $\Phi$ then for every run of the APS, each time the formula $\Phi$ is encountered, it holds at the current position (see \cref{apx:correctness} for complete details).
Hence we obtain the following lemma that guarantees correctness of our decision procedure.

\begin{restatable}[Correctness]{Lemma}{fltlsoundness}
\label{lem:fltl-soundness}
  If there is an APS $\mathcal{P}$ in $\mathcal{K}$ such that $\mathcal{P}$ is consistent wrt.\ $\Phi$ and $\Phi\in\labels_\mathcal{P}(0)$ and $\Runs(\mathcal{P})\neq\emptyset$ then $\mathcal{K}\models\Phi$.
\end{restatable}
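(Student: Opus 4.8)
The plan is to prove the lemma by establishing a stronger statement about labellings and then specializing it to position $0$. Concretely, I would show: \emph{if $\mathcal{P}$ is consistent with respect to $\Phi$, then for every run $\sigma\in\Runs(\mathcal{P})$, every position $i$ on $\sigma$, and every subformula $\Psi\in\sub(\Phi)$, we have $\Psi\in\labels_\mathcal{P}(\sigma(i)) \iff (\sigma,i)\models_\mathcal{P}\Psi$.} Applying this with $\Psi=\Phi$ and $i=0$, using $\Phi\in\labels_\mathcal{P}(0)$ and the fact that $\sigma(0)=0$ (the initial location) together with $\Runs(\mathcal{P})\neq\emptyset$, gives a run $\sigma$ with $(\sigma,0)\models\Phi$, i.e.\ $(\st_\mathcal{P}(\sigma),0)\models\Phi$; since $\st_\mathcal{P}(\sigma)$ is a run of $\mathcal{K}$ starting in $s_I$, this yields $\mathcal{K}\models\Phi$.

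The proof of the strengthened claim goes by induction on the structure of $\Psi$, following exactly the case split of \cref{def:consistency}. The Boolean and next cases (\lipicsbf{\ref{itm:consistency-bool}}, \lipicsbf{\ref{itm:consistency-next}}) are immediate from the induction hypothesis applied to the strict subformulae and from the observation that for a run $\sigma$ of $\CS(\mathcal{P})$, the position $\sigma(i+1)$ is a successor location in $\succpos_\mathcal{P}(\sigma(i))$. The real work is the until case $\Psi=\varphi\U^{\frac{x}{y}}\psi$, where I would argue separately for each of the four subcases. Fix a run $\sigma$ and a location $\ell=\sigma(i)$. In subcase~\lipicsbf{\ref{itm:consistency-psi}} we have $\psi\in\labels_\mathcal{P}(\ell)$, so by induction $(\sigma,i)\models\psi$, hence $(\sigma,i)\models\Psi$ by \cref{eqn:funtil-semantics}, matching $\Psi\in\labels_\mathcal{P}(\ell)$. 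In subcase~\lipicsbf{\ref{itm:consistency-final}}, the final loop $P_n$ is good for $\Psi$ and contains a $\psi$-location; since every run visits $P_n$ infinitely often and the loop being good means the running balance $m\cdot\#_x(\varphi)-n\cdot\#_x(\top)$ increases by a positive amount over each full iteration, after sufficiently many iterations the balance at the next $\psi$-position becomes non-negative, witnessing $\Psi$ — and this argument is uniform over all positions $i$ since from any position the loop is still entered infinitely often. Subcase~\lipicsbf{\ref{itm:consistency-counter}} is the heart: the dedicated counter $c$ is zero on all updates before $\ell$, so at position $i$ (the unique visit, as $\type_\mathcal{P}(\ell)=\mathtt{R}$) the counter value is exactly the balance accumulated from $\ell$ onward, tracking $y\cdot\#_\ell(\varphi)-x\cdot\#_\ell(\top)$; the guards $(c<0)$ resp.\ $(c\ge0)$ on transitions into $\psi$-locations then force precisely the semantic condition of \cref{eqn:funtil-semantics} — here I must carefully match the off-by-one between "$\X\psi$ holds" and "$\psi$ holds at the next position", and between the guard being evaluated \emph{after} the update and the balance definition. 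Subcase~\lipicsbf{\ref{itm:consistency-indirect}} is propagation along a $\{\varphi,\psi,\Psi\}$-periodic block: here I would argue that the truth of $\Psi$ at a location depends only on the labelling of $\varphi,\psi$ along the run from that location and the multiset of loop iterations taken, and that moving the "decision point" to an earlier or later copy of a periodic block changes the balance monotonically in the favourable direction given the good/neutral/bad classification — so consistency of the reference block $k'$ transfers to $k$.

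The main obstacle I expect is subcase~\lipicsbf{\ref{itm:consistency-indirect}}, specifically the monotonicity argument that justifies "delegating" consistency across a periodic pattern: one must show that if a block is good (or neutral) then additional iterations can only help $\Psi$ become true later but cannot make it false at a location already labelled $\Psi$, and symmetrically for bad blocks, and that this interacts correctly with the placement of the counter's starting location in subcase~\lipicsbf{\ref{itm:consistency-counter}} at the \emph{reference} block rather than the current one. Making this rigorous requires a careful bookkeeping lemma relating, for any two locations $\ell<\ell'$ lying in corresponding positions of consecutive copies of a periodic block, the balance functions $i\mapsto y\cdot\#_\ell^{(i)}(\varphi)-x\cdot\#_\ell^{(i)}(\top)$ along the run — showing they differ by the (sign-definite) per-copy balance increment. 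A secondary, more tedious difficulty is handling the final-loop case \lipicsbf{\ref{itm:consistency-final}} together with \lipicsbf{\ref{itm:consistency-indirect}} when $k=n$: one has to ensure the infinitely-many visits to $P_n$ plus periodicity of the preceding blocks indeed guarantee a witnessing $\psi$-position with non-negative balance from \emph{every} starting position $i$, not just from $i=0$. I would isolate these as auxiliary claims (proved in the appendix referenced as \cref{apx:correctness}) and keep the main induction clean.
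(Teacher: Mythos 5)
Your proposal follows essentially the same route as the paper: it strengthens the statement to full correctness of the labelling (for every run, position and subformula), proceeds by induction on the formula following the case split of \cref{def:consistency}, and isolates exactly the auxiliary balance-monotonicity lemma for periodic blocks that the paper proves in \cref{apx:correctness} to handle subcase~\lipicsbf{\ref{itm:consistency-indirect}}. The difficulties you flag (the off-by-one in the counter case and the good/bad monotonicity for delegation) are precisely the points the paper's appendix addresses, so the plan is sound and matches the published argument.
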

 
\subsection{Constructing Consistent APS}
\label{ssc:completeness}
Assuming that our flat Kripke structure $\mathcal{K}$ admits a run $\rho$ such that $\rho\models\Phi$, we show how to construct a non-empty APS that is initially labelled by and consistent with respect to $\Phi$.
It will be of at most exponential size in $|\mathcal{K}|+|\Phi|$ and is built recursively over the structure of $\Phi$.

Concerning the base case where $\Phi\in AP$, all paths in a flat structure can be represented by a path schema of linear size \cite{DBLP:conf/atva/LerouxS05,DBLP:journals/iandc/DemriDS15}.
Intuitively, since $\mathcal{K}$ is flat, every subpath $s_is_{i+1}…s_{i'}…s_{i''}$ of $\rho$ where a state $s_i=s_{i'}=s_{i''}$ occurs more than twice is equal to $(s_is_{i+1}…s_{i'-1})^ks_{i''}$ for some $k \in \nat$.
Hence, there are simple subpaths $u_0,…,u_m\in S^+$ of $\rho$ and positive numbers of iterations $n_0,…,n_{m-1}\in\mathbb{N}$ such that $\rho=u_0^{n_0}u_1^{n_1}…u_{m-1}^{n_{m-1}}u_m^\omega$ and $|u_0u_1…u_m|\le2|S|$.
From this decomposition, we build an APS being consistent with respect to all propositions.
Henceforth, we assume by induction an APS $\mathcal{P}$ being consistent with respect to all strict subformulae of $\Phi$ and a run $\sigma\in\Runs(\mathcal{P})$ with $\st_\mathcal{P}(\sigma)=\rho$.
If $\Phi=\varphi\land\psi$ or $\Phi=\neg\varphi$, \cref{def:consistency} determines for each augmented state of $\mathcal{P}$ whether it is supposed to be labelled by $\Phi$ or not.
It remains hence to deal with the next and frequency until operators.

\subparagraph{Labelling $\mathcal{P}$ by $\X\varphi$.}
If $\Phi=\X\varphi$ the labelling at some location $\ell$ is extended according to the labelling of its successors.
These may disagree upon $\varphi$ (only) if $\ell$ has more than one successor, i.e., being the last location on a loop $P_k$ of $\mathcal{P}=(P_0,…,P_m)$.
In that case we consult the run $\sigma$: if it takes $P_k$ only once, this loop can be \emph{cut} and replaced by $P_k'$ that we define to be an exact copy except that all augmented states have type $\mathtt{R}$ instead of $\mathtt{L}$.
If otherwise $\sigma$ takes $P_k$ at least twice, the loop can be \emph{unfolded} by inserting $P_k'$ between $P_k$ and $P_{k+1}$, i.e.\ letting $\mathcal{P}'=(P_0,…,P_k,P_k',P_{k+1},…,P_m)$.
Either way, $\sigma$ remains a run of the obtained APS, up to shifting the locations $\ell'>\ell$ if the extra component was inserted (recall that locations are indices).
Importantly, cutting or unfolding any loop, even any number of times, in $\mathcal{P}$ preserves consistency.

\subparagraph{Labelling $\mathcal{P}$ by $\varphi\U^r\psi$.}
The most involved case is to label a location $\ell$ by $\Phi=\varphi\U^r\psi$.
First, assume that $\ell$ is part of a row.
Whether it must be labelled by $\Phi$ is uniquely determined by $\sigma$.
This is consistent if case~\lipicsbf{\ref{itm:consistency-psi}} or~\lipicsbf{\ref{itm:consistency-final}} of \cref{def:consistency} applies.
The conditions of case~\lipicsbf{\ref{itm:consistency-counter}} are also realised easily in most situations.
Only, if $\Phi$ holds at $\ell$ but every location $\ell'$ witnessing this (by being reachable with sufficient frequency and labelled by $\psi$) is part of some loop $P'$.
Adding the required guard directly to $\ell'$ may be too strict if $\sigma$ traverses $P'$ more than once.
However, the first iteration (if $P'$ is bad for $\Phi$) or the last iteration (if $P'$ is good) on $\sigma$ contains a position (labelled with $\psi$) witnessing that $\Phi$ holds if any iteration does.
Thus it suffices to unfold the loop once in the respective direction.
For example, consider in \cref{fig:example-ps-1} location 5 and a formula $\varphi=r\U^{\frac{2}{5}}q$.
Location 8 could witness that  $\varphi$ holds but a corresponding guard would be violated eventually since $P_7$ is bad for $\varphi$.
The first iteration is thus the optimal choice.
The unfolding $P_6$ separates it such that location 7 can be guarded instead without imposing unnecessary constraints.

Now assume that location $\ell$, to be labelled or not with $\Phi$, is part of a loop $P$ which is \emph{stable} in the sense that $\Phi$ holds either at all positions $i$ with $\sigma(i)=\ell$ or at none of them.
With two unfoldings of $P$, made consistent as above, case~\lipicsbf{\ref{itm:consistency-indirect}} applies.
However, $\sigma$ may go through $\ell$ several, say $n>1$, times where $\Phi$ holds at some but not all of the corresponding positions.
If $n$ is small we can replace $P$ by precisely $n$ unfoldings, thus reducing to the previous case without increasing the size of the structure too much.
We can moreover show that if $n$ is not small then it is possible to decompose such a problematic loop into a constant number of unfoldings and two stable copies based on the following observation.

\begin{restatable}[Decomposition]{Lemma}{fltldecomposition}
\label{lem:fltl-decomposition}
Let $P=\mathcal{P}[\ell_0]…\mathcal{P}[\ell_{|P|-1}]$ be a non-terminal loop in $\mathcal{P}$ with corresponding location sequence $v=\ell_0…\ell_{|P|-1}$ and $\hat{n}={|P|}\cdot y$ for some $y>0$.
For every run $\sigma=uv^nw\in\Runs(\mathcal{P})$ where $n\ge\hat{n}+2$ there are $n_1$ and $n_2$ such that $\sigma=uv^{n_1}v^{\hat{n}}v^{n_2}w$ and for all positions $i$ on $\sigma$ with $|u| \le i < |uv^{n_1-1}|$ or $|uv^{n_1}v^{\hat{n}}|\le i<|uv^{n_1}v^{\hat{n}}v^{n_2-2}|$ we have $(\sigma,i)\models_\mathcal{P}\Phi$ iff $(\sigma,i+|P|)\models_\mathcal{P}\Phi$.
\end{restatable}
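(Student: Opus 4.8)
The plan is to analyse, for each \emph{offset} $a\in[0,|P|-1]$, the sequence of truth values $b^a_t:=\big[(\sigma,p^a_t)\models_\mathcal{P}\Phi\big]$ for $t=1,\dots,n$, where $p^a_t:=|u|+(t-1)\cdot|P|+a$ is the position of $\sigma$ at the $a$-th location of the $t$-th copy of $v$. We may assume, as in the surrounding construction, that $\mathcal{P}$ is consistent with respect to the strict subformulae of $\Phi$ and that $\Phi=\varphi\U^{\frac{x}{y}}\psi$ (the non-until cases being immediate from periodicity of the labelling). Following \cref{eqn:funtil-semantics}, define the balance $B$ on positions of $\sigma$ by $B(-1)=0$, $B(j)=B(j-1)+(y-x)$ if $\varphi\in\labels_\mathcal{P}(\sigma(j))$, and $B(j)=B(j-1)-x$ otherwise; then $(\sigma,i)\models_\mathcal{P}\Phi$ iff $\psi\in\labels_\mathcal{P}(\sigma(i))$, or there is $k\ge i$ with $\psi\in\labels_\mathcal{P}(\sigma(k+1))$ and $B(k)\ge B(i-1)$. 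Since $P$ is a loop, the labels, and hence the increments of $B$, repeat with period $|P|$ throughout the block $v^n$; in particular $\sigma(p^a_t)$ is the same location for all $t$, and $B(p^a_{t+1}-1)=B(p^a_t-1)+\beta$, where $\beta\in\mathbb{Z}$ is the net balance of one traversal of $P$ (positive, zero or negative precisely when $P$ is good, neutral or bad for $\Phi$).

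First I would show that each sequence $(b^a_t)_{t=1}^{n-1}$ is monotone, hence switches value at most once. Fixing $a$ and writing $\theta_t:=B(p^a_t-1)=\theta_1+(t-1)\beta$ and $M_t:=\sup\{B(k)\mid k\ge p^a_t,\ \psi\in\labels_\mathcal{P}(\sigma(k+1))\}$, we have $b^a_t=[\psi\in\labels_\mathcal{P}(\sigma(p^a_t))]\vee[M_t\ge\theta_t]$. For $t\le n-1$ every $k$ in the window $[p^a_t,p^a_{t+1})$ has $k+1$ still inside $v^n$, so periodicity gives $\sup\{B(k)\mid k\in[p^a_t,p^a_{t+1}),\ \psi\in\labels_\mathcal{P}(\sigma(k+1))\}=\theta_t+\gamma$ for a fixed $\gamma\in\mathbb{Z}\cup\{-\infty\}$; unrolling, $M_t=\max(\theta_t+\gamma,\dots,\theta_{n-1}+\gamma,M_n)$. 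As $\theta_t,\dots,\theta_{n-1}$ is arithmetic with common difference $\beta$, its maximum is attained at an endpoint, so $M_t=\max(\theta_t+\gamma,M_n)$ when $\beta\le0$ and $M_t$ is independent of $t$ when $\beta>0$. Plugging back, $b^a_t$ is non-decreasing in $t$ when $\beta\le0$ (as $\theta_t$ is non-increasing and $\sigma(p^a_t)$ is fixed) and non-increasing when $\beta>0$; in all cases there is at most one switch index, call it $T_a$. The harder part is to bound how far apart the $T_a$'s can be: let $D$ be the set of iterations $t\in[1,n-2]$ with $b^a_t\ne b^a_{t+1}$ for some $a$; I claim $\max D-\min D<\hat n=|P|\cdot y$. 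When a switch occurs, $T_a$ is the place where the progression $\theta_t=\theta^a_1+(t-1)\beta$ crosses a fixed level ($M^a_n$ if $\beta<0$, $\max(\theta^a_{n-1}+\gamma^a,M^a_n)$ if $\beta>0$), so $T_a=\lfloor(\mathrm{level}_a-\theta^a_1)/\beta\rfloor$ up to $\pm1$; both $\theta^a_1=B(p^a_1-1)$ and $\mathrm{level}_a$ depend on $a$ only through balances along one copy of $P$ (plus, inside $\mathrm{level}_a$, a term coming from $w$ that is independent of $a$), the per-step increments lie in $\{y-x,-x\}\subseteq[-y,y]$, and $|\beta|\ge1$; a careful case analysis of the three loop types then yields the bound. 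This bookkeeping --- above all, how $M^a_n$ behaves at the junction of $v^n$ and $w$ --- is the main obstacle.

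Finally I would assemble the decomposition. Using $n\ge\hat n+2$, choose $n_1$ so that $[n_1,n_1+\hat n]$ contains all of $D$ while $n_2:=n-\hat n-n_1\ge2$: take $n_1=\min D$ if this keeps $n_2\ge2$, and $n_1=n-\hat n-2$, $n_2=2$ otherwise (legitimate since then $\min D\ge n-\hat n-1$, so $D\subseteq[n-\hat n-1,n-2]\subseteq[n_1,n_1+\hat n]$); if $D=\emptyset$ any such split works. Then no discrepancy iteration lies in $[1,n_1-1]$ or in $[n_1+\hat n+1,n-2]$, i.e.\ for every position $i$ of $\sigma$ with $|u|\le i<|uv^{n_1-1}|$ or $|uv^{n_1}v^{\hat n}|\le i<|uv^{n_1}v^{\hat n}v^{n_2-2}|$ we have $(\sigma,i)\models_\mathcal{P}\Phi$ iff $(\sigma,i+|P|)\models_\mathcal{P}\Phi$, which is the statement. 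The last iteration of $v^n$ is necessarily left out of the second range because for a position there the witness $k$ may fall into $w$, where the periodicity of $B$ and of the labels no longer holds.
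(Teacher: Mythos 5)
Your overall strategy coincides with the paper's: show that for each offset $a$ the truth value of $\Phi$ along the iterations of $v$ switches at most once (this is exactly \cref{lem:periodic} in \cref{apx:correctness}), show that all switch indices fall into a window of at most $\hat{n}$ consecutive iterations, and then place the block $v^{\hat{n}}$ over that window. Your monotonicity argument and your final assembly of $n_1,n_2$ are fine. The problem is the middle step: the bound $\max D-\min D<\hat{n}$ is the entire mathematical content of the lemma --- it is the reason $\hat{n}$ is $|P|\cdot y$ and not some other quantity --- and you do not prove it. You reduce it to ``a careful case analysis of the three loop types'' and yourself flag the behaviour of $M^a_n$ at the junction of $v^n$ and $w$ as ``the main obstacle''. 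That is a genuine gap, not bookkeeping. Worse, the route you sketch does not obviously deliver the stated constant: bounding the variation in $a$ of $\theta^a_1$ and of $\mathrm{level}_a$ \emph{separately} gives two contributions of up to $y(|P|-1)$ each, i.e.\ a window of width roughly $2y|P|/|\beta|$ plus the $\pm1$ rounding slack from $T_a=\lfloor(\mathrm{level}_a-\theta^a_1)/\beta\rfloor$, which overshoots $\hat{n}=y|P|$.

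The paper closes this step with one \emph{correlated} comparison instead of two independent ones. Suppose $P$ is good and let $i$ be the first position in $v^n$ at which $\Phi$ fails, lying in iteration $n_1+1$. For any $i'$ in that same iteration at which $\Phi$ still holds, take a witness $j$ and push it forward (using $\bal(P)>0$) until $j\ge|uv^{n-1}|$. Since $j>i$ and $(\sigma,i)\not\models_\mathcal{P}\Phi$, we have $\bal(\sigma(i)\dots\sigma(j-1))<0$; since $i$ and $i'$ are at most $|P|-1$ apart, this forces $\bal(\sigma(i')\dots\sigma(j-1))<y|P|$ --- a single ``within one copy of $P$'' term. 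Then $\hat{n}$ further iterations subtract at least $\hat{n}\cdot\bal(P)\ge y|P|$ from the balance to any such witness, so no witness survives for $i'+\hat{n}|P|$ and every offset that is still true in iteration $n_1+1$ has turned false by iteration $n_1+\hat{n}+1$; the bad case is symmetric (pull witnesses back instead of pushing them forward) and the neutral case is trivial. If you want to keep your per-offset formulation you must reproduce this correlation --- compare $\mathrm{level}_a-\theta^a_1$ and $\mathrm{level}_{a'}-\theta^{a'}_1$ through a \emph{common} witness rather than bounding numerator terms separately --- and you must actually resolve the deferred question of how witnesses in the last iteration and in $w$ are handled, which is precisely what the paper's push/pull normalisation of $j$ does.
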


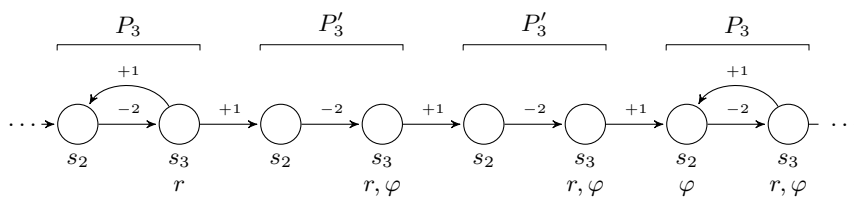
\begin{figure}[tb]
\begin{center}

\begin{tikzpicture}[small nodes]
    \node[inner sep = 0pt] (begin) {$…$};
    \node[state, right of = begin, node distance = 2em] (p1) {};    \node[state, right of = p1] (p2) {};    \node[state, right of = p2] (p3) {};    \node[state, right of = p3] (p4) {};    \node[state, right of = p4] (p5) {};    \node[state, right of = p5] (p6) {};    \node[state, right of = p6] (p7) {};    \node[state, right of = p7] (p8) {};    \node[right of = p8, node distance = 2.2em] (end) {…};

    \path (p1.south) node (st) {};
    \path[anchor=north]
      (st-|p1) node {$s_2$}
      (st-|p2) node {$s_3$}
      (st-|p3) node {$s_2$}
      (st-|p4) node {$s_3$}
      (st-|p5) node {$s_2$}
      (st-|p6) node {$s_3$}
      (st-|p7) node {$s_2$}
      (st-|p8) node {$s_3$};

    \path (st) ++ (0, -1em) node (lab) {};
    \path[anchor=north]
      (lab-|p1) node {}
      (lab-|p2) node {$r$}
      (lab-|p3) node {}
      (lab-|p4) node {$r,\varphi$}
      (lab-|p5) node {}
      (lab-|p6) node {$r,\varphi$}
      (lab-|p7) node {$\varphi$}
      (lab-|p8) node {$r,\varphi$};

    \path[->] (begin) edge (p1)
              (p1) edge["$-2$"] (p2)
              (p2) edge["$+1$"] (p3)
              (p3) edge["$-2$"] (p4)
              (p4) edge["$+1$"] (p5)
              (p5) edge["$-2$"] (p6)
              (p6) edge["$+1$"] (p7)
              (p7) edge["$-2$"] (p8)
              (p8) edge[-=] (end)
              (p2) edge[bend right = 60, "$+1$"'] (p1)
              (p8) edge[bend right = 60, "$+1$"'] (p7);

  \node[above of = p1, node distance=3em] (comp) {};

    \draw[very thin] (p1.west|-comp)
        -- node[above] (comp1) {$P_3$} (p2.east|-comp)
        -- ++(0,-0.5ex)
      (p1.west|-comp) -- ++(0,-0.5ex);
  \draw[very thin] (p3.west|-comp) -- node[above] (comp2) {$P_3'$} (p4.east|-comp)
    -- ++(0,-0.5ex) (p3.west|-comp) -- ++(0,-0.5ex);
  \draw[very thin] (p5.west|-comp) -- node[above] (comp3) {$P_3'$} (p6.east|-comp)
    -- ++(0,-0.5ex) (p5.west|-comp) -- ++(0,-0.5ex);
  \draw[very thin] (p7.west|-comp) -- node[above] (comp4) {$P_3$} (p8.east|-comp)
    -- ++(0,-0.5ex) (p7.west|-comp) -- ++(0,-0.5ex);

\end{tikzpicture}
 \end{center}
  \caption{A decomposition of loop $P_3$ from \cref{fig:example-ps-1} allowing for a correct labelling wrt.\ $\varphi=r\U^{\frac{2}{3}} q$.\label{fig:example-ps-2}}
\end{figure}

\begin{Example}
Consider again the APS $\mathcal{P}$ in \cref{fig:example-ps-1}, a run $\sigma\in\Runs(\mathcal{P})$ and the location $3$.
Whether or not $\varphi=r\U^{\frac{2}{3}}q$ holds at some position $i$ with $\sigma(i)=3$ depends on how often $\sigma$ traverses the good loop $P_5$ (the more the better) and how often it repeats $P_3$ after position $i$ (the more the worse).
Assume $\sigma$ traverses $P_5$ exactly five times and $P_3$ sufficiently often, say 10 times.
Then, during the last three iterations of $P_3$, $\varphi$ holds when visiting location $3$, and also location $4$.
In the two iterations before, the formula holds exclusively at location $4$ and in any preceding iteration, it does not hold at all.
Thus any labelling of $P_3$ would necessarily be incorrect.
However, we can replace $P_3$ by four copies of it that are labelled as indicated in \cref{fig:example-ps-2} and $\sigma$ can easily be mapped onto this modified structure.
\end{Example}

The presented procedure for constructing an APS from the run $\rho$ in $\mathcal{K}$ performs only linearly many steps in $|\Phi|$, namely one step for each subformula.
It starts with a structure of size at most $2|\mathcal{K}|$ and all modifications required to label an APS increase its size by a constant factor.
Hence, we obtain an APS $\mathcal{P}_\Phi$ of size at most exponential in the length of $\Phi$ and polynomial in the number of states of $\mathcal{K}$.
This consistent APS still contains a run corresponding to $\rho$ and hence its first location must be labelled by $\Phi$ because $(\rho,0)\models\Phi$ and we have seen that consistency implies correctness.

\begin{restatable}[Completeness]{Lemma}{fltlcompleteness}
\label{lem:fltl-completeness}
  If $\mathcal{K} \models \Phi$ then there is a consistent APS $\mathcal{P}$ in $\mathcal{K}$ of at most exponential size in $\mathcal{K}$ and $\Phi$ where $\Phi\in\labels(\mathcal{P}(0))$ and $\mathcal{P}$ is non-empty.
\end{restatable}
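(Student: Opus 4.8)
The plan is to proceed by structural induction on $\Phi$, maintaining as invariant that we possess an APS $\mathcal{P}$ in $\mathcal{K}$ that is consistent with respect to every strict subformula of the current formula, together with a run $\sigma\in\Runs(\mathcal{P})$ satisfying $\st_\mathcal{P}(\sigma)=\rho$, where $\rho$ is a fixed run of $\mathcal{K}$ with $\rho\models\Phi$ (such a $\rho$ exists since $\mathcal{K}\models\Phi$). For the base case $\Phi\in AP$ I would invoke the standard fact that in a flat structure every run decomposes as $\rho=u_0^{n_0}u_1^{n_1}\cdots u_{m-1}^{n_{m-1}}u_m^\omega$ with simple subpaths $u_i\in S^+$ and $|u_0u_1\cdots u_m|\le 2|S|$~\cite{DBLP:conf/atva/LerouxS05,DBLP:journals/iandc/DemriDS15}; turning this decomposition into an APS whose augmented states carry exactly the correct propositional labels (and no counters or guards yet) yields a structure of size at most $2|\mathcal{K}|$ that is trivially consistent with respect to all propositions, with $\sigma$ the run repeating the $k$-th loop $n_k$ times. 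The Boolean cases $\Phi=\varphi\land\psi$ and $\Phi=\neg\varphi$ require no structural change: \cref{def:consistency}(\ref{itm:consistency-bool}) dictates the labelling location by location and consistency is then immediate.

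For $\Phi=\X\varphi$ I would, for each location $\ell$ whose successors disagree on $\varphi$ — which by construction of $\CS(\mathcal{P})$ can only happen when $\ell$ is the last location of a loop $P_k$ — consult $\sigma$: if $\sigma$ takes $P_k$ exactly once, replace $P_k$ by a row-typed copy $P_k'$ (\emph{cutting}); otherwise insert a row-typed copy between $P_k$ and $P_{k+1}$ (\emph{unfolding}). Either way $\sigma$ stays a run after the obvious shift of later locations, the labelling by $\varphi$ at each location becomes unambiguous, and, crucially, cutting or unfolding loops any number of times preserves consistency with respect to all previously treated subformulae. The labelling by $\X\varphi$ then follows \cref{def:consistency}(\ref{itm:consistency-next}).

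The heart of the argument is the case $\Phi=\varphi\U^r\psi$ with $r=\frac{x}{y}$. I would first treat a location $\ell$ on a row: whether $\Phi$ must hold at $\ell$ is determined by $\sigma$; if $\psi$ already holds at $\ell$ or the final loop guarantees $\Phi$ everywhere, cases (\ref{itm:consistency-psi}) or (\ref{itm:consistency-final}) apply, and otherwise I introduce a \emph{fresh} counter $c$, set its updates along $\mathcal{P}$ to $y-x$ on $\varphi$-locations and $-x$ elsewhere (and $0$ strictly before $\ell$) and guard the relevant $\psi$-edges by $c<0$ or $c\ge 0$ to realise case (\ref{itm:consistency-counter}); since $c$ is fresh, the added updates and guards do not disturb consistency with respect to other subformulae. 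The only obstruction here is a witnessing $\psi$-location sitting inside a loop traversed more than once, which a single unfolding of that loop in the direction of its first (bad loop) or last (good loop) iteration resolves. For a location on a loop $P$ I would distinguish whether $P$ is \emph{stable} on $\sigma$, i.e.\ $\Phi$ holds at all or at none of the visits to $\ell$: if so, two consistent unfoldings of $P$ make case (\ref{itm:consistency-indirect}) applicable; if $\sigma$ visits $\ell$ only a small number $n$ of times, replace $P$ by exactly $n$ unfoldings, reducing to the row case; and if $n$ is large, apply the Decomposition \cref{lem:fltl-decomposition} to split $P$ into a bounded number of unfoldings sandwiched between two stable copies, each handled as above. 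I expect this loop case — in particular the bookkeeping needed to reconcile \cref{lem:fltl-decomposition} with the counter and guard placement while keeping $\sigma$ a valid run — to be the main obstacle, as it is the only point where the frequency semantics interacts nontrivially with the cyclic structure.

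It remains to collect the conclusions. The construction performs one step per subformula, hence $O(|\Phi|)$ steps, starting from a structure of size at most $2|\mathcal{K}|$; each step (cutting, unfolding a bounded number of times, decomposing, adding a fresh counter) enlarges the structure by at most a constant factor, so the resulting APS $\mathcal{P}_\Phi$ has size $2^{O(|\Phi|)}\cdot\mathrm{poly}(|\mathcal{K}|)$, i.e.\ at most exponential in $|\mathcal{K}|+|\Phi|$. Throughout, a run corresponding to $\rho$ is preserved up to location shifts, so $\Runs(\mathcal{P}_\Phi)\neq\emptyset$ and $\mathcal{P}_\Phi$ is non-empty. Finally $\mathcal{P}_\Phi$ is consistent with respect to $\Phi$ by construction, hence its labelling is correct along every run by the reasoning underlying \cref{lem:fltl-soundness}; since $(\rho,0)\models\Phi$ and the distinguished run maps location $0$ to position $0$, this forces $\Phi\in\labels_\mathcal{P}(0)$, which completes the proof.
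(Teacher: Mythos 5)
Your proposal follows essentially the same route as the paper's own proof: structural induction with the same invariant, the same $2|\mathcal{K}|$-size base decomposition, cutting/unfolding for $\X\varphi$, and the row/stable-loop/unstable-loop case analysis for the frequency until, resolved via the Decomposition Lemma. The only (harmless) imprecision is the claim that each step grows the structure by a constant factor — the paper's careful accounting gives a factor of $17y|\mathcal{K}|^3$ per step, which still yields the required exponential bound $2^{f(|\Phi|+|\mathcal{K}|)}$.
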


We have seen in this section that the decision procedure presented in the beginning is sound and complete due to \cref{lem:fltl-soundness} and~\ref{lem:fltl-completeness}, respectively.
The guessed APS is of exponential size in $|\Phi|$ and of polynomial size in $|\mathcal{K}|$.
Since both checking consistency and non-emptiness (cf.\ \cref{lem:flat-mc-np}) require polynomial time (in the size of the APS) the procedure requires at most exponential time.

\begin{Theorem}
 $\MC{\FKS,\fLTL}$   is in \textsc{NExp}.
\end{Theorem}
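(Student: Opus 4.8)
The plan is to assemble the non-deterministic exponential-time algorithm from the three ingredients established in this section and then bound its running time. First I would recall the guess-and-check procedure stated at the beginning of the section: on input a flat Kripke structure $\mathcal{K}$ and an \fLTL formula $\Phi$, the algorithm guesses an augmented path schema $\mathcal{P}$ in $\mathcal{K}$, then verifies that $\mathcal{P}$ is consistent with respect to $\Phi$, that $\Phi\in\labels_\mathcal{P}(0)$, and that $\Runs(\mathcal{P})\neq\emptyset$. Soundness of this procedure is exactly \cref{lem:fltl-soundness}: any consistent APS that is initially labelled by $\Phi$ and has a run witnesses $\mathcal{K}\models\Phi$. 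Completeness is \cref{lem:fltl-completeness}: whenever $\mathcal{K}\models\Phi$ there exists such an APS, and crucially it can be taken of size at most exponential in $|\mathcal{K}|+|\Phi|$, so it lies within the search space of the non-deterministic guess.

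Next I would account for the cost of each step. The guessed object $\mathcal{P}$ has size bounded by an exponential in $|\mathcal{K}|+|\Phi|$ by \cref{lem:fltl-completeness}, so writing it down is feasible in non-deterministic exponential time; one must also be slightly careful that the counter updates appearing in the augmented states, which stem from frequency constraints with binary-encoded numerators and denominators, have magnitude bounded polynomially in the binary length of those numbers, hence their representation is polynomial in $|\Phi|$ and contributes nothing extra. Checking consistency is a syntactic condition on $\mathcal{P}$ (walking through \cref{def:consistency} for every location and every subformula of $\Phi$, including the periodicity conditions of case~\lipicsbf{\ref{itm:consistency-indirect}} and the counter conditions of case~\lipicsbf{\ref{itm:consistency-counter}}), so it runs in time polynomial in $|\mathcal{P}|$, i.e.\ exponential in the input. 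Finally, non-emptiness of $\mathcal{P}$, i.e.\ $\Runs(\mathcal{P})\neq\emptyset$, is decidable in \textsc{NP} in $|\mathcal{P}|$ by \cref{lem:flat-mc-np}; since $|\mathcal{P}|$ is exponential in the input, this is again a non-deterministic exponential-time step, and the nondeterminism here can simply be folded into the outer guess (guess the witnessing run of $\CS(\mathcal{P})$ together with $\mathcal{P}$, or invoke the \textsc{NP} oracle, both of which keep us in \textsc{NExp}).

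Putting the pieces together, the overall algorithm is: guess $\mathcal{P}$ of at most exponential size together with the data needed to certify $\Runs(\mathcal{P})\neq\emptyset$; then deterministically check in exponential time that $\mathcal{P}$ is consistent with respect to $\Phi$, that $\Phi\in\labels_\mathcal{P}(0)$, and that the guessed certificate indeed witnesses a run visiting every location. By \cref{lem:fltl-completeness} an accepting computation exists whenever $\mathcal{K}\models\Phi$, and by \cref{lem:fltl-soundness} the existence of an accepting computation implies $\mathcal{K}\models\Phi$. Hence the procedure decides $\MC{\FKS,\fLTL}$ and runs in non-deterministic exponential time, giving the claimed bound.

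The main obstacle, and the point that really needs care rather than routine bookkeeping, is the size bound in \cref{lem:fltl-completeness}: one must be sure that the recursive construction of \cref{ssc:completeness} — in particular the loop cutting and unfolding for $\X$, and the decomposition of problematic loops for $\varphi\U^r\psi$ via \cref{lem:fltl-decomposition} — only blows up the schema by a constant factor per subformula, so that after $O(|\Phi|)$ steps starting from a structure of size $2|\mathcal{K}|$ the result is still only singly exponential. Everything else (the consistency check being syntactic and polynomial, the non-emptiness test being \textsc{NP} via \cref{lem:flat-mc-np}, and the harmlessness of binary-encoded frequency coefficients) is comparatively routine once that size bound is in hand. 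Since \cref{lem:fltl-completeness} is available to us as a stated result, the proof of the theorem itself is essentially the assembly argument above.
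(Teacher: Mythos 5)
Your proposal is correct and follows essentially the same route as the paper: guess an APS of at most exponential size (whose existence under $\mathcal{K}\models\Phi$ is guaranteed by \cref{lem:fltl-completeness}), check consistency and the initial label in time polynomial in the schema, decide non-emptiness via \cref{lem:flat-mc-np}, and invoke \cref{lem:fltl-soundness} for soundness. The only cosmetic difference is that you bound the schema by an exponential in $|\mathcal{K}|+|\Phi|$ where the paper notes it is exponential in $|\Phi|$ but only polynomial in $|\mathcal{K}|$; both suffice for the \textsc{NExp} bound.
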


This result immediately extends to \fCTLstar.
For a state $q$ of a flat Kripke structure $\mathcal{K}$ and an arbitrary \fLTL formula $\varphi$, the procedure allows us to decide in \textsc{NExp} whether $q\models\E \varphi$ holds.
It allows us further to decide if $q\models\A \varphi$ holds in \textsc{ExpSpace} by the dual formulation $q\not\models\E\neg\varphi$ and Savitch's theorem.
Following otherwise the standard labeling procedure for \CTL (cf.\ \cref{sec:fctl}) requires to invoke the procedure a polynomial number of times in $|\mathcal{K}|+|\Phi|$.

\begin{Theorem}
 $\MC{\FKS,\fCTLstar}$ is in \textsc{ExpSpace}.
\end{Theorem}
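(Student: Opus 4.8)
The plan is to reduce $\MC{\FKS,\fCTLstar}$ to polynomially many instances of the $\fLTL$ model-checking problem over flat Kripke structures, following the classical bottom-up labelling scheme for \CTLstar\ that was already sketched for \fCTL\ in \cref{sec:fctl}. Given a flat Kripke structure $\mathcal{K}=(S,s_I,E,\lambda)$ and an \fCTLstar\ formula $\Phi$, I would process the \emph{state subformulae} of $\Phi$ in order of increasing size. Recall that an \fCTLstar\ formula is built so that every maximal subformula guarded by a path quantifier $\E$ (or its dual $\A$) has the shape $\E\varphi$ (resp.\ $\A\varphi$) where $\varphi$ is an \fCTLstar\ formula whose path-quantifier-free part is an \fLTL\ formula. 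The key point is that once all strictly smaller state subformulae $\chi_1,\dots,\chi_k$ of such an $\E\varphi$ have been handled, I can treat each $\chi_j$ as a fresh atomic proposition: introduce $AP' = AP\uplus\{\mathrm{p}_{\chi_1},\dots,\mathrm{p}_{\chi_k}\}$, relabel $\mathcal{K}$ by adding $\mathrm{p}_{\chi_j}$ to $\lambda(q)$ exactly for those states $q$ already determined to satisfy $\chi_j$, and replace each occurrence of $\chi_j$ in $\varphi$ by $\mathrm{p}_{\chi_j}$. This yields a genuine \fLTL\ formula $\varphi'$ over $AP'$, and $q\models\E\varphi$ in $\mathcal{K}$ holds iff the (still flat) relabelled structure rooted at $q$ satisfies $\varphi'$ in the existential sense, which is exactly what the procedure behind \cref{lem:fltl-completeness} and \cref{lem:fltl-soundness} decides.

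The next step is to bound the resources. For the existential quantifier $\E\varphi$ I would, for each state $q\in S$, run the nondeterministic \textsc{NExp} procedure for $\MC{\FKS,\fLTL}$ on the relabelled structure with initial state $q$; since \textsc{NExp}$\subseteq$\textsc{ExpSpace}, this determines the set $S_{\E\varphi}$ within exponential space. For the universal quantifier I would use the duality $q\models\A\varphi \iff q\not\models\E\neg\varphi$ together with Savitch's theorem: the complement of the \textsc{NExp} check is decidable in exponential space, and $\neg\varphi$ (after pushing negation appropriately, or simply negating the whole \fLTL\ formula) is again an \fLTL\ formula of size linear in $|\varphi|$, so the relabelling trick applies verbatim. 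Boolean combinations and the $\X$ operator occurring outside any path quantifier — if the top level of $\Phi$ is itself path-quantifier-free, i.e.\ an \fLTL\ formula — are handled directly by one final call to the $\fLTL$ procedure on $\mathcal{K}$ with initial state $s_I$, again over the fully relabelled structure. Checking membership of $s_I$ in the final set $S_\Phi$ then answers the model-checking question.

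Two things need care in the accounting. First, the sizes do not blow up across levels: the number of new atomic propositions introduced is at most $|\Phi|$, the structure stays flat because we only change the labelling, never the edge relation, and each \fLTL\ subformula $\varphi'$ passed to the subroutine has length at most $|\Phi|$. Hence every one of the at most $|\Phi|$ many subroutine invocations runs in exponential time (nondeterministically) or exponential space, and these are performed sequentially, reusing space, so the total is still \textsc{ExpSpace}. Second, one must check that the inductive invariant is exactly right: after processing a state subformula $\psi$, the set $S_\psi$ satisfies $q\in S_\psi \iff q\models\psi$ in the \emph{original} $\mathcal{K}$. This follows because the semantics of \fCTLstar\ is compositional over state subformulae — the truth of $\E\varphi$ at $q$ depends only on the runs from $q$ and on the truth values of the maximal proper state subformulae along those runs, which by induction coincide with the values recorded by the propositions $\mathrm{p}_{\chi_j}$.

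The main obstacle I anticipate is not the complexity bookkeeping but making the reduction to \fLTL\ fully rigorous for the \emph{universal} case. The \fLTL\ subroutine as developed in this section is inherently existential (it guesses a witnessing APS), and one must be careful that negating the \fLTL\ formula and taking the complement of the nondeterministic procedure really yields $\A\varphi$, in particular that the semantics of \CCTLstar\ is closed under negation in the way needed and that there is no subtlety with the implicit initial valuation $\mathbf{0}$ of positional variables when a subformula is evaluated at a state reached mid-run. The remark in the excerpt that "since the formalism is closed under negation, this does not have major consequences" is precisely the fact being leaned on here, so the proof should spell out that $q\models\A\varphi$ iff it is \emph{not} the case that some run from $q$ satisfies $\neg\varphi$ at position $i$ with $\rho(i)=q$, and then invoke Savitch. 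Everything else — the relabelling, the flatness preservation, the polynomial number of rounds — is routine once this point is nailed down.
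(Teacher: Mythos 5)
Your proposal is correct and follows essentially the same route as the paper: a bottom-up \CTLstar-style labelling over state subformulae (treated as fresh propositions on the still-flat structure), the \textsc{NExp} \fLTL{} procedure for $\E\varphi$, and the duality $q\models\A\varphi\iff q\not\models\E\neg\varphi$ together with Savitch's theorem for the universal case, with polynomially many sequential invocations giving the \textsc{ExpSpace} bound. The paper states this only as a short remark, so your more explicit treatment of the relabelling invariant and of the existential semantics of $s\models\varphi$ is a faithful elaboration rather than a different argument.
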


\section{On model-checking $\CCTLstar$ over flat Kripke structures}
\label{sec:mc-cctlstar}

In this section, we prove decidability of $\MC{\FKS,\CCTLstar}$.
We provide a polynomial encoding into the satisfiability problem of a decidable extension of Presburger arithmetic featuring a quantifier for counting the solutions of a formula.
For the reverse direction an exponential reduction provides a corresponding hardness result for \CLTL, \CCTL and \CCTLstar.

\subparagraph{Presburger arithmetic with H\"artig quantifier.}
\label{subsec:PA}
First-order logic over the natural numbers with addition was shown to be decidable by M.\ Presburger \cite{Presburger29}.
It has been extended with the so-called \emph{H\"artig quantifier}~\cite{ape66,DBLP:conf/pldi/Pugh94,DBLP:journals/tocl/Schweikardt05} that allows for referring to the number of values for a specific variable that satisfy a formula.
We denote this extension by \PH.
The syntax of \PH formulae $\varphi$ and \PH terms $\tau$ over a set of variables $V$ is defined by the grammar
\begin{align*}
  \varphi & ::= \tau \le \tau \mid \neg\varphi \mid \varphi \land \varphi \mid \exists x.\varphi \mid\exists^{=x}y.\varphi
&
  \tau & ::=  a \mid a \cdot x \mid \tau + \tau
\end{align*}
for natural constants $a\in\mathbb{N}$ and variables $x,y \in V$.
Since the structure $(\mathbb{N},+)$ is fixed, the semantics is defined over valuations $\eta:V\to\mathbb{N}$ that are extended to terms $t$ as expected, e.g., $\eta(3 \cdot x+1) = 3\cdot\eta(x)+1$.
We define the satisfaction relation $\models_\PH$ as usual for first-order logics and by
$
  \eta \models_\PH \exists^{=x}y.\varphi
 ~~\defeq ~~
     \mathbb{N} \ni |\set{b \in \mathbb{N}\mid\eta[y \mapsto b] \models_\PH \varphi}| = \eta(x)
$
for the Härtig quantifier.
Notice that the solution set has to be finite.

The \emph{satisfiability problem} of $\PresHartig$ consists in
determining whether for a \PH formula $\varphi$  there exists a valuation
$\eta$ such that $\eta\models_{\PH} \varphi$.
It is decidable \cite{ape66,DBLP:conf/pldi/Pugh94,DBLP:journals/tocl/Schweikardt05} via eliminating the Härtig quantifier, but its complexity is not known.
For what concerns classic Presburger arithmetic, the complexity of its satisfiability problem lies between \textsc{2Exp} and \textsc{2ExpSpace} \cite{DBLP:journals/tcs/Berman80}.
 
\subparagraph{Lower bound for $\MC{\FKS,\CCTLstar}$.}
Let $\mathcal{K}$ be the flat Kripke structure over $AP=\emptyset$ that consists of a single loop of length one.
We can encode satisfiability of a \PH formula $\Phi$ into the question whether the (unique) run $\rho$ of $\mathcal{K}$ satisfies a \CLTL formula $\hat{\Phi}$.
Assume without loss of generality that $\Phi$ has no free variables.
Let $V_\Phi$ be the variables used in $\Phi$ and $z_1,z_2,…\not\in V_\Phi$ additional variables.
Recall that $\rho\models\hat{\Phi}$ if $(\rho,\theta,0)\models\hat{\Phi}$ for some valuation $\theta$ of the positional variables in $\hat{\Phi}$.

The idea is essentially to encode the value given to a variable $x\in V_\Phi$ of $\Phi$ into the distance between the positions assigned to two variables of $\hat{\Phi}$.
Technically, a mapping $Z\in\mathbb{N}^{V_\Phi}$ associates with each variable $x\in V_\Phi$ an index $j=Z(x)$ and the constraints that $\Phi$ imposes on $x$ are translated to constraints on positional variables $z_j$ and $z_{j-1}$ (more precisely, the distance $\theta(z_j) - \theta(z_{j-1})$ between the assigned positions).
The following transformation $\mathtt{t}: \PH \times \mathbb{N}^{V_\Phi} \times \mathbb{N} \to \CLTL$ constructs the \CLTL formula from $\Phi$.
When a variable is encountered, the mapping $Z$ is updated by assigning to it the next free index (third parameter).
Let
\[\begin{array}{lcl@{\hspace{5em}}lcl}
    \mathtt{t}(\varphi_1 \odot \varphi_2, Z, i) &=& \mathtt{t}(\varphi_1,Z, i) \odot \mathtt{t}(\varphi_2,Z, i) & \mathtt{t}(\neg\varphi, Z, i)      &=& \neg \mathtt{t}(\varphi,Z, i)\\
    \mathtt{t}(a \cdot x,Z,i)           &=& a \cdot \#_{z_{Z(x)-1}}(\true) - a \cdot \#_{z_{Z(x)}}(\true) & \mathtt{t}(a, Z, i)       &=& a\\
  \mathtt{t}(\exists x.\varphi, Z, i)     &=& \F z_i.\mathtt{t}(\varphi, Z[x\mapsto i], i+1) \\
          \mathtt{t}(\exists^{=x}y.\varphi, Z, i) &=& \multicolumn{4}{l}{
    \F\G \left(\mathtt{t}(x,Z,i) = \#_{z_{i-1}}(z_i.\mathtt{t}(\varphi,Z[y\mapsto i],i+1))\right)
  }\end{array}\]
for $x,y\in V_\Phi$, $a,i\in\mathbb{N}$ and $\odot\in\lbrace\land,\le,+\rbrace$.
Then, we obtain $\hat{\Phi}=z_0.\mathtt{t}(\Phi,\mathbf{1},1)$, initialising $Z$ and the first free index with $1$.
Notice that the translation of the Härtig quantifier instantiates the scope effectively twice when substituting the equality and thus the size of $\hat{\Phi}$ may at worst double with each nesting.
Finally, we can equivalently add path quantifiers to all temporal operators in $\hat{\Phi}$ and obtain, syntactically, a \CCTL formula.

\begin{Theorem}\label{thm:ph-to-cltl}
  The satisfiability problem of \PH is reducible in exponential time to both
   $\MC{\FKS,\CLTL}$ and $\MC{\FKS,\CCTL}$.
\end{Theorem}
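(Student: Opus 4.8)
The plan is to establish that the translation $\mathtt{t}$ defined above is correct on the one‑state self‑loop $\mathcal{K}$ (with unique run $\rho$), i.e.\ that $\mathcal{K}\models\hat{\Phi}$ holds precisely when $\Phi$ is satisfiable, then to bound the size of $\hat{\Phi}$, and finally to convert $\hat{\Phi}$ into a \CCTL formula equivalent over $\mathcal{K}$. The correctness of $\mathtt{t}$ is the content of a \emph{correspondence lemma} proved by induction on the \PH formula $\varphi$: under a suitable invariant relating a positional valuation $\theta$ on $\rho$ to a \PH valuation $\eta_{Z,\theta}$, one has $\eta_{Z,\theta}\models_\PH\varphi$ iff $(\rho,m,\theta)\models\mathtt{t}(\varphi,Z,i)$.

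Concretely, I would carry the invariant that $\mathtt{t}(\varphi,Z,i)$ is always evaluated at a position $m$ with $\theta$ defined on $z_0,\dots,z_{i-1}$, satisfying $0=\theta(z_0)\le\theta(z_1)\le\dots\le\theta(z_{i-1})=m$, and with $Z(x)<i$ for every free variable $x$ of $\varphi$. Setting $\eta_{Z,\theta}(x):=\theta(z_{Z(x)})-\theta(z_{Z(x)-1})\in\mathbb{N}$, the term case is routine: evaluated at $m$, the expression $a\cdot\#_{z_{Z(x)-1}}(\true)-a\cdot\#_{z_{Z(x)}}(\true)$ equals $a\bigl(\theta(z_{Z(x)})-\theta(z_{Z(x)-1})\bigr)=a\cdot\eta_{Z,\theta}(x)$, the position $m$ cancelling, so atomic comparisons translate faithfully; the Boolean connectives are immediate since $\mathtt{t}$ commutes with them, does not move $m$, and quantifier‑free subformulae are insensitive to $m$. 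For $\exists x.\varphi$, the operator $\F z_i$ selects an arbitrary position $m'\ge m=\theta(z_{i-1})$, so the induced value $m'-\theta(z_{i-1})$ of $x$ ranges over all of $\mathbb{N}$; the invariant is re‑established at $m'$ with $\theta(z_i)=m'$, and the induction hypothesis matches the existential quantifier on both sides.

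The main obstacle is the Härtig case $\exists^{=x}y.\varphi$, translated as $\F\G\bigl(\mathtt{t}(x,Z,i)=\#_{z_{i-1}}(z_i.\mathtt{t}(\varphi,Z[y\mapsto i],i{+}1))\bigr)$. Here I would first show that, at position $m$, the inner counting term equals $\bigl|\{\,b\in[0,m-\theta(z_{i-1})]\mid\eta_{Z,\theta}[y\mapsto b]\models_\PH\varphi\,\}\bigr|$: applying the induction hypothesis to $\varphi$ at position $j'=\theta(z_{i-1})+b$ with $z_i$ frozen to $j'$ yields exactly the value $b$ for $y$, and the counting window $[\theta(z_{i-1}),m]$ is well formed by the ordering invariant. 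This count is non‑decreasing in $m$ and converges to the cardinality of the full solution set $\{b\in\mathbb{N}\mid\eta_{Z,\theta}[y\mapsto b]\models_\PH\varphi\}$, whereas $\mathtt{t}(x,Z,i)$ evaluates to the constant $\eta_{Z,\theta}(x)$. Hence $\F\G$ of the equality holds iff the count eventually stabilises at $\eta_{Z,\theta}(x)$, which occurs exactly when the solution set is finite and of cardinality $\eta_{Z,\theta}(x)$ — precisely the Härtig semantics, in particular declaring an infinite solution set false on both sides. I expect this stabilisation/finiteness bookkeeping, together with checking that the shadowed reuse of index names $z_i$ across conjuncts is harmless (freeze variables being lexically scoped), to be the only genuinely delicate point.

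To conclude, applying the lemma to the closed formula $\Phi$ with $Z=\mathbf{1}$, $i=1$, $\theta=[z_0\mapsto 0]$ and $m=0$ gives $(\rho,0,[z_0\mapsto0])\models\mathtt{t}(\Phi,\mathbf{1},1)$, i.e.\ $\mathcal{K}\models\hat{\Phi}$, exactly when $\Phi$ is satisfiable. For the complexity bound, every rule of $\mathtt{t}$ enlarges the formula only additively except the Härtig rule, where unfolding the equality $\tau=\tau'$ as $\tau\le\tau'\wedge\tau'\le\tau$ duplicates $\mathtt{t}(\varphi,\dots)$; hence nested Härtig quantifiers blow up the size by at most $2^{|\Phi|}$, so $\hat{\Phi}$ is computable in exponential time, giving the exponential reduction to $\MC{\FKS,\CLTL}$. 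Finally, since $\mathcal{K}$ has a single run, the path quantifiers $\E$ and $\A$ are semantically transparent on $\mathcal{K}$; prefixing every temporal operator of $\hat{\Phi}$ (the until operators hidden in its $\F$ and $\G$ abbreviations) with $\E$ therefore yields a syntactically valid \CCTL formula equivalent to $\hat{\Phi}$ over $\mathcal{K}$, giving the same exponential reduction to $\MC{\FKS,\CCTL}$.
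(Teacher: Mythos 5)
Your proposal is correct and follows essentially the same route as the paper: the paper's argument for \cref{thm:ph-to-cltl} consists precisely of the translation $\mathtt{t}$, the observation that only the H\"artig rule duplicates its scope (yielding the exponential bound), and the remark that path quantifiers are transparent on the one-loop structure, which yields the \CCTL variant. Your correspondence lemma with the invariant $\eta_{Z,\theta}(x)=\theta(z_{Z(x)})-\theta(z_{Z(x)-1})$ and the stabilisation argument for $\F\G$ in the H\"artig case is a faithful elaboration of the correctness of that same construction, which the paper leaves implicit.
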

 
\subparagraph{Deciding $\MC{\FKS,\CCTLstar}$.}

We provide a polynomial reduction to the satisfiability problem of $\PresHartig$.
Given a flat Kripke structure $\mathcal{K}$ we can represent each run $\rho$ by a fixed number of naturals.
We use a predicate $Conf$ that allows for accessing the $i$-th state on $\rho$ given its encoding and a predicate $Run$ characterising all (encodings of) runs in $\Runs(\mathcal{K})$.
Such predicates were shown to be definable by Presburger arithmetic formulae of polynomial size and used to encode \MC{\FKS,\CTLstar} \cite{DBLP:journals/jancl/DemriFGD10,DBLP:conf/rp/DemriDS14}.
We adopt this idea for $\MC{\FKS,\CCTLstar}$ and $\PresHartig$.
Let $\mathcal{K}=(S,s_I, E, \lambda)$ and assume $S \subseteq \nat$ without loss of generality.
For $N\in\mathbb{N}$ let $V_N=\lbrace r_1,…,r_N,i,s\rbrace$ be a set of variables that we use to encode a run, a position and a state, respectively.

\begin{Lemma}[\cite{DBLP:conf/rp/DemriDS14}]
  There is a number $N\in\mathbb{N}$, a mapping $enc:\mathbb{N}^N\to
  S^\omega$ and predicates $Conf(r_1,…,r_N,i,s)$ and $Run(r_1,…,r_N)$
  such that for all valuations $\eta:V_N\to\mathbb{N}$ we have
  \begin{inparaenum}
    \item $\eta \models_\PH Run(r_1,…,r_N) \ \Leftrightarrow \ enc(\eta(r_1),…,\eta(r_N))\in\Runs(\mathcal{K})$ and
    \item if $\eta \models_\PH Run(r_1,…,r_N)$ then $\eta\models_\PH Conf(r_1,…,r_N,i,s)\ \Leftrightarrow \ enc(\eta(r_1),…,\eta(r_N))(\eta(i)) = \eta(s)$.
  \end{inparaenum}
Both predicates are definable by \PH formulae over variables $V\supseteq V_N$ of polynomial size in $|\mathcal{K}|$.
\end{Lemma}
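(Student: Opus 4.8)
The plan is to derive this lemma as the degenerate, counter-free instance of the analogous statement for flat \emph{counter} systems proved in \cite{DBLP:conf/rp/DemriDS14}: a flat Kripke structure $\mathcal{K}=(S,s_I,E,\lambda)$ is just a flat counter system carrying an extra state labelling that plays no role in describing runs. Formally, associate with $\mathcal{K}$ the counter system $\mathcal{S}_\mathcal{K}=(S,s_I,\emptyset,\Delta)$ with $\Delta=\lbrace(s,\mathbf{0},\emptyset,s')\mid(s,s')\in E\rbrace$; it is flat and its runs coincide with $\Runs(\mathcal{K})$ after forgetting the constantly-$\mathbf{0}$ valuations. Applying the construction of \cite{DBLP:conf/rp/DemriDS14} to $\mathcal{S}_\mathcal{K}$ yields the number $N$, an encoding $enc$, and polynomial-size Presburger predicates for runs and for their $i$-th configuration; projecting the latter onto its state component gives $Conf$, and $Run$ is taken verbatim. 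Since \PH subsumes Presburger arithmetic, these are in particular \PH formulae, and carrying $\lambda$ along is free because it is never inspected.

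For self-containedness I would recall why the cited construction works and stays polynomial. By the path-schema decomposition of flat systems \cite{DBLP:conf/atva/LerouxS05,DBLP:journals/iandc/DemriDS15}, every run of $\mathcal{K}$ has the shape $u_0^{n_0}u_1^{n_1}\cdots u_{m-1}^{n_{m-1}}u_m^{\omega}$ with simple $u_j$, $n_j\ge1$ and $\card{u_0u_1\cdots u_m}\le2\card{S}$, hence $m+1\le2\card{S}$. There can be exponentially many such shapes, so a run is \emph{not} encoded by naming its shape but by the concatenated skeleton $u_0u_1\cdots u_m$ stored as a list of at most $2\card{S}$ state indices delimited into the blocks $u_0,\dots,u_m$, together with the iteration counts $n_j$ — a tuple of polynomially many naturals, giving $N$ polynomial in $\card{\mathcal{K}}$, with $enc$ the obvious decoding.

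It then remains to write the predicates. $Run$ asserts that the skeleton is a genuine path in $\mathcal{K}$ starting at $s_I$, that each repeated block is a simple loop, and that each finite iteration count is at least $1$: a polynomial-size conjunction of local (edge- and simplicity-) conditions. $Conf(r_1,\dots,r_N,i,s)$ computes the prefix sums of block lengths $\card{u_0},\card{u_0}+n_0\card{u_1},\dots$, locates the unique block containing position $i$ by a bounded case distinction over the $\le2\card{S}$ blocks, and returns the skeleton state at the corresponding offset taken modulo that block's length. The only arithmetic subtlety is that block lengths are variables, not constants; but they are bounded by $\card{S}$, so products $n_j\cdot\card{u_j}$ and remainders modulo $\card{u_j}$ are resolved by a further case distinction on $\card{u_j}\in\interval{1}{\card{S}}$, keeping everything polynomial. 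This bookkeeping — and in particular the avoidance of the exponential number of path-schema shapes by quantifying only over a polynomial symbolic skeleton — is the only real obstacle, and it is exactly the technical content imported from \cite{DBLP:conf/rp/DemriDS14}.
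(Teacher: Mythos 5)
Your proposal is correct and takes the same route the paper does: the paper simply imports this lemma from \cite{DBLP:conf/rp/DemriDS14}, and your reconstruction — viewing $\mathcal{K}$ as a counter-free flat counter system, encoding a run symbolically as a path-schema skeleton of length at most $2\card{S}$ together with iteration counts, and handling the variable block lengths by bounded case distinctions so that the prefix-sum arithmetic stays in Presburger arithmetic of polynomial size — is exactly the construction that citation provides. No gaps.
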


Now, let $\Phi$ be a \CCTLstar formula to be verified on $\mathcal{K}$.
Without loss of generality we assume that all comparisons $\varphi_\le\in\sub(\Phi)$ of the form $\tau_1\le\tau_2$ have the shape $\varphi_\le = \sum_{\ell=1}^{k}a_\ell\cdot\#_{x_\ell}(\varphi_\ell) + b \le \sum_{\ell=k+1}^{m}a_\ell\cdot\#_{x_\ell}(\varphi_\ell) + c$ for some $k,m,b,c\in\mathbb{N}$, coefficients $a_\ell\in\mathbb{N}$ and subformulae $\varphi_\ell$.
As it is done in \cite{DBLP:conf/rp/DemriDS14} for \CTL, using the predicates $Conf$ and $Run$, we construct a \PH formula that is satisfiable if and only if $\mathcal{K}\models\Phi$.
Given the encoding of relevant runs into natural numbers we can express path quantifiers with quantification over the variables $r_1,…,r_N$.
Temporal operators can be expressed by using $Conf$ to access specific positions.
Storing of positions is done explicitly by assigning them as value to specific variables $x$.
Variables $z$ are introduced to hold the number of positions satisfying a formula and can then be used in constraints.
For example, to translate a term $\#_x(\varphi)$ we specify a variable, e.g., $z_1$ holding this value by $\exists z_1.\exists^{=z_1}i'.x\le i'\le i \land \hat{\varphi}$ where $i$ holds the current position and $\hat{\varphi}$ expresses that $\varphi$ holds at position $i'$ of the current run.
Constraints like $\#_x(\varphi)+1\le\#_x(\psi)$ can now directly be translated to, e.g., $z_1 +1 \le z_2$.
We use a syntactic translation function $\chk$ that takes the formula $\varphi$ to be translated, the names of $N$ variables encoding the current run and the name of the variable holding the current position.
Let
\[\begin{array}{l@{\,}c@{\,}l}
  \chk(p,r_1,…,r_N,i) &=&
     \exists s.Conf(r_1,…,r_N,i,s) \land \bigvee_{a\mid p\in\lambda(a)} s = a\\
  \chk(\varphi\land\psi,r_1,…,r_N,i) &=& \chk(\varphi,r_1,…,r_N,i) \land \chk(\psi,r_1,…,r_N,i)\\
  \chk(\neg\varphi,r_1,…,r_N,i) &=& \neg\chk(\varphi,r_1,…,r_N,i)\\
  \chk(\X \varphi,r_1,…,r_N,i) &=&
    \exists i'.i' = i+1 \land \chk(\varphi,r_1,…,r_N,i') \\
  \chk(\varphi \U \psi, r_1, …, r_N, i) &=&
    \exists i''.  i\le i'' \land \chk(\psi,r_1,…,r_N,i'')\ \land \\
    && \forall i'.(i\le i' \land i'<i'') \to \chk(\psi,r_1,…,r_N,i') \\
  \chk(\E \varphi,r_1,…,r_N,i) &=&
    \exists r'_1 … \exists r'_N. Run(r'_1,…,r'_N) \land \chk(\varphi,r'_1,…,r'_N,i) \land \forall i'. \\
    && (i'\le i) \to \exists s. Conf(r_1,…,r_N,i',s)\land Conf(r'_1,…,r'_N,i',s)\\
  \chk(x.\varphi,r_1,…,r_N,i) &=& \exists x.x=i \land \chk(\varphi,r_1,…,r_N,i)\\
  \chk(\varphi_\le,r_1,…,r_N,i) &=& \exists z_1…\exists z_m.
    \left(\bigwedge_{\ell=1}^{m}\exists^{=z_\ell}i'.x_\ell\le i'\le i \land \chk(\varphi_\ell,r_1,…,r_N,i')\right)\\
    && \land\ a_1 \cdot z_1 + … + a_k \cdot z_k + b \le a_{k+1}\cdot z_{k+1} + … + a_m\cdot z_m + c
\end{array}\]
for $\varphi_\le = \sum_{\ell=1}^{k}a_\ell\cdot\#_{x_\ell}(\varphi_\ell) + b \le \sum_{\ell=k+1}^{m}a_\ell\cdot\#_{x_\ell}(\varphi_\ell) + c$.
Primed variables denote fresh copies of the corresponding input variables, e.g.\ $i'$ becomes $(i')'=i''$ and $i''$ becomes $i'''$.
Now, $\Phi\models\mathcal{K}$ if and only if
$
\exists r_1…\exists r_N.\exists i.Run(r_1,…,r_N) \land i = 0 \land \chk(\Phi,r_1,…,r_N,i)
$
is satisfiable.

\begin{theorem}
$\MC{\FKS,\CCTLstar}$ is reducible to \PH satisfiability in polynomial time.
\end{theorem}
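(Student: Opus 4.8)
The plan is to prove correctness of the translation function $\chk$ by structural induction and then bound its size. I would first fix the mapping $enc$ and the predicates $Conf$, $Run$ supplied by the cited lemma, which already provide a polynomial-size \PH-definable encoding of $\Runs(\mathcal{K})$ together with random access to configurations. The normalisation of every comparison $\varphi_\le\in\sub(\Phi)$ to the shape $\sum_{\ell=1}^{k}a_\ell\cdot\#_{x_\ell}(\varphi_\ell)+b\le\sum_{\ell=k+1}^{m}a_\ell\cdot\#_{x_\ell}(\varphi_\ell)+c$ is routine (collect like terms, move negative coefficients across) and increases $|\Phi|$ only linearly.

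The heart of the argument is the following correctness claim, proved by induction on $\varphi\in\sub(\Phi)$: for every \PH valuation $\eta$ with $\eta\models_\PH Run(r_1,\dots,r_N)$, writing $\rho=enc(\eta(r_1),\dots,\eta(r_N))$ and letting $\theta$ be the positional valuation sending each counting variable $x$ to $\eta(x)$ (and to $0$ if $x$ is not in the domain of $\eta$), one has $(\rho,\eta(i),\theta)\models\varphi$ iff $\eta\models_\PH\chk(\varphi,r_1,\dots,r_N,i)$. The Boolean, $\X$ and $\U$ cases follow directly from the induction hypothesis together with the semantics of $Conf$ (states are read off via $\exists s.Conf(\dots,i,s)$); the case $x.\varphi$ uses $\exists x.\,x=i\land\dots$ to update $\theta$ exactly as the freeze semantics prescribes. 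For $\E\varphi$ the subtlety is that the witnessing run must agree with $\rho$ on the prefix $[0,i]$, which is enforced by the conjunct $\forall i'.\,i'\le i\to\exists s.\,Conf(r_1,\dots,r_N,i',s)\land Conf(r'_1,\dots,r'_N,i',s)$, so that the induction hypothesis applied to the primed copy closes the equivalence. For a comparison $\varphi_\le$, the Härtig quantifier $\exists^{=z_\ell}i'.\,x_\ell\le i'\le i\land\chk(\varphi_\ell,\dots,i')$ computes precisely $|\{\,j\mid\theta(x_\ell)\le j\le\eta(i),\ (\rho,j,\theta)\models\varphi_\ell\,\}|=\llbracket a_\ell\cdot\#_{x_\ell}(\varphi_\ell)\rrbracket(\rho,\eta(i),\theta)/a_\ell$; this solution set is finite because bounded by $\eta(i)$, so the quantifier is well defined, and the trailing linear (in)equality over $z_1,\dots,z_m$ is literally the arithmetic content of $\varphi_\le$. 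Instantiating the claim at $\Phi$, position $0$, and the all-zero valuation, and prefixing the run quantification, the sentence $\exists r_1\dots\exists r_N.\exists i.\,Run(r_1,\dots,r_N)\land i=0\land\chk(\Phi,r_1,\dots,r_N,i)$ is \PH-satisfiable exactly when $s_I\models\Phi$, i.e.\ $\mathcal{K}\models\Phi$.

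The step I expect to be the main obstacle is the variable-renaming bookkeeping: matching the positional valuation $\theta$ of the temporal semantics with the first-order valuation $\eta$ under shadowing, ensuring that the fresh primed copies $i',i'',\dots$ of position variables, the fresh run variables $r'_1,\dots,r'_N$, and the fresh counters $z_1,\dots,z_m$ never capture a counting variable $x\in V_\Phi$ nor each other across nested comparisons. I would make this precise by indexing every freshly introduced \PH variable by the recursion depth, and I would handle free (unplaced) counting variables in accordance with the convention $(\rho,0)\models\varphi\ \defeq\ (\rho,0,\mathbf{0})\models\varphi$, e.g.\ by initialising them to $0$ at the top level. Finally, for the size and time bound: apart from comparisons, $\chk$ adds only a constant number of atoms, quantifiers and connectives per subformula, while at a comparison it adds $m=O(|\varphi_\le|)$ Härtig quantifiers; since $Conf$ and $Run$ have polynomial size and are inserted a linear number of times, the resulting \PH sentence is of polynomial size in $|\mathcal{K}|+|\Phi|$ and is plainly computable in polynomial time, which yields the claimed reduction.
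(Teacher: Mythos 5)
Your proposal follows essentially the same route as the paper: it relies on the cited \PH-definable predicates $Run$ and $Conf$ for encoding runs of a flat structure, defines the same recursive translation $\chk$ (with the Härtig quantifier capturing the counting terms $\#_{x}(\varphi)$ and existential quantification over fresh run variables for $\E$), and closes with the same polynomial size bound. The explicit inductive correctness invariant relating the \PH valuation $\eta$ to the pair $(\rho,\theta)$ is exactly the formalisation the paper leaves implicit, so there is nothing substantively different to compare.
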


\section{Conclusion}

In this paper, we have seen that model checking flat Kripke structures with some expressive counting temporal logics is possible whereas this is not the case for general, finite Kripke structures.
However, our results provide an under-approximation approach to this latter problem that consists in constructing flat sub-systems of the considered Kripke structure.
We furthermore believe our method works as well for flat counter systems.
We left as open problem the precise complexity for model checking \fCTL, \fLTL and \fCTLstar over flat Kripke structures.
It follows from \cite{DBLP:conf/concur/KuhtzF11} that the latter two problems are \textsc{NP}-hard while we obtain exponential upper bounds.
However, we believe that if we fix the nesting depth of the frequency until operator in the logic, the complexity could be improved.

This work has shown, as one could have expected, a strong connection between \CLTL and counter systems and as future work we plan to study automata-based formalisms inspired by \fLTL where we will equip our automata with some counters whose role will be to evaluate the relative frequency of particular events.

\bibliography{bibliography}

\newpage
\appendix

\section{Consistency Implies Correctness}
\label{apx:correctness}

This section is dedicated to proving \cref{lem:fltl-soundness}.

\fltlsoundness*

Recall that $\mathcal{K}=(S,s_I,E,\lambda)$ is a Kripke structure and $\Phi$ an \fLTL formula.
Let us first formally define the notion of correctness.
\begin{Definition}
  A location $\ell$ of $\mathcal{P}$ is \emph{correct} wrt. a formula $\xi$ if and only if
  \[
    \forall\sigma\in\Runs(\mathcal{P}): \forall i\in\mathbb{N}: \sigma(i)=\ell \Rightarrow (\xi\in\labels_\mathcal{P}(\ell) \Leftrightarrow (\sigma,i)\models\xi).
  \]
  An APS $\mathcal{P}$ is correct wrt.\ $\xi$ if that is the case for all locations of $\mathcal{P}$.
\end{Definition}
Notice that $\ell$ can only be consistent if $\xi$ holds at all positions where $\ell$ occurs or at none of them.

If $\mathcal{K}$ now contains an APS $\mathcal{P}$ that is correct wrt.\ $\Phi$ and that path schema contains a run $\sigma\in\Runs(\mathcal{P})$ then the correct labelling of the initial location 0 by $\Phi$ implies that $(\sigma,0)\models\Phi$ and thus $\mathcal{K}\models\Phi$.
Therefore, \cref{lem:fltl-soundness} is implied by the following, that we prove in the remainder of this section.

\begin{Lemma}\label{lem:fltl-correctness}
  If an APS $\mathcal{P}$ consistent wrt.\ $\Phi$ then it is correct wrt.\ $\Phi$.
\end{Lemma}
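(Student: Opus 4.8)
The plan is to prove the statement by induction on the structure of $\Phi$, showing that if $\mathcal P$ is consistent with respect to a formula $\Psi$ then it is correct with respect to $\Psi$. By the definition of consistency, consistency of $\mathcal P$ wrt.\ $\Psi$ entails consistency wrt.\ all strict subformulae of $\Psi$, so by the induction hypothesis $\mathcal P$ is already correct wrt.\ all strict subformulae. It remains to argue location by location that the labelling by $\Psi$ itself is correct. The cases $\Psi\in AP$, $\Psi=\varphi\land\psi$ and $\Psi=\neg\varphi$ (case \textbf{\ref{itm:consistency-bool}}) are immediate from the induction hypothesis, since the truth of $\Psi$ at a position depends only on the state label or on the truth of strict subformulae at the same position. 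The case $\Psi=\X\varphi$ (case \textbf{\ref{itm:consistency-next}}) is almost as easy: every successor location $\ell'\in\succpos_\mathcal P(\ell)$ agrees on $\varphi$ by hypothesis, and any run through $\ell$ at position $i$ continues to some such $\ell'$ at position $i+1$, so $(\sigma,i)\models\X\varphi$ iff $\varphi\in\labels_\mathcal P(\ell')$ iff $\X\varphi\in\labels_\mathcal P(\ell)$.

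The substantial case is $\Psi=\varphi\U^{\frac{x}{y}}\psi$, which splits into the four sub-cases of \textbf{\ref{itm:consistency-until}}. First I would introduce the key accounting observation underlying the frequency semantics: for any augmented path $u=a_{j}\dots a_{k}$, the number of $\varphi$-labelled positions satisfies $d\ge\frac{x}{y}|u|$ exactly when the "balance" $(y-x)$-per-$\varphi$-position minus $x$-per-other-position is $\ge 0$; hence a counter $c$ set up as in case \textbf{\ref{itm:consistency-counter}} tracks, at location $\ell'$ reached from the start location $\ell$, precisely the sign of $y\cdot\#(\varphi\text{ since }\ell)-x\cdot(\text{length since }\ell)$. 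With this in hand: sub-case \textbf{\ref{itm:consistency-psi}} is correctness because $\psi$ holding at $\ell$ makes $\Psi$ hold by the semantics of $\U^{\frac{x}{y}}$ (the disjunct $\psi$ in \eqref{eqn:funtil-semantics}); sub-case \textbf{\ref{itm:consistency-final}} is correctness because a good final loop containing a $\psi$-location guarantees that from any position one eventually reaches a $\psi$-position with a non-negative balance, so $\Psi$ holds everywhere on the tail; sub-case \textbf{\ref{itm:consistency-counter}} is correctness by the counter invariant: if $\Psi\in\labels_\mathcal P(\ell)$ the guarded $(c\ge 0)$-edge into a $\psi$-location is taken by \emph{every} run (since the run must be infinite and the counter system only lets it proceed), witnessing $\Psi$; and if $\Psi\notin\labels_\mathcal P(\ell)$ every $\psi$-location after $\ell$ is entered only under $(c<0)$, i.e.\ with insufficient frequency, so no witness exists and $\Psi$ indeed fails.

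The genuinely delicate sub-case is \textbf{\ref{itm:consistency-indirect}}, the "delegation along a periodic pattern" case, and this is where I expect the main obstacle. The idea is that an $\{\varphi,\psi,\Psi\}$-periodic block $(P_{k'},\dots,P_{k})$ (or its mirror towards the last loop) lets us transfer correctness at the already-correct component $P_{k'}$ to the component $P_k$ in question. The argument has two ingredients. First, a \emph{monotonicity} claim: inserting (or removing) one iteration of a loop that is good (resp.\ bad) for $\Psi$ and that occurs after (resp.\ before) the position under consideration can only help (resp.\ not hurt) the frequency and hence preserves the truth of $\Psi$ — this needs a careful case analysis matching exactly the four bulleted alternatives in \textbf{\ref{itm:consistency-indirect}} (good/neutral vs.\ bad crossed with $\Psi$ present vs.\ absent, and $k=n$ vs.\ $k<n$), using that in a periodic block all copies carry the same $\varphi,\psi$-labels so the frequency contributed by each copy is identical. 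Second, for a run $\sigma$ visiting $\ell$ at position $i$, one slides $i$ along the periodic block copy-by-copy to a corresponding position $i'$ on the already-correct component $P_{k'}$; by the monotonicity claim the truth of $\Psi$ is unchanged along the slide (because each step either removes a harmless bad loop from before, or adds a helpful good loop afterwards, in the direction dictated by the sub-case), and at $P_{k'}$ correctness is known by hypothesis. The care required is to make the "direction of sliding" match the sign conditions so that the monotonicity always goes the right way, and to handle the wrap-around when $P_k=P_n$ is the terminal loop (where the block must sit strictly to its left and periodicity feeds forward into the infinite repetition). I would isolate the monotonicity claim as a small separate lemma, prove it via the balance-counter reformulation, and then the delegation step becomes a clean finite induction on the number of loop copies between $P_k$ and $P_{k'}$.
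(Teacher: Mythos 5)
Your plan follows essentially the same route as the paper's own proof: structural induction with the easy cases dispatched via the induction hypothesis, the balance reformulation of the frequency constraint for the until cases, and — for the delegation sub-case — an isolated monotonicity lemma for $\{\varphi,\psi,\Psi\}$-periodic blocks (the paper's \cref{lem:periodic}) that lets truth of $\Psi$ be transported between a location and its counterpart on the already-consistent component, with the direction of transport matched to whether the loop is good or bad and whether $\Psi$ is in the label. The approach is correct and the identified delicate points (direction of sliding, the terminal-loop case, and the extra component at the end of the periodic block so the witness stays inside it) are exactly where the paper's proof also spends its effort.
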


Let in the following $\mathcal{P}=(P_0,P_1,…,P_m)$ be an APS fixed and consistent wrt. $\Phi$.
Let further $\sigma\in\Runs(\mathcal{P})$ be any run of $\mathcal{P}$.
We use an induction over the structure of $\Phi$ to show that for all locations $\ell$ of $\mathcal{P}$ if $\Phi\in\labels_\mathcal{P}(\ell)$ then $\Phi$ holds at every occurrence of $\ell$ on $\sigma$ and if $\Phi\not\in\labels_\mathcal{P}(\ell)$ then $\Phi$ does not hold at any position where $\ell$ occurs on $\sigma$.

For easier reading, we use some abbreviations in the following.
Let $L_i=\labels_\mathcal{P}(\sigma(i))$ be the labelling of the location at position $i$ on $\sigma$ for $i\in\mathbb{N}$.
We also denote the set of occurrences of a location $\ell$ on $\sigma$ by $\sigma^{-1}(\ell) =\lbrace i\in\mathbb{N}\mid\sigma(i)=\ell\rbrace$.

\subsection{Propositions, Boolean Combinations and Temporal Next}

Let $i\in\mathbb{N}$ be any position on $\sigma$ and $\ell_i=\sigma(i)$ be the corresponding location in $\mathcal{P}$ with labelling $L_i$.
Consider the following cases for the structure of $\Phi$, the first being the induction base case.

\begin{description}
  \item[($\Phi=p\in AP$)] By consistency $p\in L_i \Leftrightarrow p\in\lambda(\st_\mathcal{P}(\ell_i))$ and by semantics $p\in\lambda(\st_\mathcal{P}(\ell_i)) \Leftrightarrow (\sigma,i)\models p$.
  \item[($\Phi=\neg\varphi$)] \[\neg\varphi\in L_i \stackrel{\text{consist.}}{\Leftrightarrow} \varphi\not\in L_i
                       \stackrel{\text{induct.}}{\Leftrightarrow} (\sigma,i)\not\models\varphi
                       \stackrel{\text{semant.}}{\Leftrightarrow} (\sigma,i)\models\neg\varphi\]
  \item[($\Phi=\varphi\land\psi$)] \[
    \varphi\land\psi\in L_i \stackrel{\text{consist.}}{\Leftrightarrow} \varphi,\psi\in L_i
          \stackrel{\text{induct.}}{\Leftrightarrow} (\sigma,i)\models\varphi \text{ and } (\sigma,i)\models\psi
          \stackrel{\text{semant.}}{\Leftrightarrow} (\sigma,i)\models\varphi\land\psi
  \]

  \item[($\Phi=\X\varphi$)] By the definition of a run we have $\sigma(i+1)\in\succpos_\mathcal{P}(\sigma(i))$ and thus
  \[
    \X\varphi\in L_i \stackrel{\text{consist.}}{\Leftrightarrow} \varphi\in L_{i+1}
          \stackrel{\text{induct.}}{\Leftrightarrow} (\sigma,i+1)\models\varphi
          \stackrel{\text{semant.}}{\Leftrightarrow} (\sigma,i)\models\X\varphi.
  \]
\end{description}

\subsection{Temporal Until}

Assume finally $\Phi=\varphi\U^{\frac{x}{y}}\psi$.
By consistency and induction $\mathcal{P}$ is correct wrt.\ $\varphi$ and $\psi$, thus $(\sigma,i)\models\varphi \Leftrightarrow \varphi\in L_i$ and $(\sigma,i)\models\psi \Leftrightarrow \psi\in L_i$ for all $i\in\mathbb{N}$.
Let $k=\comp_\mathcal{P}(\ell_i)$. Hence $P_k$ is the component that $\ell_i=\sigma(i)$ belongs to.
Further, for finite augmented paths $a_0…a_n$ let
\[
  \bal(a_0…a_n) = y\cdot|\lbrace j\in[0,n]\mid\varphi\in\labels(a_j)\rbrace| - x\cdot(n+1)
\]
denote the \emph{balance} between those positions that are labelled by $\varphi$ and those that are not, weighted according to the ratio required by $\Phi$.
That is, as discussed earlier, a “good” position contributes a reward of $y-x$ while a “bad” position causes a fee of $-x$.
Then, $\bal(a_0…a_n)\ge0$ is equivalent to the ratio condition $|\lbrace j\in[0,n]\mid\varphi\in\labels(a_j)\rbrace|\ge\frac{x}{y}\cdot(n+1)$ specified by $\Phi$ but allows us to reason on discrete integer numbers.
For convenience we apply this notation likewise for sequences
$v=v_1\ldots v_k$ of locations of $\mathcal{P}$ and write
$\bal(v):=\bal(\mathcal{P}(v1) \ldots \mathcal{P}(v_k))$.

We will now treat the different case of the notion of consistency.

\noindent\textbf{Case 3a}. Assume  $\Phi,\psi\in L_i$ then, by induction, $(\sigma,i)\models\psi$ which implies that $(\sigma,i)\models\Phi$.

\noindent\textbf{Case 3b}. $\ell_i$ is part of the final loop $P_m$ of
$\mathcal{P}$. Hence there is a smallest position $j>i$ where $\sigma(j)$ is part of the final loop $P$ of $\mathcal{P}$ and $\psi\in L_j$ (and thus holds there).
Hence, also $\psi\in L_{j+n|P|}$ for every number $n>0$.
Since $P$ is good for $\Phi$ the balance $\bal(P)>0$ is positive and thus
\[
  \bal(\ell_i…\ell_j)<\bal(\ell_i…\ell_{j+|P|})<…<\bal(\ell_i…\ell_{j+n|P|}).
\]
For sufficiently large $n$ (i.e., sufficiently many iterations of $P$) we obtain necessarily a non-negative balance $\bal(\ell_i…\ell_{j+n|P|})\ge0$ on the corresponding subpath of $\sigma$ and thus $(\sigma,i)\models\Phi$.

\noindent \textbf{Case 3c}. This is the case where we have a counter tracking the balance.
For this case to apply, $\ell_i$ must be part of a row and thus $i$ is the only position where $\ell_i$ occurs.
The condition requires that there is a counter $c$ of $\mathcal{P}$ that tracks the balance wrt. $\Phi$, starting at the occurrence of $\ell_i$.
Since $\sigma$ is a run, there is a corresponding sequence of valuations $\theta_0\theta_1…$ and for all $j\ge i$
\[
  \theta_j(c) = \bal(\sigma(i)\sigma(i+1)…\sigma(j-1)).
\]
If $\Phi\in\labels_\mathcal{P}(\sigma(i))$, the definition provides that there is a location $\ell'>\ell_i$ such that $\psi \in \labels_\mathcal{P}(\ell')$ and guarded by $(c\ge0)\in\guards_\mathcal{P}(\ell')$.
Thus, there is a position $j>i$ on $\sigma$ such that $\sigma(j) = \ell'$ and $\bal(\sigma(i)…\sigma(j-1)) = \theta_j \ge0$.
It follows that $(\sigma,i)\models\Phi$.
Similarly, if $\Phi\not\in\labels_\mathcal{P}(\sigma(i))$, then there is no position $j$ on $\sigma$ where $\psi$ holds and the balance between $i$ and $j$  is non-negative.
This is guaranteed because every such position carries a location $\ell'>\ell_i$ (since $\ell_i$ is not on a loop) and either $\psi\not\in\labels_\mathcal{P}(\ell')$ or $(c<0)\in\guards_\mathcal{P}(\ell')$ and thus $\bal(\sigma(i)…\sigma(j-1))=\theta_j(c)<0$.

\textbf{Case 3d}. It remains to consider the cases requiring a periodic sequence.
Let us start by establishing a lemma that provides a convenient argument for correctness and motivates the periodicity requirement imposed by the definition.

Recall that $\mathcal{P}=(P_0,P_1,…,P_m)$ and thus $\sigma$ has the form
\[
  \sigma = v_0^{n_0}v_1^{n_1}…v_{m-1}^{n_{m-1}}v_m^\omega
\]
where $v_j$, for $0\le j\le m$, is the sequence of locations corresponding to the $j$-th component, i.e.\ $\mathcal{P}[v_j]=P_j$, $n_j=1$ if $P_j$ is a row and $n_j\ge1$ if $P_j$ is a loop.

\begin{Lemma} \label{lem:periodic}
  Let $(P_j,P_{j+1},…,P_{\jhat})$ be a $\lbrace\varphi,\psi\rbrace$-periodic sequence of components of $\mathcal{P}$ with $0\le j<\jhat<m$ and $\sigma=uvw$ for $v=v_j^{n_j}v_{j+1}^{n_{j+1}}…v_{\jhat}^{n_{\jhat}}$.
  Let further $|u|<i_1\le i_2$ be positions on $\sigma$ and $n\in\mathbb{N}$ such that $i_2=i_1+n|P_j|<|uv|$.
  \begin{enumerate}
    \item If $P_j$ is good or neutral for $\Phi$ then $(\sigma,i_2)\models\Phi \Rightarrow (\sigma,i_1)\models\Phi$.
    \item If $P_j$ is bad or neutral for $\Phi$ and $i_2<|uv|-|P_j|$ then $(\sigma,i_1)\models\Phi \Rightarrow (\sigma,i_2)\models\Phi$.
  \end{enumerate}
\end{Lemma}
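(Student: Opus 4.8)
The plan is to prove \cref{lem:periodic} by comparing the balances of the runs starting at $i_1$ and $i_2$ up to matched positions, exploiting the periodicity of the labelling. Since $(P_j,\dots,P_{\jhat})$ is $\lbrace\varphi,\psi\rbrace$-periodic and all components in this block have the same length $|P_j|$ (by definition of $L$-periodicity), shifting a position by $|P_j|$ within the block $uv$ lands on a location with the same $\varphi$- and $\psi$-labelling. Consequently, for any position $j$ with $|u|\le i_1\le j$ and $j+n|P_j|<|uv|$ we have $\psi\in L_j \Leftrightarrow \psi\in L_{j+n|P_j|}$, and the balance satisfies
\[
  \bal(\sigma(i_2)\sigma(i_2+1)\dots\sigma(i_2+r)) \;=\; \bal(\sigma(i_1)\sigma(i_1+1)\dots\sigma(i_1+r)) \;+\; n\cdot\bal(P_j)
\]
for all $r$ such that $i_2+r<|uv|$ (the extra full copies of $P_j$ that are traversed between $i_1$ and $i_2+r$ but not between $i_2$ and $i_2+r$ contribute exactly $n$ times $\bal(P_j)$, using that the labelling along each copy of the loop is identical within the block).

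\textbf{For part (1)}, assume $P_j$ is good or neutral, so $\bal(P_j)\ge0$, and suppose $(\sigma,i_2)\models\Phi$. By correctness of $\mathcal{P}$ wrt.\ $\varphi$ and $\psi$ (from the induction hypothesis) together with the semantics of $\U^{\frac{x}{y}}$, there is a position $j_2\ge i_2$ with $\psi\in L_{j_2}$ and $\bal(\sigma(i_2)\dots\sigma(j_2-1))\ge0$. I would distinguish whether $j_2<|uv|$ or not. If $j_2<|uv|$, set $j_1=j_2-n|P_j|\ge i_1$; periodicity gives $\psi\in L_{j_1}$ and the displayed identity gives $\bal(\sigma(i_1)\dots\sigma(j_1-1)) = \bal(\sigma(i_2)\dots\sigma(j_2-1)) - n\cdot\bal(P_j) \le \bal(\sigma(i_2)\dots\sigma(j_2-1))$; but this is the wrong direction, so instead I take the witness for $i_1$ to be the corresponding position in the \emph{same} copy, i.e.\ use that the suffix of $\sigma$ from $i_1$ reaches a $\psi$-position with balance $\bal(\sigma(i_2)\dots\sigma(j_2-1)) + n\cdot\bal(P_j)\ge0$ after traversing the $n$ extra loop copies first; concretely, the segment $\sigma(i_1)\dots\sigma(j_2-1)$ contains exactly $n$ more full copies of $P_j$ than $\sigma(i_2)\dots\sigma(j_2-1)$, so its balance is the latter plus $n\cdot\bal(P_j)\ge0$, and $\psi\in L_{j_2}$ still. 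Hence $(\sigma,i_1)\models\Phi$. If $j_2\ge|uv|$, the witness lies in or beyond the (good) final loop region; one argues as in Case 3b that iterating sufficiently keeps the balance non-negative, and the extra copies between $i_1$ and $i_2$ only add $n\cdot\bal(P_j)\ge0$.

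\textbf{For part (2)}, assume $P_j$ is bad or neutral, so $\bal(P_j)\le0$, that $i_2<|uv|-|P_j|$, and that $(\sigma,i_1)\models\Phi$. Pick a witness $j_1\ge i_1$ with $\psi\in L_{j_1}$ and $\bal(\sigma(i_1)\dots\sigma(j_1-1))\ge0$; choose $j_1$ minimal. The bound $i_2<|uv|-|P_j|$ ensures there is room to shift: I claim the shifted position $j_2=j_1+n|P_j|$ (or, if $j_1$ is already large, the corresponding $\psi$-position obtained by dropping $n$ loop copies) satisfies $j_2<|uv|$ and $\psi\in L_{j_2}$ by periodicity, and $\bal(\sigma(i_2)\dots\sigma(j_2-1)) = \bal(\sigma(i_1)\dots\sigma(j_1-1)) - n\cdot\bal(P_j)\ge\bal(\sigma(i_1)\dots\sigma(j_1-1))\ge0$, whence $(\sigma,i_2)\models\Phi$. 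The role of the hypothesis $i_2<|uv|-|P_j|$ is precisely to guarantee that the block $uv$ still contains a full matching copy after $i_2$, so the shift stays inside the periodic region rather than spilling into $w$.

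\textbf{The main obstacle} will be the bookkeeping when the witness position $j_1$ or $j_2$ falls outside the periodic block $uv$ (in the final loop or beyond), since then the clean periodic shift is unavailable and one must instead combine the balance accounting inside $uv$ with an Case-3b-style argument about the final loop; handling the interaction of the two cleanly, and verifying that the index arithmetic ($i_2=i_1+n|P_j|$, the matching of copies, the side condition $i_2<|uv|-|P_j|$) is exactly what makes the shift well-defined, is the delicate part. Everything else reduces to the additivity of $\bal$ over concatenation and the definition of $L$-periodicity.
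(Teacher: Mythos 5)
Your argument is essentially the paper's own proof: both parts rest on additivity of $\bal$ over concatenation, the fact that $\lbrace\varphi,\psi\rbrace$-periodicity makes any length-$|P_j|$ window inside $uv$ contribute exactly $\bal(P_j)$, and shifting the $\psi$-witness by multiples of $|P_j|$; your two subcases in part~(2) (witness at or beyond $i_2$ versus before $i_2$) are exactly the paper's case split on $i_3\ge i_2$ versus $i_3<i_2$. The one place your write-up is looser than it needs to be is the claim in part~(2) that $j_2=j_1+n|P_j|<|uv|$: this does \emph{not} follow from $i_2<|uv|-|P_j|$ alone, but it does follow once you note that your choice of $j_1$ \emph{minimal} forces $j_1<i_1+|P_j|$ whenever $j_1<i_2$ (since otherwise $j_1-|P_j|$ would be a smaller witness, using $\psi$-periodicity and $\bal(P_j)\le0$), giving $j_2<i_2+|P_j|<|uv|$ — the paper makes this normalization explicit as a w.l.o.g.\ reduction of the witness into the first loop-length after $i_1$. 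Also note that in that subcase the balance from $i_2$ to $j_2$ \emph{equals} the balance from $i_1$ to $j_1$ (both endpoints shift), rather than differing by $n\cdot\bal(P_j)$ as in the other subcase; your displayed formula conflates the two, though both yield the required non-negativity.
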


\begin{proof}
\begin{description}
  \item[1.] Assuming $(\sigma,i_2)\models\Phi$ there is a position $i_3 \ge i_2$ such that $(\sigma,i_3)\models\psi$ and $\bal(\sigma(i_2)…\sigma(i_3-1))\ge0$.
    Due to $\varphi$-periodicity we have also
    \begin{align*}
      \bal(\sigma(i_1)…\sigma(i_2-1)…\sigma(i_3)) &= \bal(\sigma(i_1)…\sigma(i_1+n|P_j|-1)…\sigma(i_3-1))) \\
      & = n\cdot\bal(P_j) + \bal(\sigma(i_2)…\sigma(i_3-1))\ge0
    \end{align*}
  \item[2.] Assuming $(\sigma,i_1)\models\Phi$ there is a position $i_3 \ge i_1$ such that $(\sigma,i_3)\models\psi$ and $\bal(\sigma(i_1)…\sigma(i_3-1))\ge0$.
  If $i_3\ge i_2$ we have
  \begin{align*}
    0 & \le \bal(\sigma(i_1)…\sigma(i_1+n|P_j|-1)\sigma(i_2)…\sigma(i_3-1)) \\
      & = n\bal(P_j) + \bal(\sigma(i_2)…\sigma(i_3-1))\\
      & \le \bal(\sigma(i_2)…\sigma(i_3-1))
  \end{align*}
  due to $\varphi$-periodicity and $\bal(P_j)\le0$.

  If $i_3<i_2$ we can assume w.l.o.g.\ that $i_3<i_1+|P_j|$ because otherwise we can also choose $i_3-|P_j|$ as witness instead of $i_3$:
  $\psi\in\labels_\mathcal{P}(\sigma(i_3-|P_j|))$ due to $\psi$-periodicity and since $\bal(P_j)\le0$ we would have
  \begin{align*}
    0 &\le \bal(\sigma(i_1)\sigma(i_1+1)…\sigma(i_3-|P_j|-1)…\sigma(i_3-1))\\
      &= \bal(\sigma(i_1)\sigma(i_1+1)…\sigma(i_3-|P_j|-1)) + \bal(P_j)\\
      &\le \bal(\sigma(i_1)\sigma(i_1+1)…\sigma(i_3-|P_j|-1)).
  \end{align*}
  Repeating this argument eventually provides a witness $i_3<i_1+|P_j|$.

  Then,
  \begin{align*}
    0 & \le \bal(\sigma(i_1)…\sigma(i_3-1)) \\
      & = \bal(\sigma(i_1+n|P_j|)…\sigma(i_3 - 1 + n|P_j|)) \\
      & = \bal(\sigma(i_2)…\sigma(i_3 - 1 + n|P_j|))
  \end{align*}
  because position $i_3+n|P_j| < i_1+(n+1)|P_j|= i_2+|P_j| < |uv|$ on $\sigma$ still carries a location from the periodic part $(P_{k'},…,P_{\jhat})$ of $\mathcal{P}$.
  For the same reason we have $\psi\in\labels_\mathcal{P}(\sigma(i_3+n|P_j|))$ and thus $\Phi$ holds at position $i_2$.
\end{description}
\end{proof}

Based on \cref{lem:periodic} correctness can easily be
established. The definition demands a component $P_{k'}$ where each
location is consistent and as shown earlier we can assume that it is
thus correct not only wrt.\ $\varphi$ and $\psi$ but also wrt.\
$\Phi$. We do then the following case analysis:
\begin{description}
  \item[$k=m$:] Considering the final loop $P_m$ we have a preceding correct component $P_k$ for $k'<m$.
    Periodicity wrt.\ $\Phi$ provides that for some $n\in\mathbb{N}$ the position $i'=i - n|P_k|$ on $\sigma$ carries a location $\sigma(i')\in\loc_\mathcal{P}(k')$ from $P_{k'}$ and $\Phi\in\labels_\mathcal{P}(\sigma(i')) \Leftrightarrow \Phi\in\labels_\mathcal{P}(\sigma(i))$.
    Due to periodicity (and correctness) wrt.\ $\varphi$ and $\psi$, the formula $\Phi$ cannot distinguish any of the positions $i'+n'|P_k|$, i.e., $(\sigma,i')\models\Phi$ iff $(\sigma,i'+n'|P_k|)\models\Phi$ for any $n'\in\mathbb{N}$ since the infinite suffix $\sigma(i')\sigma(i'+1)…$ is equivalent to every suffix $\sigma(i'+n'|P_k|)\sigma(i'+n'|P_k|+1)…$ regarding the positions where $\varphi$ and $\psi$ hold.
    Hence,
    \[
      \Phi\in\labels_\mathcal{P}(\sigma(i)) \quad \Leftrightarrow \quad \Phi\in\labels_\mathcal{P}(\sigma(i')) \quad \Leftrightarrow \quad (\sigma,i')\models\Phi \quad \Leftrightarrow \quad (\sigma,i)\models\Phi.
    \]

  \item[$P_k$ is good or neutral for $\Phi$ and $\Phi\in\labels_\mathcal{P}(\sigma(i))$:]
    $k'>k$ and there is $i'=i + n |P_k|$ for some (unique) $n$ such that $\sigma(i')\in\loc_\mathcal{P}(k')$.
    Due to $\Phi$-periodicity, we have that $\Phi\in\labels_\mathcal{P}(\sigma(i'))$ and thus by correctness of that labelling and \cref{lem:periodic} we have $(\sigma,i)\models\Phi$.
  \item[$P_k$ is good or neutral for $\Phi$ and $\Phi\not\in\labels_\mathcal{P}(\sigma(i))$:]
    $k'<k$ and there is $i'=i - n |P_k|$ for the unique $n$ such that $\sigma(i')\in\loc_\mathcal{P}(k')$.
      Due to $\Phi$-periodicity, we have that $\Phi\not\in\labels_\mathcal{P}(\sigma(i'))$ and thus $(\sigma,i')\not\models\Phi$ which implies by \cref{lem:periodic} that $(\sigma,i)\not\models\Phi$.
  \item[$P_k$ is bad for $\Phi$ and $\Phi\in\labels_\mathcal{P}(\sigma(i))$:]
    $k'<k$ and there is $i'=i - n |P_k|$ for the unique $n$ such that $\sigma(i')\in\loc_\mathcal{P}(k')$.
    We have that also $\Phi\in\labels_\mathcal{P}(\sigma(i'))$.
    Since $i$ is a position in an iteration of $P_k$ at least one iteration of $P_{k+1}$ follows this position on $\sigma$, which still belongs to the periodic sequence.
    Therefore we can apply \cref{lem:periodic} and conclude from $(\sigma,i')\models\Phi$ that $(\sigma,i)\models\Phi$.
  \item[$P_k$ is bad for $\Phi$ and $\Phi\not\in\labels_\mathcal{P}(\sigma(i))$:] $k'>k$ and there is $i'=i + n |P_k|$ for some (unique) $n$ such that $\sigma(i')\in\loc_\mathcal{P}(k')$.
    Again, periodicity and the guaranteed additional iteration of $P_{k'+1}$ after $i'$ on $\sigma$ allows for applying \cref{lem:periodic} and to conclude from $(\sigma,i')\not\models\Phi$ that $(\sigma,i)\not\models\Phi$.

\end{description}
 
\section{Constructing Path Schemas from Satisfying Runs}
\label{apx:completeness}

This section is dedicated to proving \cref{lem:fltl-completeness}.
We will use the notation $\bal$ and $\sigma^{-1}$ for a run $\sigma$ of an APS introduced in \cref{apx:correctness}.
Recall that $\mathcal{K}$ is a flat Kripke structure that admits a run $\rho\in\Runs(\mathcal{K})$ and $\Phi$ is an \fLTL formula.
Assume for this section that $\rho\models\Phi$.
From $\rho$ we construct a non-empty APS $\mathcal{P}_\Phi$ that is consistent wrt.\ $\Phi$ and of which the first location $\mathcal{P}[0]$ is labelled by $\Phi$.
In fact, it admits a run $\sigma$ representing $\rho$, i.e.\ such that $\st_\mathcal{P}(\sigma)=\rho$.
The construction provides an exponential bound on the size of $\mathcal{P}_\rho$ and thereby proves the lemma.

\fltlcompleteness*

Every run $\rho\in\Runs(\mathcal{K})$ can be represented by a small path schema in $\mathcal{K}$, labelled only by propositions.
The labelling is then extended stepwise to include larger and larger subformulae of $\Phi$ until all subformulae and finally $\Phi$ itself are consistently annotated.
Every step needs to ensure that the new annotation is consistent, which may require the modification of the structure, namely unfolding and duplication of loops.
Thus, we also need to argue that after each modification there is still a valid run that represents $\rho$ and the obtained schema grew only linearly in size.

We use induction over the structure of $\Phi$ starting by its base case of $\Phi\in AP$ being atomic.

\subparagraph{Base case.}

Since $\mathcal{K}$ is flat, any subpath $\rho(i)\rho(i+1)…\rho(i')…\rho(i'')$ of $\rho$ where a state $\rho(i)=\rho(i')=\rho(i'')$ occurs more than twice is equal to $(\rho(i)(i+1)…\rho(i'-1))^2\rho(i'')$.
Hence, there are simple subpaths $u_0,…,u_m\in S^+$ of $\rho$ and positive numbers of iterations $n_0,…,n_{m-1}\in\mathbb{N}$ such that
\[
  \rho=u_0^{n_0}u_1^{n_1}…u_{m-1}^{n_{m-1}}u_m^\omega
\]
and $|u_0u_1…u_m|\le2|S|$.
They naturally induce the augmented path schema $\mathcal{P}_\Phi=(P_0,…,P_m)$ where the augmented paths $P_k$ correspond directly to the paths $u_k$.
Formally, for $u_k=s_0…s_n$ we let
\[
  P_k=(s_0,\lambda(s_0),\emptyset,\mathbf{0},t_0)…(s_k,\lambda(s_k),\emptyset,\mathbf{0},t_n)
\]
where, for $0\le i\le n$, the type of each augmented state is $t_i=\mathtt{R}$ if it is iterated $n_i=1$ times and $t_i=\mathtt{L}$ if it is iterated $n_i>1$ times on $\rho$.
By construction $\mathcal{P}_\Phi$ is consistent wrt.\ any proposition from $AP$ and we have a run $\sigma\in\Runs(\mathcal{P})$ such that $\st_\mathcal{P}(\sigma)=\rho$.

The path schema does not use any counter so we can consider the set of counters $C$ to be empty.
During the following constructions we may introduce new counters.
Technically that means we would have to adjust all augmented states, simply because the signature of updates changes.
For convenience, we therefore implicitly extend update functions and assigning zero to counter names if not explicitly stated otherwise.

Now, building on this base case, we show how to construct $\mathcal{P}_\Phi$ assuming by induction that there is an APS $\mathcal{P}$ that contains a run $\sigma\in\Runs(\mathcal{P})$ with $\st_{\mathcal{P}}(\sigma)=\rho$ and is consistent with respect to all strict subformulae of $\Phi$.

\subparagraph{Boolean combinations.} If $\Phi$ is a boolean combination then every augmented state $a=(s,L,G,\mathbf{u},t)$ in $\mathcal{P}$ is easily adjusted to obey \cref{def:consistency}.
For $\Phi=\neg\varphi$ we add $\Phi$ to $L$ if and only if $\varphi\not\in L$.
For $\Phi=\varphi\land\psi$ we add $\Phi$ to $L$ if and only if $\varphi,\psi\in L$.
These changes do not modify the set of runs and so $\sigma$ remains a run in the obtained structure $\mathcal{P}_\Phi$.

\subsection{Temporal Next}

For $\Phi=\X\varphi$ the labelling at some location $\ell$ is extended according to the labelling of its successors.
If $\varphi\in\labels_\mathcal{P}(\ell')$ for all $\ell'\in\succpos_\mathcal{P}(\ell)$ then we modify $\labels_\mathcal{P}(\ell)$ such that it contains $\Phi$ and if $\varphi\not\in\labels_\mathcal{P}(\ell')$ for all successors $\ell'$ then the labelling remains untouched, not including $\Phi$.

The labellings, however, may disagree upon $\varphi$ if $\ell$ is the last location in a loop $P_k$ of $\mathcal{P}$.
In that case the loop $P_k$ needs to be removed or unfolded, in order to make $\mathcal{P}$ consistent.
For augmented states let $\mathtt{row}((s,L,G,\mathbf{u},t)):=(s,L,G,\mathbf{u},\mathtt{R})$ denote the same state but with type $\mathtt{R}$ and for sequences let $\mathtt{row}(a_0…a_n):=\mathtt{row}(a_0)…\mathtt{row}(a_n)$.

Now, if the run $\sigma$ takes $P_k$ only once it can be cut by replacing it with $P'_k=\mathtt{row}(P_k)$.
This eliminates runs that take $P_k$ more than once but $\sigma$ remains.
If otherwise $\sigma$ takes $P_k$ at least twice, the loop can be unfolded by inserting $P'_k$ between $P_k$ and $P_{k+1}$, i.e.\ letting
\[
  \mathcal{P}'=(P_0,…,P_k,P_k',P_{k+1},…,P_m).
\]
The run $\sigma$ representing $\rho$ persists, up to adjusting it according to the new shifted indices due to the insertion of $P'_k$.
Formally, $\sigma$ has the form
\[
  \sigma = v_0^{n_0}…v_{m-1}^{n_{m-1}}v_m^\omega.
\]
where $v_i= \ell_i \, (\ell_i+1) \, … \, (\ell_i+|P_i|-1)$ for $0\le i\le m$ and $\ell_i=|P_0…P_{i-1}|$.
Hence, there is a run $\sigma'\in\Runs(\mathcal{P}')$ with
\[
  \sigma'(i) = \begin{cases}
    \sigma(i) & \text{if $i<|v_0^{n_0}…v_k^{n_k-1}|$}\\
    \sigma(i)+|P_k| & \text{otherwise.}
  \end{cases}
\]
and thus $\st_{\mathcal{P}'}(\sigma')=\st_\mathcal{P}(\sigma)=\rho$.

Importantly, cutting or unfolding any loop, even any number of times, in $\mathcal{P}$ preserves consistency.

\begin{Lemma}
  Let $\mathcal{P}=(P_0,…,P_k,…,P_m)$ be an APS, $P_k$ a loop in $\mathcal{P}$ that is consistent with respect to an \fLTL formula $\xi$ and  $P'_k=\mathtt{row}(P_k)$ an unfolding.
  The component $P'_k$ and all components $P_h$ that are consistent with respect to $\xi$ in $\mathcal{P}$ are also consistent with respect to $\xi$ in all of the following APS:
  \begin{itemize}
    \item $\mathcal{P}_\textsf{cut}=(P_0,…,P_{k-1},P'_k,P_{k+1},…,P_m)$
    \item $\mathcal{P}_\textsf{left}=(P_0,…,P_{k-1},P'_k,P_k,…,P_m)$
    \item $\mathcal{P}_\textsf{right}=(P_0,…,P_k,P'_k,P_{k+1},…,P_m)$
    \item $\mathcal{P}_\textsf{dupl}=(P_0,…,P_k,P_k,P_{k+1},…,P_m)$
\end{itemize}
\end{Lemma}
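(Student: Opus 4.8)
The plan is to prove the statement by induction on $|\xi|$, carrying the slightly stronger invariant that \emph{every} component consistent wrt.\ $\xi$ in $\mathcal{P}$ — and every copy of $P_k$ introduced by the transformation (the row $P'_k$ and, in the $\mathsf{dupl}$ case, the extra loop copy) — is consistent wrt.\ $\xi$ in the resulting APS $\mathcal{P}'\in\lbrace\mathcal{P}_{\textsf{cut}},\mathcal{P}_{\textsf{left}},\mathcal{P}_{\textsf{right}},\mathcal{P}_{\textsf{dupl}}\rbrace$. The structural observation driving the whole argument is that none of the four transformations changes any label, guard, update, or underlying state of $\mathcal{K}$: they only retype the locations of the single loop $P_k$ (in $\mathcal{P}_{\textsf{cut}}$) or splice in a faithful copy of $P_k$ (in the other three). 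Hence there is an order-preserving injection from the locations of $\mathcal{P}$ into those of $\mathcal{P}'$ under which $\st,\labels,\guards,\update$ agree, $\comp$ is preserved up to the index shift caused by an insertion, and the successor set of a retyped last-of-loop location shrinks to a subset of $\succpos_\mathcal{P}(\ell)$ while every other successor set is unchanged up to the shift. Additionally, since all four transformations keep the suffix $P_{k+1},\ldots,P_m$ intact, we have $k<m$, so the final loop $P_n$ and its classification as good/neutral/bad are untouched.

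The base case and the Boolean and next-step clauses (\cref{def:consistency}.\lipicsbf{\ref{itm:consistency-bool}} and~\lipicsbf{\ref{itm:consistency-next}}) are then immediate: the certifying condition for a location refers only to its own labelling, to $\lambda(\st(\ell))$, and to the labellings of its (possibly fewer) successors, all of which are preserved, and a universally quantified condition over $\succpos$ can only become easier to satisfy when successors disappear. For the frequency-until clause~\lipicsbf{\ref{itm:consistency-until}}, where $\xi=\varphi\U^\frac{x}{y}\psi$, I would handle each sub-case in turn. Sub-case~\lipicsbf{\ref{itm:consistency-psi}} is local. Sub-case~\lipicsbf{\ref{itm:consistency-final}} refers only to $P_n$, which is untouched. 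Sub-case~\lipicsbf{\ref{itm:consistency-counter}} applies only to type-$\mathtt{R}$ locations and hence never to a location inside the loop $P_k$; for a type-$\mathtt{R}$ location $\ell$ certified this way, $P_k$ lies entirely before or entirely after $\ell$, so every inserted copy of $P_k$ lies entirely on the same side and replicates exactly the updates (all $0$ before $\ell$, resp.\ $y-x$/$-x$ matching the $\varphi$-labels after $\ell$) and guards, so all the $\forall\ell'$-conditions survive and every $\psi$-witness carrying the required guard maps to, or is duplicated into, such a witness.

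The main obstacle is the indirect clause~\lipicsbf{\ref{itm:consistency-indirect}}, the only one referring to global structure: a component index $k'$, $\lbrace\varphi,\psi,\Psi\rbrace$-periodicity of a contiguous block of components, and consistency wrt.\ $\Psi$ \emph{itself} of the block's anchor component $P_{k'}$. Here I would argue that periodicity depends only on the lengths and labellings of the block's components, so retyping $P_k$ to a row does not affect it, and inserting a copy of $P_k$ merely lengthens the block by one component of identical length and labelling, keeping it periodic; likewise the good/neutral/bad status of $\ell$'s component, whether $\Psi\in\labels(\ell)$, and the direction (towards or away from $P_n$) in which the block runs are all label-determined and thus preserved, with the extra $P_{k+1}$-iteration still available where the definition requires it. It then remains to see that the anchor $P_{k'}$ is consistent wrt.\ $\Psi$ in $\mathcal{P}'$; but $P_{k'}$ is itself one of the components consistent wrt.\ $\xi=\Psi$ in $\mathcal{P}$ and lies strictly ``closer'' to a component certified directly by~\lipicsbf{\ref{itm:consistency-psi}}–\lipicsbf{\ref{itm:consistency-counter}}, so one closes clause~\lipicsbf{\ref{itm:consistency-until}} by a secondary well-founded induction along the order in which components acquire their $\Psi$-consistency certificate. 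Finally, the ``all strict subformulae'' precondition built into \cref{def:consistency} is discharged by the outer formula-induction: the hypothesis that $P_k$ is consistent wrt.\ $\xi$ forces all locations of $\mathcal{P}$ — hence all components of $\mathcal{P}$ — to be consistent wrt.\ every strict subformula $\xi'$ of $\xi$, and since $|\xi'|<|\xi|$ the claim for $\xi'$ gives that all components of $\mathcal{P}'$, including the new copies of $P_k$, are consistent wrt.\ $\xi'$, so the precondition holds throughout $\mathcal{P}'$.
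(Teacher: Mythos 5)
Your proposal is correct and follows essentially the same route as the paper's own (much terser) proof sketch: both reduce everything to the observation that the transformations preserve states, labels, guards and updates, treat the propositional, Boolean and next clauses as immediate, and argue for the until clause that an inserted copy of $P_k$, being identically labelled, either extends or leaves intact every periodic block used in an indirect (case~3d) certificate. You merely make explicit two points the paper glosses over — that the counter-based clause~3c never certifies locations of the loop being unfolded and survives insertion of identically-updated copies, and that the chain of components referenced by clause~3d is well-founded — so no further comparison is needed.
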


\begin{proof}[Sketch of proof]
The cases for propositions and Boolean combinations are straightforward.
Considering formulae $\xi=\X\varphi$, an easy case analysis reveals that all combinations of locations and their successors correspond to a similar combination that occurs in $\mathcal{P}$.
Considering until formulae, a location on $P_k$ in $\mathcal{P}$ can only be consistent because of a consistent component $P_{k'}$ (as detailed in Case {\bf 3d}).
This condition applies equally if $P_k$ is made a row.
If copies (loops or rows) of $P_k$ are inserted, the share the same labelling by $\xi$ and its subformulae and therefore smoothly integrate in any relevant repeating sequence.
For example if $\xi\in\labels_\mathcal{P}(\ell)$ for some location on $P_k$ and $P_k$ is bad for $\xi$, then there is a repeating sequence starting in some consistent component and ending in $P_k$.
A copy of $P_k$ to the right extends this sequence which then provides the reason for the copy to be also consistent and a copy to the left does not break the sequence because it is labelled the same as $P_k$.
The same applies for $\xi\not\in\labels_\mathcal{P}(\ell)$ and similarly if $P_k$ is good or neutral for $\xi$.
\end{proof}

\subsection{Until}

Assume now that $\Phi=\varphi\U^{\frac{x}{y}}\psi$ is an until formula.
In order to construct $\mathcal{P}_\Phi$ given $\mathcal{P}$ and $\sigma$ we iterate through the components of $\mathcal{P}$, beginning at the last and transforming them one by one until the first.
The invariant is that the number of components that are yet to be considered becomes smaller by one in each step (although the overall number of components may increase) and that there is always a run representing $\rho$.
The following lemma formalises one such step.

\begin{Lemma}\label{lem:until-component}
  Let $\mathcal{P}=(P_0,…,P_m)$ be an augmented path schema, $k\in[0,m]$ and $\ell=|P_0…P_{k-1}|$ such that
\begin{itemize}
  \item $\mathcal{P}$ is consistent wrt.\ $\varphi$ and $\psi$,
  \item every location $\ell'\in[\ell+|P_k|,|\mathcal{P}|-1]$ is consistent wrt.\ $\Phi$ and
  \item there is a run $\sigma\in\Runs(\mathcal{P})$ with $\st_\mathcal{P}(\sigma)=\rho$.
\end{itemize}

There is an augmented path schema $\mathcal{P}'=(P_0,…,P_{k-1},P'_k,…,P'_{m'})$ such that
  \begin{itemize}
    \item $\mathcal{P}'$ is consistent wrt.\ $\varphi$ and $\psi$,
    \item every location $\ell'\in[\ell,|\mathcal{P}'|-1]$ is consistent wrt.\ $\Phi$,
    \item there is a run $\sigma'\in\Runs(\mathcal{P}')$ with $\st_{\mathcal{P}'}(\sigma')=\rho$ and
    \item $|\mathcal{P}'|\le|\mathcal{P}| + 17y|\mathcal{K}|^3$.
  \end{itemize}
\end{Lemma}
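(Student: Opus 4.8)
The plan is to treat the single component $P_k$ — a row or a loop — by transforming it, and, where necessary, unfolding a constant number of loops lying to its right (which by hypothesis are already consistent wrt.\ $\Phi=\varphi\U^{\frac{x}{y}}\psi$), so that every location from $\ell$ onwards becomes consistent wrt.\ $\Phi$ while keeping the schema consistent wrt.\ $\varphi$ and $\psi$ and retaining a run that projects to $\rho$. I would use only two structural moves: \emph{cutting} a loop that $\sigma$ traverses exactly once into a row, and \emph{unfolding} a loop by inserting a row-copy $\row(P_j)$ immediately to its left or right (duplication being a special case). By the preceding lemma on $\mathcal{P}_{\textsf{cut}},\mathcal{P}_{\textsf{left}},\mathcal{P}_{\textsf{right}},\mathcal{P}_{\textsf{dupl}}$, these moves never break consistency wrt.\ $\varphi,\psi$ nor wrt.\ $\Phi$ at components already handled, and $\sigma$ survives them up to re-indexing. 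So the task reduces, location by location, to exhibiting one clause of \cref{def:consistency} for $\Phi$ at every location of the replacement of $P_k$.

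\textbf{If $P_k$ is a row} (or a loop that $\sigma$ visits once, which I first cut), every location $\ell'$ of $P_k$ occurs exactly once on $\sigma$, so its $\Phi$-label is forced: put $\Phi\in\labels_\mathcal{P}(\ell')$ iff $(\sigma,i)\models_\mathcal{P}\Phi$ for the unique $i$ with $\sigma(i)=\ell'$. For each $\ell'$ I then pick a clause of \cref{def:consistency}: clause~\lipicsbf{\ref{itm:consistency-psi}} at positions carrying $\psi$; clause~\lipicsbf{\ref{itm:consistency-final}} when the final loop is good for $\Phi$ and contains a $\psi$-location (then $\Phi$ holds everywhere and every location is simply labelled by $\Phi$); clause~\lipicsbf{\ref{itm:consistency-indirect}} whenever an already-treated $\{\varphi,\psi,\Phi\}$-periodic block of components lies on the side required by the definition; and otherwise clause~\lipicsbf{\ref{itm:consistency-counter}}. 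For the latter I would add, per such $\ell'$, a \emph{fresh} counter $c$, update it by $y-x$ at the $\varphi$-labelled and by $-x$ at the other locations from $\ell'$ on (and by $0$ before $\ell'$), and guard the $\psi$-labelled locations to the right of $\ell'$: all by $(c<0)$ if $\Phi\notin\labels_\mathcal{P}(\ell')$, and, if $\Phi\in\labels_\mathcal{P}(\ell')$, by $(c\ge0)$ at one $\psi$-location that $\sigma$ actually reaches with non-negative balance. Because $c$ is fresh, this disturbs none of the balance computations or label-periodicities on which consistency of the right part rests. The one delicate point is that the $\psi$-location to be guarded may sit inside a loop $P_j$ that $\sigma$ traverses several times, where a fixed guard would be too strict on some iterations; but the first iteration (if $P_j$ is bad for $\Phi$) or the last (if $P_j$ is good for $\Phi$) contains a witnessing $\psi$-position whenever any iteration does, so a single unfolding of $P_j$ in that direction isolates the location to be guarded — exactly as for location $5$ in \cref{fig:example-ps-1}.

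\textbf{If $P_k$ is a loop} that $\sigma$ visits $n\ge2$ times, call $P_k$ \emph{stable} if for each of its locations $\ell'$ the formula $\Phi$ holds at all positions in $\sigma^{-1}(\ell')$ or at none. If $P_k$ is stable I would unfold it once to the left and once to the right, make both row-copies consistent wrt.\ $\Phi$ as in the row case — they carry the same labels as $P_k$, so they remain $\{\varphi,\psi,\Phi\}$-periodic with it — and then obtain consistency of $P_k$ itself from clause~\lipicsbf{\ref{itm:consistency-indirect}}, delegating to whichever of the two copies lies on the side prescribed by \cref{def:consistency} (to the left copy when $k=m$). If $P_k$ is not stable and $n<\hat{n}+2$ with $\hat{n}:=|P_k|\cdot y$, I would replace $P_k$ by exactly $n$ consecutive row-copies, each traversed once by $\sigma$, reducing to the row case. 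If $P_k$ is not stable and $n\ge\hat{n}+2$, I would invoke \cref{lem:fltl-decomposition}: it splits $\sigma$ as $uv^{n_1}v^{\hat{n}}v^{n_2}w$, $v$ the location sequence of $P_k$, so that on the first $n_1-1$ and the next-to-last $n_2-2$ iterations the truth of $\Phi$ is invariant under a shift by $|P_k|$; hence $v^{n_1}$ and the first $n_2-1$ iterations of $v^{n_2}$ are $\Phi$-stable, and I would replace $P_k$ by a stable loop for $v^{n_1}$, then $\hat{n}+O(1)$ row-copies for $v^{\hat{n}}$ together with the boundary iterations, then a second stable loop, each piece handled as above.

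\textbf{Size bound and main obstacle.} A loop of the schema projects onto a simple loop of $\mathcal{K}$, hence has length at most $|\mathcal{K}|$; each transformation above replaces $P_k$ by $\hat{n}+O(1)=O(y|\mathcal{K}|)$ copies, each a row or simple loop of length $O(|\mathcal{K}|)$, and adds only a constant number of further one-shot unfoldings of neighbouring loops of the same length, so a (tedious but routine) count of the copies introduced establishes the stated $|\mathcal{P}'|\le|\mathcal{P}|+17y|\mathcal{K}|^3$. I expect the real difficulty to lie not in any single move but in making them compose: one must verify that the fresh counters and guards added for $P_k$ leave clauses~\lipicsbf{\ref{itm:consistency-counter}} and~\lipicsbf{\ref{itm:consistency-indirect}} intact on the already-treated right part, that each unfolding preserves the periodic blocks underlying earlier consistency arguments, and — the crux — that the decomposition of \cref{lem:fltl-decomposition} genuinely carves an unstable $P_k$ into a bounded number of pieces on each of which $\Phi$ behaves stably, so that only $O(y|\mathcal{K}|)$ exceptional iterations need to be unfolded explicitly.
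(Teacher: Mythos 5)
Your proposal follows essentially the same route as the paper's proof: the same reduction to single-component transformations via the cut/unfold moves that preserve consistency, the same fresh-counter construction realising the counter clause on rows with a one-sided unfolding of the witness loop (first iteration if bad, last if good), the same stable/unstable dichotomy for non-final loops, and the same use of \cref{lem:fltl-decomposition} to carve an unstable loop into two stable loops separated by $\hat{n}=y|P_k|$ rows, with the same order of size accounting.

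One concrete detail needs repair. For a stable loop you insert only one row-copy on each side of $P_k$ and then delegate via the periodicity clause of \cref{def:consistency}; but that clause requires the $\{\varphi,\psi,\Phi\}$-periodic block to extend one component \emph{past} its endpoint (it demands $(P_{k'},\dots,P_{k+1})$, respectively $(P_k,\dots,P_{k'+1})$, to be periodic). So when the delegation must go rightward — e.g.\ $P_k$ good and $\Phi$ in the label — the block $(P_k,R_{\text{right}},P_{k+1}^{\text{orig}})$ would have to include the original successor component, which in general does not share $P_k$'s labelling. The paper avoids this by inserting two copies on each side ($R_1R_2\,P_k\,R_3R_4$), so that whichever direction the definition prescribes, the overshooting component is again a copy. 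This only changes the constant in the size bound, but as written your two-copy arrangement does not satisfy the definition in the rightward case.
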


For $0\le k\le m$ let $\ell_k=|P_0…P_{k-1}|$ be the first location in $\mathcal{P}$ corresponding to component $P_k$.

\begin{proof}
We proceed by a case analysis.

\subparagraph{Final loop.}

Assume that $k=m$, thus $P_k=P_m$ is the final loop in $\mathcal{P}$.
If Case {\bf 3b} of \cref{def:consistency} applies, all $P_m$ is to be entirely labelled by $\Phi$.
Otherwise, we consider the first iteration of $P_m$, starting at position $i_m:=\min \sigma^{-1}(\ell_m)$  (where $\ell_m$ is the first location of $P_m$) and have $P_m(j)$ labelled by $\Phi$ if and only if $(\sigma,i_m+j)\models\Phi$ for $0\le j<|P_m|$.\footnote{Notice that for proving the statement it is not necessary to be constructive. It suffices to observe that such a labelling exists.}
Hence, those states labelled by $\psi$ are labelled by $\Phi$ which is consistent.
If there are others states that we label by $\Phi$ and that are not labelled by $\psi$, we unfold $P_m$ twice and hence let $P'=(P_0,…,P_{m-1},P'_m,P'_{m+1},P'_{m+2})$ for  $P'_m = P'_{m+1} = \mathtt{row}(P_m)$ and $P'_{m+2} = P_m$.

The locations $\ell_m,…,\ell_m+|P_m|-1$, now associated with $P'_m$, can be made consistent (case {\bf 3c}).
For every location $\ell\in[\ell_m,\ell_m+|P_m|+1]$ we introduce a fresh counter $c$ that is updated on the locations succeeding $\ell$ as required by the definition.
If $\Phi\not\in\labels_{\mathcal{P}'}(\ell)$ we only need to add the guard $c<0$ to the states at those locations $\ell<\ell'<|\mathcal{P}'|$ that are labelled by $\psi$.
If $\Phi\in\labels_{\mathcal{P}'}(\ell)$ then because $\Phi$ holds at its first occurrence $j$ on $\sigma$.
In that case, if $\psi\not\in\labels_{\mathcal{P}'}(\ell)$, there must be a position $j'>j$ on $\sigma$ where $\psi$ holds and that is reached with positive balance.
Second, $P_m$ must be bad for $\Phi$, because otherwise the case above applied already, and thus we can assume without loss of generality that $j'<j+|P_m|$ and hence location $\ell'=\sigma(j')$ carries a state from $P'_m$ or $P''_m$.
They are both rows and therefore $\ell'$ can serve as witness location to be guarded by $(c\ge0)$.

The locations $\ell\in[\ell_m+|P_m|,\ell_m+2|P_m|-1]$ of $P''_m$ where $\Phi\not\in\labels_{\mathcal{P}'}(\ell)$ can also be made  consistent by adding a fresh counter that is updated an guarded as required.
Again, the case that $\Phi\in\labels_{\mathcal{P}'}(\ell)$ only occurs if $P_m$ is bad.
In that case $\ell$ is consistent already because $P''_m$ is part of the $\lbrace\varphi,\psi,\Phi\rbrace$-repeating sequence $(P'_m,P''_m,P_m)$ where $P'_m$ is  consistent (case {\bf 3d}).
The final loop $P_m$ is consistent for the same reason.

The size of the final loop is bounded by $|P_m|\le|\mathcal{K}|$ and at most two new copies of it are added to obtain $\mathcal{P}'$.

\subparagraph{Rows.}

Assume $k\in[0,m-1]$ and $P_k=a_0…a_n$ is a row in $\mathcal{P}$ starting at location $\ell_k=|P_0…P_{k-1}|$.
Let $h\in\mathbb{N}$ be the position on $\sigma$ with $\sigma(h)=\ell_k$.

We first adjust the labelling of $P_k$ such that $\Phi\in\labels(a_i)$ if and only if $(\sigma,h+i)\models\Phi$ for $i\in[0,n]$.
Now, with every location $\ell_i\in[\ell_k,\ell_k+n]$ that is not already consistent with respect to $\Phi$ (because of case {\bf 3.a} in \cref{def:consistency}) we proceed as follows.
A fresh counter $c$ is introduced and updated at all locations $\ell\in[\ell_i+1,|\mathcal{P}'|-1]$ to count the balance as required in the definition.
If $\Phi\not\in\labels(a_i)$ then the additional guard $c<0$ is added to the augmented states at those locations $\ell\in[\ell_i+1,|\mathcal{P}'|-1]$ that are labelled by $\Psi$.
This makes $\ell_i$ consistent and moreover, since $\Phi$ does not hold at position $h+i$ on $\sigma$, these constraint are not violated by the run.

If $\Phi\in\labels(a_i)$ (although $\psi\not\in\labels(a_i)$) there is a position $h'>h+i$ such that
$(\sigma,h')\models\psi$ and thus $\psi\in\labels_\mathcal{P}(\ell')$ for $\ell'=\sigma(h')$.
If $\ell'$ is on a row we add the constraint $c\ge0$ to the augmented state at $\ell'$.
If $\ell'$ is on a loop $P_{k'}$ of $\mathcal{P}$ but $\sigma$ takes it only once we can replace $P_{k'}$ by $\row(P_{k'})$ in $\mathcal{P}$ and then add the constraint.
In case $\sigma$ takes $P_{k'}$ at least twice either the last or first iteration of $P_{k'}$ can serve as a witness:
If $P_{k'}$ is good for $\Phi$ more iterations of it between $h$ and $h'$ can only improve the balance so that the ratio between $h$ and the last position $h''=\max \sigma^{-1}(\ell')$ where $\ell'$ is sufficient for $\Phi$.
Hence, we unfold $P_{k'}$ by adding a copy $\row(P_{k'})$ right after $P_k$ in $\mathcal{P}$.
Notice that we can assume that $k'<m$ because if $P_{k'}$ were the final loop and good for $\Phi$ the case above had already applied and we would not need to unfold the loop. Similarly, if $P_{k'}$ is bad (or neutral) for $\Phi$ then we let $h''=\min \sigma^{-1}(\ell')$ be the first position where $\ell'$ occurs.
Since $\bal(\sigma(h)…\sigma(h'))\le\bal(\sigma(h)…\sigma(h''))$ in this case $h''$ also can serve as witness and we unfold the loop by inserting $P'_{k'}$ immediately before $P_{k'}$ in $\mathcal{P}$.

As argued earlier, these transformations do not make any consistent location inconsistent with respect to any formula and there is still a run $\sigma'$ representing $\st_\mathcal{P}(\sigma)=\rho$.
However, the location $\ell''$ (at position $h''$ on $\sigma'$) is not part of a loop and can safely be guarded by $(c\ge0)$ while preserving the run.

During this procedure we introduce at most one unfolding of some loop for each position on $P_k$ and the size of $\mathcal{P}$ increases thus by at most $|\mathcal{K}|^2$ because $|\mathcal{K}|$ bounds the length of each loop.

\subparagraph{Non-final Loops.}

It remains to consider the case that $P_k$ is a non-final loop.
The run $\sigma$ has the form $\sigma=uv^nw$ where $v=\ell_k \, (\ell_k+1)\, … \, (\ell_k+|P_k|-1|)$ is the sequence of locations corresponding to $P_k$ in $\mathcal{P}$ and $n\in\mathbb{N}$ is maximal, that is $u$ and $w$ do not intersect with $v$.
We assume in the following that $n$ is not small as otherwise we may simply replace $P_k$ by $n$ copies of $\row(P_k)$ and proceed as above.
More precisely, let $\hat{n}=y\cdot|P_k|$ and assume that $n\ge\hat{n}+2$.
This constant $\hat{n}$ essentially bounds the effect of frequency variations within a single loop iteration.
Its specific choice will become apparent in the later construction.
For now it suffices to observe that if $n\le\hat{n}+1=y|P_k|+1\le y|\mathcal{K}|+1$ and we replace $P_k$ by $n$ unfoldings the size of $\mathcal{P}$ increases by $(n-1)\cdot|P_k|$.
Applying the procedure for rows above to each component may force us to unfold other loops.
As a (rough) estimate, we will have to introduce no more than one further unfolding of some loop for each new location originating from the unfoldings of $P_k$.
Hence, after making all $n$ copies of $P_k$ consistent the size of $\mathcal{P}$ did not grow by more than
\[
  (n-1)\cdot|P_k| + n\cdot|P_k|\cdot|\mathcal{K}| \le (y|\mathcal{K}|+1-1)\cdot|\mathcal{K}| + (y|\mathcal{K}|+1)\cdot|\mathcal{K}| \cdot|\mathcal{K}| \le 3y|\mathcal{K}|^3.
\]

Given that $n\ge\hat{n}+2$ we distinguish two situations of $\sigma$ determining a labelling for $P_k$.
Either, for all position $|u|\le i<|uv^{n-1}|$ we have $(\sigma,i)\models\Phi \Leftrightarrow (\sigma,i+|v|)\models\Phi$, meaning that the labelling of the augmented state $\mathcal{P}(\ell)$ at location $\ell$ on $v$ is unambiguously determined by $\sigma$ (we say that the loop is {\bf stable}), or there is a location on $v$ such that at some of its occurrence on $\sigma$ the formula $\Phi$ holds while at another it does not (in that case the loop is {\bf unstable}).
We consider first the former case and how it can be made consistent.
Afterwards we show that in the latter case it is possible to modify $\mathcal{P}$ such that the former case applies.

\subparagraph{Stable loops.}
If the pattern of positions where $\Phi$ holds is stable along the iterations of $P_k$ on $\sigma$ we apply it to the labelling of $P_k$.
That is, we adjust $P_k$ such that $\Phi\in\labels(P_k(i))$ if and only $(\sigma,|u|+i)\models\Phi$.
Likely, at least some of the locations $\ell\in[\ell_k,\ell_k+|P_k|-1]$ are still not consistent with respect to $\Phi$.
If $n\le4$ we replace $P_k$ in $\mathcal{P}$ by $n$ unfoldings $P'_k=\row(P_k)$ that can be made consistent as above.
Otherwise, let $R_1=R_2=R_3=R_4=P'_k$ and insert $(R_1,R_2)$ before and $R_3,R_4$ after $P_k$ in $\mathcal{P}$.
The two last unfoldings $R_3$ and $R_4$ can be made consistent as above.
For $R_1$ we proceed the same way except that if $P_k$ is to be unfolded again (for instance to find a location labelled with $\psi$) $R_2$ or $R_3$ are considered instead.
Now, $R_2$ and $P_k$ are also  consistent because the surrounding components $R_1,P_k,R_3,R_4$ cover every possible case.
Overall no more than $4$ additional copies of $P'_k$ are added and for the locations of at most three of them other loops needed to be unfolded giving a total of no more than
\[
 4\cdot|P_k| + 3\cdot|P_k|\cdot|\mathcal{K}|\le7|\mathcal{K}|^2
\]
new locations being added to $\mathcal{P}$.

\subparagraph{Unstable loops.}

In general, $\sigma$ does not uniquely determine whether the state at some location $\ell\in[\ell_k,\ell_k+|P_k|-1]$ in $\mathcal{P}$ is supposed to be labelled by $\Phi$ because that may vary between corresponding position on $\sigma$, that is, the iterations of $P_k$.
However, we observe that along any run the validity of $\Phi$ at some specific location can change at most once.
We have argued earlier that as soon as $\Phi$ holds somewhere, more iterations of a good loop inserted between the position in question and a witness position does not affect validity.
Similarly, introducing additional iterations of a bad loop do not change the fact that $\Phi$ does not hold at some specific position.

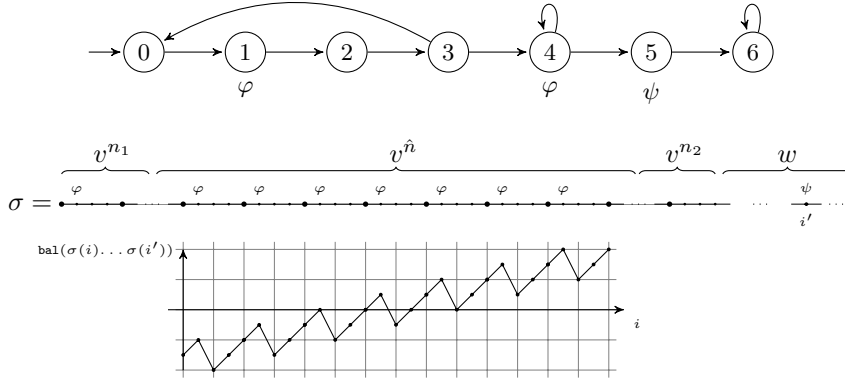
\begin{figure}
  
\begin{tikzpicture}

\node (ps) {
\begin{tikzpicture}[small nodes]
    \node[state, initial] (q0) {$0$};
    \node[state, right of = q0, label=below:$\varphi$] (q1) {$1$};
    \node[state, right of = q1] (q2) {$2$};
    \node[state, right of = q2] (q3) {$3$};
    \node[state, right of = q3, label=below:$\varphi$] (q4) {$4$};
    \node[state, right of = q4, label=below:$\psi$] (q5) {$5$};
    \node[state, right of = q5] (q6) {$6$};

    \path[->] (q0) edge (q1)
              (q1) edge (q2)
              (q2) edge (q3)
              (q3) edge (q4)
              (q4) edge (q5)
              (q5) edge (q6)

              (q3) edge[bend right] (q0)
              (q4) edge[loop above left] (q4)
              (q6) edge[loop above left] (q6);

\end{tikzpicture}
};

\path (ps.south) node[anchor = north] {
\begin{tikzpicture}[x = 2mm, y=2mm]
  \path[draw, very thin, gray] (-0.5,-4.5) grid[step=2] (28.5,4.5);
  \draw[->] (-1,0) -- (29,0) node[anchor = north west] {\tiny$i$};
  \draw[->] (0,-4) -- (0,4) node [anchor = east] {\tiny$\bal(\sigma(i)…\sigma(i'))$};

    \draw[very thin] (-8,7) node[anchor = east] {$\sigma = $}
    -- (-3,7);
  \draw[dotted] (-2.5,7) -- (-1.5,7);
  \draw[very thin] (-1,7) -- (29,7);
  \draw[dotted] (29.5 ,7) -- (30.5,7);
  \draw[very thin, fill] (31,7) -- (36,7);
      \draw[dotted] (37.5,7) -- (38.5,7);

  \draw[very thin, fill] (40,7) -- (41,7) circle (0.5pt) -- (42,7);
  \path (41,7) node [anchor = north] {\tiny$i'$};
  \path (41,7) node [anchor = south] {\tiny$\psi$};
  \draw[dotted] (42.5,7) -- (43.5,7);

    \path[fill] (-8,7)
    \foreach \x in {-8,-4, 0, 4, ..., 32}{
      -- (\x,7) circle (1pt)
    };

    \foreach \x in {-7, 1, 5, ..., 28}{
        \node[anchor = south] at (\x,7) {\tiny$\varphi$};
  };

    \path[fill] (-5,7)
      \foreach \x in {-7, -6, -5, 1, 2, ..., 28, 33, 34, 35}{
        -- (\x,7) circle (0.5pt)
      };

  \draw decorate[decoration=brace] {(-8,9) -- node[anchor=south, xshift=1mm, yshift=1] {$v^{n_1}$} (-2.25,9)};
  \draw decorate[decoration=brace] {(-1.75,9) -- node[anchor=south, xshift=1mm, yshift=1] {$v^{\hat{n}}$} (29.75,9)};
  \draw decorate[decoration=brace] {(30,9) -- node[anchor=south, xshift=1mm, yshift=1] {$v^{n_2}$} (35,9)};
  \begin{scope}
    \path [clip] (35.5,5) rectangle (43.5,11);
    \draw decorate[decoration=brace] {(35.5,9) -- node[anchor=south, yshift=1] {$w$} (44,9)};
  \end{scope}

\tikzmath{
for \n in {0,...,6}{
  \x = \n*4;
  {
  \draw[fill] (0+\x,0+\n-3) circle (0.5pt) --
              (1+\x,1+\n-3) circle (0.5pt) --
              (2+\x,-1+\n-3) circle (0.5pt) --
              (3+\x,0+\n-3) circle (0.5pt) --
              (4+\x,1+\n-3) circle (0.5pt);
  };
};
}

\end{tikzpicture}
};

\end{tikzpicture}
   \caption{Evolution of the balance representing the frequency constraint of $\Phi=\varphi\U^{\frac{1}{3}}\psi$ between positions $i$ and a fixed target position $i'$ along run $\sigma=v^nw$.
    The $n$ iterations of the loop $v=0 \, 1 \, 2 \, 3$ are separated into three parts as $n=n_1+\hat{n}+n_2$.
    The formula $\Phi$ holds during the iteration of $v$ whenever the balance is non-negative. Notice that there is a first position where the balance becomes non-negative and a last position where the balance is still negative.
    The validity of $\Phi$ only swaps in between.
    \label{fig:balance-graph}}
\end{figure}

It follows, for example, that if $\Phi$ does hold in the last iteration of a bad loop but not in the first, there is a unique iteration for each location on the loop where validity swaps.
The diagram presented in \cref{fig:balance-graph} shows an example of how the balance between a position $i$ on the loop and a witness position $i'$ may evolve on $\sigma$.
Observe that there are three parts of the run iterating through the first loop.
In part one $\Phi$ holds nowhere because the balance (and hence the ratio) on the path to the (only) witness is insufficient.
It covers too many iterations of the bad loop.
In the last part, $\Phi$ holds everywhere because the ratio is sufficient.
In between it depends on local differences whether the ratio condition is satisfied or not.
The first and last part can be uniformly labelled and thus represented each by a copy of the original loop.
On the other hand, the intermediate part is short: its length depends only on the length of the loop and the ratio, more precisely, on the size of the denominator (3 in the example) as measure of how sensitive the property is to changes in the frequency on an arbitrarily long path.

\cref{lem:fltl-decomposition} formalises this observation.

\fltldecomposition*

\begin{proof}
Assume that $P$ is good for $\Phi$ and thus $\bal(P)>0$.
Consider the first (smallest) position $i\ge|u|$ on $\sigma=uv^nw$ where $(\sigma,i)\not\models\Phi$.
If $i$ does not exist or $i\ge|uv^{n-\hat{n}}|$ we can choose $n_1=n-\hat{n}-1$ and $n_2=n-\hat{n}-n_1 = 1$.

Otherwise let $n_1$ be the last iteration of $P$ entirely satisfying $\Phi$, that is such that $|uv^{n_1}| \le i < uv^{n_1+1}$, and $h=|uv^n_1|$.
Consequently we let $n_2=n-\hat{n}-n_1$.
Consider now any position $i'\in[h,h+|P|-1]$ in the $(n_1+1)$-th iteration where $\Phi$ still holds.
If there is none, then $\Phi$ does not hold in later iterations either and the statement of the lemma holds.

Since $(\sigma,i')\models\Phi$ there is some position $j>i'$ with $(\sigma,j)\models\psi$ and $\bal(\sigma(i')\sigma(i'+1)…\sigma(j-1))\ge0$
Observe that we can assume that $j\ge|uv^{n-1}|$ because otherwise $j+|P|$ would serve as witness since in that case
\begin{align*}
  \bal(\sigma(i')\sigma(i'+1)…\sigma(j-1+|P|))
    &= \bal(\sigma(i')\sigma(i'+1)…\sigma(j-1)) + \bal(P) \\
    &\ge \bal(\sigma(i')\sigma(i'+1)…\sigma(j-1))  \\
    &\ge0
\end{align*}
while $\sigma(j)=\sigma(j+|P|)$ and thus $\psi\in\labels_\mathcal{P}(\sigma(j+|P|))$.

However, the balance cannot be too large, more precisely, $\bal(\sigma(i')\sigma(i'+1)…\sigma(j-1)) \le |P|\cdot y$.
Depending on whether $i'<i$ or $i<i'$ we have
\[
  \bal(\sigma(i')…\sigma(j-1)) =
  \begin{cases}
    \bal(\sigma(i)…\sigma(j-1)) - \bal(\sigma(i)…\sigma(i'-1)) & \text{if $i<i'$}\\
    \bal(\sigma(i)…\sigma(j-1)) + \bal(\sigma(i')…\sigma(i-1)) & \text{if $i'<i$}\\
  \end{cases}
\]
Considering the first case, we can bound the difference by the maximal gain
\[
\bal(\sigma(i)…\sigma(i'-1))\le(|P|-1)\cdot(y-x)\le y|P|
\]
on a path of length at most $|P|-1$.
In the second case, the lower bound on the balance
\[
\bal(\sigma(i')…\sigma(i-1))\ge|P-1|\cdot(-x) \ge -y|P|
\]
is of interest because we conclude that in any case
\[
  \bal(\sigma(i')…\sigma(j-1)) \le \bal(\sigma(i)…\sigma(j-1)) + y|P| < y|P|
\]
Since $\bal(P)\ge1$ we have that
\begin{align*}
  \bal(\sigma(i'+\hat{n}|P|)…\sigma(j-1))
    & = \bal(\sigma(i')…\sigma(j-1)) - \hat{n}\cdot\bal(P) \\
    & < y|P| - y|P|\cdot\bal(P) \\
    & \le 0
\end{align*}
meaning that after at most $\hat{n}$ further iteration $\Phi$ can not hold any more.

Assuming now that $P$ is bad for $\Phi$ allows for similar reasoning.
Consider $i\ge|uv|$ to be the first position on $\sigma$ where $(\sigma,i)\models\Phi$ while $(\sigma,i-|P|)$
If $i$ does not exist or $i\ge|uv^{n-\hat{n}}|$ we can again choose $n_1=n-\hat{n}-1$ and $n_2=n-\hat{n}-n_1 = 1$.
Otherwise we choose $n_1$ such that $|uv^{n_1}|\le i<|uv^{n_1+1}|$ and let $h=|uv^{n_1}|$.

There is a position $j>i$ such that $\psi\in\labels_\mathcal{P}(\sigma(j))$ and
 $\bal(\sigma(i)…\sigma(j-1))\ge0$.
 Observe that $j\ge|uv^n|$ because otherwise $\bal(\sigma(i-|P|)…\sigma(j-1-|P|))\ge0$ and $\psi\in\labels_\mathcal{P}(\sigma(i-|P|))$ contradicting that $(\sigma,i-|P|)\not\models\Phi$.

Consider now any position $i'\in[h,h+|P|-1]$ where $(\sigma,i')\not\models\Phi$, if any.
We have
\[
  \bal(\sigma(i')…\sigma(j-1)) =
  \begin{cases}
    \bal(\sigma(i)…\sigma(j-1)) - \bal(\sigma(i)…\sigma(i'-1)) & \text{if $i<i'$}\\
    \bal(\sigma(i)…\sigma(j-1)) + \bal(\sigma(i')…\sigma(i-1)) & \text{if $i'<i$}
  \end{cases}
\]
and obtain the bounds
\[\begin{array}{rcccr@{\hspace{2cm}}c}
  \bal(\sigma(i)…\sigma(i'-1)) &\le & |P-1| \cdot(y-x) &\le &  y|P| & \text{(if $i<i'$)}\\
  \bal(\sigma(i')…\sigma(i-1)) &\ge & |P-1| \cdot(-x)  &\ge & -y|P| & \text{(if $i'<i$).}
\end{array}\]
Hence
\[
  \bal(\sigma(i')…\sigma(j-1)) \ge \bal(\sigma(i)…\sigma(j-1)) - y|P| \ge -y|P|
\]
Now, since $\bal(P)<0$ we have that
\begin{align*}
  \bal(\sigma(i'+\hat{n}|P|)…\sigma(j-1))
    & = \bal(\sigma(i')…\sigma(j-1)) - \hat{n}\cdot\bal(P) \\
    & \ge -y|P| - (y|P|\cdot\bal(P)) \\
    & \ge 0
\end{align*}
providing that after $\hat{n}=y|P|$ more iterations, $\Phi$ holds at every position on the loop.

If $P$ is neutral for $\Phi$ then an iteration of $P$ more or less does not change if there is a witness or not and $(\sigma,i)\models\Phi$ if and only if $(\sigma,i+|P|)\models\Phi$ for all $|u|\le i<|uv^{n-1}|$.
\end{proof}

\cref{lem:fltl-decomposition} provides a bound on how often we need to unfold $P_k$ at most in order to guarantee that $\sigma$ determines a unique labelling.
Recall we assumed that $\sigma$ repeats $P_k$ for $n\ge\hat{n}+2$ times.
In $\mathcal{P}$, we may hence replace $P_k$ by $(P_k,P'_k,…,P'_k,P_k)$ introducing two copies and a sequence of exactly $\hat{n}$ unfoldings of it.
The decomposition $\sigma=uv^{n_1}v^{\hat{n}}v^{n_2}w$ given by \cref{lem:fltl-decomposition}
provides a corresponding run $\sigma'$ of the obtained path schema and a unique labelling for all of the new components.
Now, we are only left with cases discussed earlier: two stable loops and $\hat{n}$ rows.
For each of the stable loops, we can estimate that establishing consistency requires no more than $7|\mathcal{K}|^2$ additional locations.
For each of the $\hat{n}$ new rows it no more than $|\mathcal{K}|^2$ additional locations.
We can conclude that $\mathcal{P}'$ can be constructed with in total no more than
\[
  |P_k| + \hat{n}|P_k| + 2\cdot7|\mathcal{K}|^2 + \hat{n}\cdot|\mathcal{K}|^2 \le 17y|\mathcal{K}|^3
\]
additional locations.
\end{proof}
\subsection{The Size of $\mathcal{P}_{\Phi}$}

The induction provides the construction of $\mathcal{P}_\Phi$ from $\mathcal{K}$ requiring (at most) one step for each subformula of $\Phi$.
Let $\mathcal{P}_0$ be the APS provided by the base case that covers all propositions occurring in $\Phi$.
As argued earlier, its size is bounded by $2|\mathcal{K}|$ and the length of every loop is bounded by $|\mathcal{K}|$.
Applying the induction step now recursively for $\Phi$, i.e., augmenting $\mathcal{P}_0$ consistently with more and more subformulae of $\Phi$ we obtain a sequence of possibly growing path schemas until $\mathcal{P}_\Phi$ is obtained after at most $|\sub(\Phi)|\le|\Phi|$ steps.

We have seen that in the case of a next formula, constructing the consistent schema $\mathcal{P}'$ from $\mathcal{P}$ requires at most one unfolding of some loop for each location in $\mathcal{P}$ and thus $|\mathcal{P}'|\le|\mathcal{P}|\cdot|\mathcal{K}|$.
In the case of an until formula \cref{lem:until-component} provides that for each component of $\mathcal{P}$ no more than $17y|\mathcal{K}|^3$ locations are added and thus $|\mathcal{P}'|\le|\mathcal{P}|\cdot17y|\mathcal{K}|^3$.
Counting the bits for representing $y$ to the length of $\Phi$ and hence estimating $y\le2^\Phi$ it follows that after $|\Phi|$ steps, the resulting path schema is of size
\[
  |\mathcal{P}_\Phi| \le |\mathcal{P}_0| \cdot(17\cdot 2^{|\Phi|}|\mathcal{K}|^3)^{|\Phi|} \in \mathcal{O}(2^{f(|\Phi|+|\mathcal{K}|)})
\]
for some polynomial $f$ and thus at most exponential in the size of the input.

By construction $\mathcal{P}_\Phi$ is correct and there is a run $\sigma\in\Runs(\mathcal{P}_\Phi)$ with $\st_{\mathcal{P}_\Phi}(\sigma)=\rho \models\Phi$ and hence $\labels(\mathcal{P}(0))=\labels_\mathcal{P}(\sigma(0))\ni\Phi$.
This completes the proof for \cref{lem:fltl-completeness}.

\end{document}